\documentclass{article}

\usepackage{enumitem}

\usepackage[utf8]{inputenc}
\usepackage{amsmath}
\usepackage{cancel}
\usepackage{amssymb}
\usepackage{amssymb}
\usepackage{authblk}
\usepackage{amsfonts}
\usepackage{mathrsfs}
\usepackage{diagbox}

\usepackage{graphics}
\usepackage{amsthm}
\usepackage[all]{xy}
\usepackage[margin=1.4in,marginparwidth=1in,marginparsep=.05in]{geometry}
\PassOptionsToPackage{hyphens}{url}
\usepackage{tikz-cd}
\usepackage{titlesec}
\usepackage{xcolor,graphicx}
\usepackage[abs]{overpic}
\usepackage{verbatim}
\usepackage{extarrows}
\usepackage{tikz}
\usetikzlibrary{shapes,arrows,positioning}
\usepackage{mathtools}

\usepackage{subfig}
\usepackage{hyperref}
\usetikzlibrary{arrows}

\makeatletter
\newcommand*{\rom}[1]{\expandafter\@slowromancap\romannumeral #1@}

\titlespacing*{\section}{0pt}{\baselineskip}{\baselineskip}
\titleformat{\subsection}{\normalfont\bfseries}{\thesubsection.}{1em}{}
\titleformat{\subsubsection}{\normalfont}{\thesubsubsection.}{1em}{\itshape}

\theoremstyle{plain}
\newtheorem{theorem}{Theorem}[section]
\newtheorem{proposition}[theorem]{Proposition}
\newtheorem{lemma}[theorem]{Lemma}
\newtheorem{corollary}[theorem]{Corollary}

\theoremstyle{definition}

\newtheorem{assumption}{Assumption}
\newtheorem{definition}[theorem]{Definition}

\theoremstyle{remark}
\newtheorem{remark}[theorem]{Remark}

\numberwithin{equation}{section}

\let\pa\partial

\newcommand{\R}{\mathbb R}
\newcommand{\bA}{\mathbf A}
\newcommand{\bB}{\mathbf B}
\newcommand{\bC}{\mathbf C}
\newcommand{\bD}{\mathbf D}

\newcommand{\bF}{\mathbf F}

\newcommand{\bH}{\mathbf H}
\newcommand{\bI}{\mathbf I}

\newcommand{\bP}{\mathbf P}

\newcommand{\bN}{\mathbf N}
\newcommand{\bQ}{\mathbf Q}
\newcommand{\bR}{\mathbf R}
\newcommand{\bS}{\mathbf S}

\newcommand{\ba}{\mathbf a}
\newcommand{\bb}{\mathbf b}

\newcommand{\bm}{\mathbf m}
\newcommand{\bn}{\mathbf n}
\newcommand{\be}{\mathbf e}

\newcommand{\bt}{\mathbf t}
\newcommand{\bq}{\mathbf q}

\newcommand{\bT}{\mathbf T}
\newcommand{\bu}{\mathbf u}

\newcommand{\bv}{\mathbf v}
\newcommand{\bw}{\mathbf w}
\newcommand{\bW}{\mathbf W}
\newcommand{\bx}{\mathbf x}

\newcommand{\bzero}{\mathbf 0}
\newcommand{\bbf}{\mathbf f}

\newcommand{\Div}{\mathop{\rm div}}
\newcommand{\divG}{{\mathop{\,\rm div}}_{\Gamma}}
\newcommand{\gradG}{\nabla_{\Gamma}}
\newcommand{\nablaG}{\nabla_{\Gamma}}

\newcommand{\divM}{{\mathop{\,\rm div}}_{M}}
\newcommand{\gradM}{\nabla_{M}}
\newcommand{\nablaM}{\nabla_{M}}



 \renewcommand{\div}{{\mathop{\rm div}}\,}

\newcommand{\Rn}{\mathbb{R}^n}

\newcommand{\tr}{{\rm tr}}

\DeclareRobustCommand{\tripledot}{%
  \mathrel{\vbox{\baselineskip.65ex\lineskiplimit0pt\hbox{.}\hbox{.}\hbox{.}}}%
}

\DeclareGraphicsExtensions{.pdf,.eps,.ps,.eps.gz,.ps.gz,.eps.Y}

\newcommand{\bSigma}{\boldsymbol{\Sigma}}
\newcommand{\bOmega}{\boldsymbol{\Omega}}
\newcommand{\bLambda}{\boldsymbol{\Lambda}}

\usepackage[normalem]{ulem} 
\newcommand{\LB}[1]{{\color{black}#1}}

\newcommand{\VY}[1]{{\color{black}#1}}

\newcommand{\RHN}[1]{{\color{black}#1}}


\title{A hydrodynamical model of nematic liquid crystal films with a general state of orientational order}
\author{Lucas Bouck\thanks{Department of Mathematics,
University of Maryland,  College
Park - 20742, MD, USA. Email: lbouck@umd.edu.},\, Ricardo H. Nochetto\thanks{Department of Mathematics and Institute for Physical Science and Technology, University of Maryland, College
Park - 20742, MD, USA. Email: rhn@umd.edu.},\, Vladimir Yushutin\thanks{Department of Mathematics,
University of Maryland,  College
Park - 20742, MD, USA. Email: yushutin@umd.edu.}}

\begin{document}

\maketitle
\begin{abstract}
    We develop a Q-tensor model of nematic liquid crystals occupying a \RHN{stationary} surface which represents a fluidic material film in space. In addition to the evolution due to Landau--de\,Gennes energy the model includes a tangent viscous \RHN{incompressible} flow along the surface. \RHN{A thermodynamically consistent} coupling of a two-dimensional flow and a three-dimensional Q-tensor dynamics is derived from the generalized Onsager principle following the Beris--Edwards system known in the flat case. The main novelty of the model is that it allows for a flow of an arbitrarily oriented  liquid crystal so the Q-tensor is not anchored to the tangent plane of the surface, \RHN{and also obeys an energy law}. Several numerical experiments explore kinematical and dynamical properties of the novel model.

\end{abstract}

\tableofcontents

\section{Introduction}\label{intro}

Modeling of materials with orientational order is a challenging task. There is a variety of approaches  including particle theories as well as continuum director field and Q-tensor theories which may be customized by specifying the energy landscape to accommodate different types of phase transitions. 

In this paper we focus on the Q-tensor approach \cite{borthagaray2021q}, \cite{mottram2014introduction} . The main idea is to characterize the nematic liquid crystal state by averaging the probability density $\rho_p$ at a point $p$ over the unit sphere $\mathbb{S}$ which 
is the statistical distribution of the orientations of liquid crystal molecules at $p$. More specifically, a Q-tensor at $p$ is a symmetric, traceless matrix 
\begin{align*}
    \bQ=\int_{\mathbb{S}^2}\rho_p(s)s\otimes{}s\,ds -\frac13\bI
\end{align*}
defined as the difference between the second moments (the first moments are trivial due to the so-called tail-to-head symmetry) and the isotropic state $\frac13 \bI$. The physically relevant information derived from a Q-tensor simulation is the eigenframe of the matrix 
\begin{align}\label{spectral}
\bQ&= \lambda_1 (\bq_1\otimes \bq_1) + \lambda_2 ( \bq_2\otimes \bq_2)+ \lambda_3(\bq_3\otimes \bq_3) 
\end{align}
with the most significant orientations $\bq_i$ and corresponding eigenvalues $\lambda_i$. Then the Landau--de\,Gennes energy of a liquid crystal occupying a domain $\Omega$ combines the elastic energy with a material constant $L>0$ and the double-well potential $F[\bQ]$ with material constants $c>0, a, b$:
\begin{align}\label{bulk_e}
    E_{LdG}[\bQ]&=\int_\Omega \frac{L}2|\nabla\bQ|^2+\int_\Omega F[\bQ]\,,\\
    F[\bQ]&=\frac{a}{2}|{\bf Q}|^2 - \frac{b}{3}({\bf Q}:{\bf Q}^2)+\frac{c}{4}|{\bf Q}|^4\,,\label{dw}
\end{align}
where the so-called \textit{one-constant} approximation \cite{borthagaray2021q, de1993physics, sonnet2012dissipative} is considered for simplicity.
At the same time Onsager reciprocity principle \cite{onsager1931reciprocal1}, \cite{onsager1931reciprocal2}  suggests that space variations of the orientational order should be matched with a macroscopic flux of momentum \cite{yang2016hydrodynamic} to have an energy law provided by thermodynamics.  The coupling of the Q-tensor dynamics with the transport of the momentum is a  delicate matter due to possible non-conservative behavior of the total energy of the system. A well-known thermodynamically consistent model is the Beris--Edwards system \cite{beris1994thermodynamics}, \cite{zarnescu2012topics} in which the transport of the Q-tensor exerts \textit{Ericksen stress} $\bSigma=\bQ\bH-\bH\bQ$ and  \textit{Leslie force} $\bH:\nabla\bQ$ to the momentum flow. Here the \textit{molecular field} $\bH$ is the traceless and symmetric variation of $E_{LdG}[\bQ]$ with respect to $\bQ$. 

\LB{A surface Beris--Edwards model would help in understanding the dynamics of thin nematic liquid crystals shells. Thin nematic liquid crystal shells are potential candidates for self-assembling colloids due to the configuration of defects in a thin LC shell \cite{nelson2002toward}, which can be tuned by varying the thickness of the LC shell \cite{lopez2011frustrated}.}

\LB{The analytical properties of the Beris--Edwards system for flat domains in $\R^2$ and $\R^3$ has been studied extensively. A nonexhaustive list of references include works on existence of weak solutions in $\mathbb{R}^2,\mathbb{R}^3$ \cite{paicu2012energy}, weak-strong uniqueness and higher regularity in $\mathbb{R}^2$ \cite{paicu2012energy}, short time existence for strong solutions in bounded domains \cite{Abels2015StrongSF}, existence of weak solutions and short time well-posedness  with mixed boundary conditions \cite{abels2014well}, and
physical eigenvalue preservation of the $Q$-tensor in the corotational Beris--Edwards system \cite{wu2019dynamics}. To the best of our knowledge, many of these questions of existence, possible uniqueness, regularity, and eigenvalue preservation remain open for hydrodynamical models of liquid crystals on curved surfaces.}

Regarding numerical methods for the Landau--de\,Gennes dynamics and for the Beris--Edwards model for flat domains we refer to  \cite{zhao2017novel}, \cite{cai2017stable}, \cite{gudibanda2020convergence}  and references therein. In addition, we refer to \cite{sonnet2012dissipative} for modeling of dissipative ordered fluids.

We aim to extend the Beris--Edwards model to stationary curved surfaces. For an example of a situation in which the model is relevant, one may think of a liquid crystal material confined between two parallel surfaces which may be far enough apart to fit normally oriented rod-like particles but are sufficiently close so Q-tensor states are constant along the thickness. Yet tangent distortions of the orientational order generate a tangent macroscopic flow of matter. The main challenge is thus to establish a thermodynamically consistent energy law for a curved surface while still having a generically oriented Q-tensor.

We start with a discussion of situations where a generically oriented Q-tensor description of a liquid crystal may be desirable. In the case of long bulk cylinders with homeotropic anchoring on the cylinder wall, experiments show that the liquid crystal may experience what is called an escape to the third dimension \cite{crawford1991escaped,meyer1973existence,williams1972nonsingular}. This behavior has also been shown theoretically using director field models \cite{meyer1973existence}. For flat 2D disks, escape to the third dimension has also been observed to be energetically favorable for Q-tensor models due to the complex Landau-de Gennes energy landscape \cite{ignat2016stability,ignat2020symmetry}. We point to \cite{hu2016disclination} for a numerical exploration of this landscape. Moreover, in thin flat domains where tangential anchoring is present on the top and bottom boundary, numerical experiments suggest that the liquid crystal orientation may not stay planar \cite{chiccoli2002topological}. For thin shells with varying thickness and a bead inside, the numerics in \cite{gharbi2013microparticles} provide a plausible explanation of experiments of a metastable configuration with an escape to the third dimension near defects. \LB{Numerical experiments of three-dimensional LC shells show the escape to the third dimension near defects when the shell increases in thickness \cite{bates2010defects, koning2013bivalent}. Additionally, \cite{koning2016spherical} provides a plausible explanation of the presence of two $+1/2$ defects and one $+1$ defect in experiments \cite{lopez2011frustrated}: the $+1$ defect is composed of two boojums on the confining surfaces and escape to the third dimension occurs in the thickness of the shell.} \LB{We finally point to the experiments in \cite{murray2017decomposition} where escape to the third dimension is observed near topological defects in a thin LC cell. This escape to the third dimension depends on the strength of surface anchoring.} All these situations do not involve coupling with a fluid, but do suggest that a generically oriented Q-tensor description may be warranted.

Existing models of liquid crystals films differ from each other in the assumed structure of the Q-tensor eigenframe and its relation to the tangent plane of the surface. For example, in the thin-film models of \cite{kralj2011curvature}, \cite{napoli2012surface} the Q-tensor is assumed to be \textit{conforming} and \textit{flat-degenerate} with zero eigenvalue in the normal direction.
We define these concepts as follows for a general $Q$-tensor $\bQ\in \R^{3\times3}$ on a surface $\Gamma$ :
\begin{equation}\label{conf_def}
\begin{minipage}{0.84\linewidth}
{\em $\bQ$ is called conforming to $\Gamma$ at a point $\bx\in\Gamma$ if one of the eigenvectors equals the unit normal vector $\bn(\bx)$ to $\Gamma$;}
\end{minipage}
\end{equation}
\begin{equation}\label{fd_def}
\begin{minipage}{0.84\linewidth}
{\em $\bQ$ is called flat-degenerate  at a point $\bx\in\Gamma$ if one of the eigenvalues is zero.}
\end{minipage}
\end{equation}

Physically a conforming and flat-degenerate Q-tensor with zero eigenvalue in the normal eigendirection  corresponds to the case of liquid crystal molecules being located strictly in the tangent plane essentially forming a two-dimensional liquid crystal state in each tangent plane. The assumption that the Q-tensor is conforming and flat-degenerate at each point requires the presence of an ad-hoc large interface force\VY{, e.g. a reaction force from rigid walls surrounding the thin film from one or both sides,} which dominates all the other forces \VY{because otherwise the Q-tensor field would evolve to a uniform, uniaxial state in $\mathbb{R}^3$ violating the conformity assumption}.

The conformity assumption reduces the number of independent coefficients for a general traceless, symmetric Q-tensor \cite{nestler2020properties}. In the conforming case, when 
one eigenvector is normal to the surface, the tangent orientational order of the liquid crystal state is described by a tangent director field and a scalar order parameter. The normal orientational order, i.e. the eigenvalue corresponding to the normal eigenvector, is described by a scalar field which often (e.g. \cite{nitschke2019hydrodynamic}) has a prescribed constant value; see \cite{nestler2020properties} for a general discussion of conforming models. \VY{Thus, the conformity assumption facilitates the reduction of the number of Q-tensor unknowns from 5 to 3}.
The expression of the elastic energy $|\nabla{}\bQ|^2$ from \eqref{bulk_e} in terms of a tangent director field contains \VY{several}  geometric terms 
which complicate the numerical implementation.
Note that strategies for dimension reduction of the Landau--de\,Gennes energy other than \cite{nestler2020properties} are possible, e.g. \cite{novack2018dimension}.

In addition to conformity, the assumption of flat-degeneracy in the normal eigendirection further reduces the number of independent variables of the Q-tensor field by 1 - from 3 to 2. 
It should be noted that such flat-degenerate Q-tensor fields are biaxial from the perspective of $\R^3$ while the minimizers of a double-well potential have to be uniaxial; see Definition \ref{uni-bi}. Consequently, \VY{the assumption of flat-degeneracy means that there exists a force} with a special structure to prevent the evolution of a Q-tensor towards a uniaxial state.

Besides \cite{kralj2011curvature} and \cite{napoli2012surface}, where conformity and flat-degeneracy in the normal eigendirection are assumed, other models of liquid crystal films may relax some of these assumption but not entirely. In \cite{nitschke2018nematic} the eigenvalues of a Q-tensor may be general but the eigenframe is assumed to be conforming to the surface. The normal eigenvalue and the tangent order parameter undergo interrelated $L^2$ gradient flows which are formulated in the \VY{language of local tensor calculus}. In \cite{golovaty2017dimension} the Q-tensor does not have to be conforming but no evolution laws are discussed. The bulk Landau--de\,Gennes energy  of a thick film is combined in \cite{golovaty2017dimension} with an anchoring energy of the film interfaces, and the minimizers of the resulting landscape are studied via $\Gamma$-convergence.

The aforementioned references involve liquid crystal models with no hydrodynamical properties. To the best of the authors' knowledge the only paper in which a coupling of a Q-tensor and linear momentum is considered for curved liquid crystal films is \cite{nestler2021active}. In such a paper the Q-tensor is assumed to be conforming but the evolution laws
are not shown to have an energy structure. An important conclusion of the present paper is the requirement of thermodynamical consistency, namely the existence of a proper energy law, strongly ties the kinematics and the dynamics of curved liquid crystal films. Nevertheless, it is physically reasonable that there are regimes for which the anchoring of Q-tensors can be justified if it is posed weakly with an energy term that penalizes the non-conformity.

\LB{ We also note that the model derived in this paper reduces to surface Navier--Stokes equations when $\bQ\equiv 0$. There has been extensive work on surface Navier--Stokes for modeling and numerics. We point to \cite{arroyo2009relaxation,reuther2015interplay, koba2017energetic, jankuhn2018incompressible, mietke2019self} for works on modeling and \cite{reuther2015interplay,nitschke2012finite, fries2018higher, fries2018higher, reuther2020numerical, brandner2022finite, pearce2019flow} for works on numerics of surface fluid flows.}

The goal of this paper is three-fold. The first one is to derive a surface model of the liquid crystal flow where the orientational order is not anchored to the surface or, in other words, the Q-tensor is not conforming to the surface. This model is derived via the generalized Onsager principle mainly following \VY{\cite{sonnet2012dissipative}}, \cite{yang2016hydrodynamic}, \cite{unpublishedWang} and a private communication with Qi Wang. A similar approach called Lagrange-Rayleigh principle has been applied to Ericksen--Leslie theory involving a director field tangent to the film \cite{napoli2016hydrodynamic}. \VY{The formalism of Onsager is quite general and it does not involve any assumptions on the relation between the dimensions of the model and its environment, and, hence, is suitable for the modelling of embedded surfaces. So, the applicability of the generalized Onsager principle as a guiding physical principle of thin-film modeling is assumed in this paper. We refer to \cite{doi2011onsager} for the principle's thermodynamical premises and to  \cite{doi2015onsager}, \cite{wang2021onsager} for further details of its application to the particular physical systems.}
The second goal is to use the language of \LB{differential geometry in cartesian coordinates} 
instead of the language of differential geometry that refers to local parametric coordinate systems  \cite{nestler2021active}, thus \VY{simplifying implementation of the model in standard computational packages}. The application of  this approach to Q-tensors on surfaces appears to be new. The third goal is to explore computationally the action of three forces, one new to our surface model, and the consequences of non-conformity for the dynamics of Q-tensors on surfaces.

The outline of the paper is as follows.   
In Section \ref{notation}, following \cite{jankuhn2018incompressible}, we give the preliminaries of tangential calculus and introduce two tensor 
derivatives on a surface $\Gamma$: the \VY{external} surface derivative \eqref{nablam} and the covariant surface derivative \eqref{nablaG} - both will be used in our surface model; \RHN{in Appendices \ref{apx:calc} and \ref{apx:by_parts} we provide further discussion and proofs.} Section \ref{kinematics} is devoted to the development of the kinematical properties of the surface Beris--Edwards system. We introduce the new notion of passive transport of generically oriented Q-tensors along a surface flow; see Definition \ref{circle_def}. This novel concept is based on  Assumption \ref{assum_Q} which possesses a clear physical meaning. In Section \ref{Onsager} we apply systematically  the generalized Onsager principle \cite{Wang2021} to derive a thermodynamically consistent surface Beris--Edwards model based on the kinematical properties defined in  Section \ref{kinematics}, and establish the underlying energy law. In Section \ref{S:representation} we discuss the biaxiality parameter $\beta[\bQ]$ and the non-conformity parameter $r_\Gamma[\bQ]$ and study their properties. 
We conclude in Section \ref{numerics} with several numerical simulations of the surface Beris--Edwards model derived in Section \ref{Onsager} to demonstrate its basic properties, and investigate the action of the three induced forces and role of non-conformity. We do observe nonconforming dynamics connecting conforming states in Section \ref{S:normal_anchoring}. The parameters $\beta[\bQ]$ and $r_\Gamma[\bQ]$ are crucial to describe the numerical experiments.

\section{Preliminaries in tangential calculus}\label{notation}

In this section the surface and all the fields are assumed to be sufficiently smooth.
Although we are concerned with a surface model, we intentionally work with tensor fields in $\Rn$, the ambient space to the surface, to avoid the less practical parametric approach. \VY{This section summarizes and clarifies two types of surface derivatives of tensors of order up to two, see e.g. \cite{jankuhn2018incompressible}, \cite{nestler2019finite}, which are both relevant to the surface Beris--Edwards model to be derived in this paper. Some preliminary notations are given in Appendix \ref{apx:calc}.} \VY{Integration by parts for the two types of derivatives is discussed in Appendix~\ref{apx:by_parts}.}

\subsection{{External} surface derivatives}

Here we introduce some standard operators of \VY{calculus on embedded surfaces}. Intuitively, these operators replicate standard \VY{Cartesian} operators with an assumption that their tensor arguments are extended  from the surface constantly along the normal direction.
\VY{Although the concept is certainly not novel, in this paper we will call  such operators \textit{external} to highlight the difference from similar operators which are based on the \textit{covariant derivative}, see Section~\ref{cov_subsection}.}

 Consider a closed surface $\Gamma\subset\Rn$ defined as the zero level set of its distance function $d\in{}C^2(\Omega_\delta)$ where $\Omega_\delta=\{\bx\in\Rn: |d(\bx)|<\delta\}$ is a tubular neighborhood of $\Gamma$ of thickness $\delta>0$. The boundary $\pa\Omega_\delta$ consists of two parallel surfaces, $\Gamma_\delta^+=\{\bx\in\Rn: d(\bx)=\delta\}$ and $\Gamma_\delta^-=\{\bx\in\Rn: d(\bx)=-\delta\}$. \LB{By means of the unit vector field $\bn=\nabla{}d\in{}C^1(\Omega_\delta)^n$, which is orthogonal to level sets of $d(\bx)$, we define the projectors $\bN$ and $\bP$ onto the normal and tangent subspaces to such level sets} as well as the shape operator $\bB$ (or Weingarten map) to be
\begin{align*}
    \bN=\bn\otimes\bn\in{}C^1(\Omega_\delta)^{n\times{}n}\,,\quad \bP=\bI-\bn\otimes\bn\in{}C^1(\Omega_\delta)^{n\times{}n}\,,\quad \bB=\nabla\bn\in{}C(\Omega_\delta)^{n\times{}n}\,,\quad\bx\in\Omega_\delta
\end{align*}
where $\bI$ is the identity operator.

\VY{Consider a tensor (scalar, vector, matrix) field $\bT$ on $\Omega_\delta$. The so-called \VY{\textit{external}} derivative $\nablaM \bT$ (see Definition~\ref{nablam} below) guarantees that} $$(\nablaM{}\bT)\bv=(\nabla\bT)\bP\bv\,,\qquad\forall\bv\in\Rn\,,\qquad\bx\in\Omega_\delta.$$ Essentially, this condition prescribes a \VY{non-standard Cartesian derivative} in $\Omega_\delta$ which disregards variations of the tensor field in the normal direction: \RHN{if $\bv=\bP\bv$ then $\nabla$ and $\nablaM{}$ coincide.} \VY{We stress that the external derivative $\nablaM \bT$ evaluated on $\Gamma$  depends only on the values of $\bT$ on $\Gamma$. The latter is straightforward for scalar tensor fields, and, therefore, holds for the external derivatives of a vector and a matrix fields as well since they are based on the external derivatives of scalar components.} 

\begin{definition}[\VY{external} derivative] For a scalar field $f$, a vector field $\bu$, and a matrix field $\bA$ on $\Omega_\delta$ the \VY{external} surface derivative is given by
 \begin{equation}\label{nablam}
 \begin{aligned}
\nabla_M{}f&=\bP\nabla{}f=\sum_{j=1}^n(\bP\be_j)\partial_jf\,,
\\
\nabla_M\bu
&=(\nabla\bu)\bP=\sum_{j=1}^n\partial_j\bu\otimes\bP\be_j=\sum_{j=1}^n\be_j\otimes{}\nabla_M\bu_j\,,
\\
 \nabla_M\bA&=\sum_{j=1}^n\partial_j\bA\otimes\bP\be_j=\sum_{j=1}^n\be_j\otimes\nablaM(\bA^T)_j\,,\qquad\bx\in\Omega_\delta\,.
  \end{aligned}
  \end{equation}
 \end{definition}
 Therefore, the \textit{\VY{external} surface directional derivative} along a vector field $\bv$ is then given by
  \begin{equation}\label{surf_dir_der}
  \begin{aligned}
  (\nablaM^Tf)\bv&=(\nabla^Tf)\bP\bv=\sum_{j=1}^n\bv\cdot(\bP\be_j)\partial_jf\,,  
  \\
    (\nabla_M\bu)\bv
&=(\nabla\bu)\bP\bv=\sum_{j=1}^n(\bv\cdot\bP\be_j)\partial_j\bu=\sum_{j=1}^n(\nabla_M\bu_j\cdot\bv)\be_j\,, 
  \\(\nabla_M\bA)\bv&=\sum_{j=1}^n(\bv\cdot\bP\be_j)\partial_j\bA=\sum_{j=1}^n(\nablaM\bA_j)\bv\otimes{}\be_j\, .
  \end{aligned}
  \end{equation}
 \VY{Note that $\nablaM^T \! f$ is a short notation for $(\nablaM f)^T$. We note that the surface directional derivative $(\nablaM\bu)\bv$ of a vector field  $\bu$ may have non-zero normal components.}

\begin{remark}[\VY{normal extension}]
   \VY{Consider a tensor field $\bT$ with values on $\Gamma$ only.} \VY{The normal extension $\bT^e$ on $\Omega_\delta$ is defined by}
\begin{align*}
    \bT^e(\bx):=\bT(\bx-d(\bx)\bn(\bx))\,,\qquad \bx\in\Omega_\delta;
\end{align*}
\RHN{thus $\bn^e=\bn$. A key property of the normally extended tensor fields is the vanishing} of the derivative in the normal direction which can be expressed via \eqref{dir_der} as:
\begin{align}\label{e-n}
    (\nabla{}\bT^e(\bx))\bn(\bx)=0\,,\qquad \bx\in\Omega_\delta.
\end{align}
\VY{Consequently, the external derivative of the normal extensions $\bT^e$ satisfies    \begin{align}\label{nablaM-normalextension}
    (\nablaM\bT)\bv=(\nabla\bT)\bP\bv=
   (\nabla\bT^e)\bv\qquad\forall{}\bv\in \Rn\,,\qquad\bx\in\Gamma\, ,
\end{align}
whence, for a normally extended tensor field the directional derivatives due to $\nabla$ and $\nablaM$ coincide for any $\bv$ (not just for $\bv=\bP\bv)$. Note that   \ref{nablaM-normalextension} could be seen as a way to compute $\nablaM \bT$ on $\Gamma$: given a field on $\Omega_\delta$, restrict its values to $\Gamma$, extend it normally to $\Omega_\delta$ and find its Cartesian gradient.}
\end{remark}

  The \textit{surface divergence} operator is defined in the same spirit: for normally extended tensors the result corresponds to the \VY{Cartesian} divergence \eqref{bulk_div}:
  \begin{align*}
 \divM \bu  &= \tr (\gradM \bu)
 \,,\quad
 \divM \bA  = \sum_{j=1}^n\be_j\,\divM (\bA^T)_j
\,,\quad\divM\nablaM\bA=\sum_{j=1}^n\be_j\otimes\divM\nablaM(\bA^T)_j\,.
\end{align*}
\LB{\begin{remark}[alternative definition of divergence]\label{rmk:alt-divG}We note that the surface divergence is the trace of the surface gradient. This definition coincides with those in differential geometry and other works in finite element methods for surface PDEs. For instance, we note that our definition of $\divM$ is consistent with previous work on finite elements for surface PDEs \cite[Eq. (2.7)]{dziuk2007finite}, and our definition of $\divG$ below is consistent with the definition of surface divergence found in previous work on modeling of elastic surfaces \cite[Eq. (2.8)]{gurtin1975continuum}. This definition of surface divergence is not the $L^2$-adjoint of $\nabla_\Gamma$ for a vector field with nonzero normal component as seen in Proposition \ref{vector_parts} (covariant integration by parts) below.
\end{remark}
}
\begin{definition}[\VY{external} strain-rate and spin tensors] For a given velocity field $\bv$ on $\Gamma$ we define the   \textit{\VY{external} strain-rate} and \textit{\VY{external} spin} tensors, respectively, as follows:
\begin{align} \label{strainM}
 \bD_M(\bv)= \frac12  (\nablaM \bv +\nablaM^T\bv) \,,\qquad
  \bW_M(\bv)=\frac12  (\nablaM \bv -\nablaM^T\bv) \, .
 \end{align}
 \end{definition}
 \VY{Again, $\nablaM^T \! \bv$ is a short notation for $(\nablaM \bv)^T$.} These rates correspond to symmetric and antisymmetric parts of the \VY{Cartesian} gradient of the vector field normally extended from the surface $\Gamma$ to its neighborhood $\Omega_\delta$. 
Using \eqref{contractions} and \eqref{surf_dir_der} we define the contraction  $\bC:\nablaM\bA    $ of a second order tensor $\bC$ and the surface gradient of a second order tensor $\bA$ as a vector such that for all $\bv\in\Rn$ we have
\begin{align}\label{contr_definition}
    (\bC:\nablaM\bA)\cdot\bv&=\bC:(\nablaM\bA)\bv\,.
    \end{align}
Since  $\bC:(\nabla_M\bA)\bv=\sum_{j=1}^n(\bv\cdot\bP\be_j)\bC:\partial_j\bA
    =\left(\sum_{j=1}^n(\bC:\partial_j\bA)\bP\be_j\right)\cdot\bv$ we obtain
    \begin{align}\label{tangent_contr}
    \bC:(\nabla_M\bA)&
    =\sum_{j=1}^n(\bC:\partial_j\bA)\bP\be_j
\end{align}
which shows that vector $\bC:\nablaM\bA$ is tangent to $\Gamma$.

\subsection{Covariant surface derivatives}\label{cov_subsection}

Here we discuss covariant operators  on the surface $\Gamma$, considered as an isometric embedding of a $C^2$ Riemannian manifold, which are intrinsic in the  sense that they only depend on the Riemannian structure and the surface values of the argument \VY{if it belongs to the tangent plane}. Intuitively,  the covariant derivative of a tangent object measures the tangent part of the change of the object in a tangent direction. 

For each $\bx_0\in\Gamma$ the subspace $\{\bx\in\Rn: \bx-\bx_0=\bP(\bx_0)(\bx-\bx_0)\}$ is identified with the tangent space of the manifold at $\bx_0$.  A linear operator $\bA$ is called \textit{tangent} if the normal $\bn$ is in its kernel and its range belongs to the tangent plane, or $\bA=\bP\bA\bP$. 
Given a tangent vector $\bv$ at $\bx_0$   consider a regular curve $\gamma:(a,b)\rightarrow\Gamma$, $\gamma(t_0)=\bx_0$, $t_0\in(a,b)$ such that $\gamma'(t_0)=\bv$. \VY{In the following we assume that all tangent planes of $\mathbb{R}^3$ are identified with itself as usual so the addition of tensors from different points is meaningful.} The tangent component of the  variation along the curve defines the action of the covariant directional derivative:
\begin{align*}
(\nablaG^Tf(\bx_0))\bv&=  \hskip 1.25cm \lim_{t\rightarrow t_0}\frac{1}{t-t_0}\left(f(\gamma(t))-f(\bx_0)\right)
\\
 (\nablaG\bu(\bx_0))\bv&= \bP(\bx_0)\left(  \lim_{t\rightarrow t_0}\frac{1}{t-t_0}\left(\bu(\gamma(t))-\bu(\bx_0)\right)\right)
 \\
 (\nablaG\bA(\bx_0))\bv&= \bP(\bx_0)\left(  \lim_{t\rightarrow t_0}\frac{1}{t-t_0}\left(\bA(\gamma(t))-\bA(\bx_0)\right)\right)\bP(\bx_0)
\end{align*}
which, by extending the tensor fields normally and applying \eqref{surf_dir_der}, can be shown to be equivalent to the following expressions for all $\bx\in \Gamma$
\begin{equation}\label{convG}
\begin{aligned}
(\nablaG^Tf)\bv&=(\nablaM^Tf)\bv
\,,\qquad
 (\nablaG\bu)\bv=\bP(\nablaM\bu)\bv
 \\
 (\nablaG\bA)\bv&=\bP(\nabla_M\bA)\bv\bP
 =\sum_{j=1}^n(\nablaG\bA_j)\bv\otimes{}\bP\be_j\,.
\end{aligned}
\end{equation}

 \VY{Note that $\nablaG^T \! f$ is a short notation for $(\nablaG f)^T$.} Finally, we give the definition of the \textit{covariant surface derivative}, which applies to fields that are not necessarily constant along the normal direction:
\begin{definition}[covariant derivatives] For a scalar field $f$, a vector field $\bu$, and a matrix field $\bA$ on $\Omega_\delta$ the covariant surface derivative is given by
\begin{equation}\label{nablaG}
\begin{aligned}
\nablaG{}f&=\nablaM{}f=\sum_{j=1}^n(\bP\be_j)\partial_jf\,,
\\
\nablaG\bu
&=\bP\nablaM\bu=\sum_{j=1}^n\bP\partial_j\bu\otimes\bP\be_j=\sum_{j=1}^n\bP\be_j\otimes{}\nabla_M\bu_j\,,
\\
 \nablaG\bA&=\sum_{j=1}^n\bP(\partial_j\bA)\bP\otimes\bP\be_j\,.
  \end{aligned}
  \end{equation}
  \end{definition}
  
 In fact, it can be shown \cite{jankuhn2018incompressible} that for points $\bx\in\Gamma$ the covariant derivatives \eqref{nablaG}   of tensors extended from $\Gamma$ are independent of the chosen extension.
 
 The covariant divergence of a vector field $\bu$ and of
a matrix field $\bA$ is defined as:
\begin{align}
 \divG \bu  &= \tr (\gradG \bu)\,,\quad
 \divG \bA  = \left( \divG (\bA^T)_1,\,
               \divG (\bA^T)_2,\,
               \divG (\bA^T)_3\right)^T\,,\quad \bx\in\Omega_\delta\, ,\label{surf_div}
\end{align}
namely the divergence of $\bA$ is computed by rows.
However, because of the cyclic property of traces we have
\begin{align}\label{divergence}
\divG\bu=\tr(\bP\nabla\bu\bP)=\tr (\nabla\bu\bP)=\divM\bu\,,\qquad \divG\bA=\divM\bA\,.
\end{align}
 
\begin{definition}[covariant strain-rate and spin tensors] For a given velocity field $\bv$ on $\Gamma$ we define the   \textit{covariant strain-rate} and \textit{covariant spin} tensors, respectively, as follows:
\begin{equation} \label{strain}
  \bD_\Gamma(\bv)= 
 \frac12(\nabla_\Gamma \bv + \nabla_\Gamma^T \bv)\,,\qquad
 \bW_\Gamma(\bv)=
 \frac12(\nabla_\Gamma \bv - \nabla_\Gamma^T \bv).
 \end{equation}
 \end{definition}
These tensors correspond to symmetric and antisymmetric tangent parts of the instant deformation of a tangent plane due to the flow $\bv$. Essentially, formulas in \eqref{strain} as well as the $\nablaG$ operator represent objects intrinsic to $\Gamma$ which one may compute using the Riemannian structure only. 

Finally we note the relation
of the \VY{external} and covariant rates because of \eqref{nablam} and \eqref{nablaG}:
\begin{align}
    \label{strain_relation}
 \bD_\Gamma(\bv)= \bP\bD_M(\bv)\bP \,,\qquad
 \bW_\Gamma(\bv)=\bP\bW_M(\bv)\bP \, .
\end{align}

\section{Kinematics of Q-tensors on surfaces}\label{kinematics}

In this section we discuss kinematic properties of the model of  surface flows of liquid crystals developed in this paper. The kinematic properties are introduced by defining the dependence of the state variables, which are the momentum and the Q-tensor, on the prescribed deformation of the domain, caused by a vector field $\bv$ in the absence of any forces. The resulting operators, if set equal to zero, are called \textit{passive transport} equations.  For example, the passive transport of a scalar field $f$, e.g. the density, along a tangent flow $\bv$ on a surface $\Gamma$ is usually given by
\begin{align}
    \dot{f}=\pa_t{}f+(\nablaG^Tf)\bv=\pa_t{}f+(\nablaM^Tf)\bv=0\label{scalar_trans}
\end{align}
which attaches scalar values to the flow $\bv$.
Similarly, we will define a notion of surface transport of the linear momentum and a notion of surface  transport of generically oriented Q-tensor fields based on the derivatives introduced in previous sections. While the former is well-established in the literature \cite{jankuhn2018incompressible}, the latter is new. The definition of the Q-tensor passive transport will be motivated by kinematic assumptions with a  clear physical meaning: a tangent eigenvector is embedded into a surface flow similar to the flat case in $\R^2$ and a normal eigenvector has to stay normal along the flow.

\subsection{Momentum transport}

We assume the density $\rho$ is constant and often omitted it in this section for clarity. To express the
rate of change of the linear momentum field $\rho\bu$ in the ambient space, we need to use the Euclidean parallel transport equation \eqref{nablam} of a velocity field $\bu$ normally extended from $\Gamma$:
\begin{equation}\label{cons_mom}
\begin{aligned}
     \pa_t\bu+(\nablaM\bu)\bv&= \pa_t(\bu_T+u_N\bn)+(\nablaM\bu_T+u_N\nablaM\bn+\bn\otimes\nablaM{}u_N)\bv 
     \\
    &=\pa_t\bu_T+(\nablaM\bu_T)\bv + \bn(\pa_t{}u_N+(\nablaM^T{}u_N)\bv)+u_N(\nablaM\bn)\bv 
    \\
    &=\left[\pa_t\bu_T+(\nablaG\bu_T)\bv\right]+\bN(\nablaM\bu_T)\bv+\dot{u}_N\bn+u_N\bB\bv 
        \\
    &=\left[\pa_t\bu_T+(\nablaG\bu_T)\bv\right]-(\bv\cdot\bB\bu_T)\bn+\dot{u}_N\bn+u_N\bB\bv\,,
\end{aligned}
\end{equation}
where we split the velocity $\bu$ into the normal and tangent components as in \eqref{u_decomp}, and used \eqref{Bu} and  Proposition \ref{appendix}.
We consider films which are stationary in space, whence the surface $\Gamma$ does not evolve in time and the velocity $\bu=\bu_T+u_N\bn$ is tangent to $\Gamma$, i.e. $u_N=0$. We express the transport of linear momentum by setting $\bv=\bu_T$ in \eqref{cons_mom}.
  What remains in \eqref{cons_mom} is the \VY{tangential material acceleration} 
  \begin{align}\label{pt}
  \pa_t\bu_T+(\nablaG\bu_T)\bu_T
  \end{align} and the normal centripetal acceleration $-\bu_T\cdot\bB\bu_T$. Since we require $u_N=0$ the centripetal acceleration has to be balanced by the reaction forces which enforce the constraint that the surface does not evolve. This suggests that the tangent part of the rate of change, $\bP(\pa_t\bu+(\nablaM\bu)\bu)=\pa_t\bu_T+(\nablaG\bu_T)\bu_T$, should represent the passive transport of momentum in a surface model. This idea is summarized in the following assumption.
\begin{assumption}[kinematics of momentum]\label{assum_vel}
The passive transport of the linear momentum field $\rho\bu=\rho\bu_T$ along the velocity $\bu_T$
is the parallel transport \eqref{pt}, i.e.:
\begin{align*}
     \pa_t\bu_T+(\nabla_\Gamma\bu_T)\bu_T=0\,.
\end{align*}
\end{assumption}
Based on this kinematic assumption on how the momentum $\rho\bu$ is transported in the absence of any forces we define the surface material derivative, which suits the Assumption \ref{assum_vel}, as follows:
\begin{definition}
\label{vector_transp}
The \textit{surface material derivative} $\dot{\bu}$ of a vector field $\bu:\Gamma\to\R^3$ along a given tangent vector field $\bv$ is a vector field with the following normal and tangent components:
   \begin{align*}
      \bN\dot{\bu}&= \dot{u}_N\bn\,,\quad \bP\dot{\bu}=\pa_t\bu_T+(\nablaG\bu_T)\bv\,.
\end{align*}
\end{definition}

\begin{remark}
    The passive transport $\dot{\bu}=0$ by the surface material derivative given in  Definition \ref{vector_transp} has the following properties.  The normal component $\bu\cdot\bn=u_N$  of the  vector field $\bu$  is transported by \eqref{scalar_trans} as a scalar field. Consequently, if at the initial moment of time the vector field $\bu$ is tangent, then it remains tangent along the passive flow by a vector field $\bv$. The tangent component $\bu_T=\bP\bu$ satisfies the \textit{Riemannian parallel transport} equation
$
\pa_t\bu_T+(\nablaG\bu_T)\bv=0.
$
\end{remark}

Finally we would like to show that the passive transport of the velocity field along itself has the property of preserving the kinetic energy and the linear momentum in the ambient space $\R^3$. We start with a property of the convective term, which is well known in flat domains.

\begin{lemma}[vanishing of the convective term]\label{Temam}
Let $\bu_T$, $\bv_T$, $\bw_T$ be tangent vector field on a closed surface $\Gamma$. Then 
trilinear convective form satisfies
\begin{align*}
    &((\nablaG\bv_T)\bw_T, \bu_T)_\Gamma+((\nablaG\bu_T)\bw_T, \bv_T)_\Gamma=-(\bv_T\cdot\bu_T,\divG\bw_T)_\Gamma\,.
\end{align*}
and it vanishes provided $\divG\bw_T=0$ and $\bu_T=\bv_T$, namely
\begin{align}\label{conv_cancel}
    ((\nablaG\bu_T)\bw_T, \bu_T)_\Gamma=0 \, .
\end{align}
\end{lemma}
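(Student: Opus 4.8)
The plan is to prove the trilinear identity first and then deduce the vanishing as a special case. The starting point is a componentwise integration by parts for the covariant derivative on a closed surface. Writing $\bu_T = \sum_i u_i \be_i$, $\bv_T = \sum_i v_i \be_i$, and using that $(\nablaG \bv_T)\bw_T \cdot \bu_T = \sum_{i} \big((\nablaG v_i)\cdot\bw_T\big) u_i + (\text{terms involving } \bP\partial_j\be_i)$ — more cleanly, I would use the surface product rule $\divG(f\bw_T) = (\nablaG^T f)\bw_T + f\,\divG\bw_T$ applied to $f = \bu_T\cdot\bv_T$, which gives
\begin{align*}
(\nablaG^T(\bu_T\cdot\bv_T))\bw_T = \divG\big((\bu_T\cdot\bv_T)\bw_T\big) - (\bu_T\cdot\bv_T)\,\divG\bw_T .
\end{align*}
Integrating over the closed surface $\Gamma$ and using the divergence theorem for tangential vector fields (no boundary term, $\Gamma$ closed) kills the first term on the right:
\begin{align*}
\int_\Gamma (\nablaG^T(\bu_T\cdot\bv_T))\bw_T = -\int_\Gamma (\bu_T\cdot\bv_T)\,\divG\bw_T .
\end{align*}

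The remaining step is the Leibniz rule $(\nablaG^T(\bu_T\cdot\bv_T))\bw_T = \big((\nablaG\bu_T)\bw_T\big)\cdot\bv_T + \big((\nablaG\bv_T)\bw_T\big)\cdot\bu_T$, which holds because the covariant derivative is metric-compatible: differentiating the scalar $\bu_T\cdot\bv_T$ along $\bw_T$ and projecting is the same as differentiating each factor and pairing, since $\bu_T,\bv_T$ are tangent and $\bP$ is the orthogonal projector (the normal parts of $(\nablaM\bu_T)\bw_T$ are orthogonal to the tangent vector $\bv_T$, so they do not contribute to the inner product). Combining the two displays yields exactly
\begin{align*}
((\nablaG\bv_T)\bw_T,\bu_T)_\Gamma + ((\nablaG\bu_T)\bw_T,\bv_T)_\Gamma = -(\bv_T\cdot\bu_T,\divG\bw_T)_\Gamma .
\end{align*}
The vanishing statement \eqref{conv_cancel} then follows by setting $\bv_T = \bu_T$ and $\bw_T = \bu_T$ with $\divG\bu_T = 0$: the left side becomes $2((\nablaG\bu_T)\bu_T,\bu_T)_\Gamma$ and the right side vanishes, so $((\nablaG\bu_T)\bu_T,\bu_T)_\Gamma = 0$.

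The main obstacle I anticipate is justifying the metric-compatibility Leibniz rule cleanly in the ambient-Cartesian formalism of the paper, i.e. showing $(\nablaG^T(\bu_T\cdot\bv_T))\bw_T = ((\nablaG\bu_T)\bw_T)\cdot\bv_T + ((\nablaG\bv_T)\bw_T)\cdot\bu_T$ without lapsing into local coordinates. The key point is that $\bu_T\cdot\bv_T = \bu\cdot\bv$ for the tangent fields, so $\nablaG^T(\bu_T\cdot\bv_T) = \nablaM^T(\bu\cdot\bv)$ by \eqref{convG}/\eqref{nablaG}, and then the ordinary Cartesian Leibniz rule applies to $(\nablaM(\bu\cdot\bv))\bw_T = ((\nablaM\bu)\bw_T)\cdot\bv + \bu\cdot((\nablaM\bv)\bw_T)$; finally one replaces $(\nablaM\bu)\bw_T$ by $\bP(\nablaM\bu)\bw_T + \bN(\nablaM\bu)\bw_T = (\nablaG\bu)\bw_T + (\text{normal})$ and notes the normal part pairs to zero against the tangent field $\bv$, and symmetrically for the other term. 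One should also verify the tangential divergence theorem $\int_\Gamma \divG\bw_T = 0$ on a closed surface is available from the integration-by-parts results referenced in the paper (Appendix~\ref{apx:by_parts} / Proposition~\ref{vector_parts}); applied to the vector field $(\bu_T\cdot\bv_T)\bw_T$ this gives the needed cancellation. Everything else is routine.
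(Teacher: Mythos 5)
Your proof is correct and essentially identical to the paper's: both arguments rest on integrating $\divG\big((\bu_T\cdot\bv_T)\bw_T\big)$ over the closed surface (which vanishes by the covariant integration-by-parts Lemma \ref{vector_parts}, since the field is tangential), expanding by the product rule, and using $\bP\nablaM=\nablaG$ so that the normal parts of $(\nablaM\bu_T)\bw_T$, $(\nablaM\bv_T)\bw_T$ pair to zero against the tangent fields. One minor remark: to obtain \eqref{conv_cancel} as stated (arbitrary divergence-free $\bw_T$) you should set only $\bv_T=\bu_T$ in your identity rather than also $\bw_T=\bu_T$; this changes nothing in your argument.
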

\begin{proof} Using Lemma \ref{vector_parts} (covariant integration by parts) followed by \eqref{divergence} and Proposition \ref{appendix} (product rules) for normally extended $\bu^e_T, \bv^e_T, \bw^e_T$ we deduce
 \begin{align*}
      0 &= \big(\divG((\bv_T\cdot\bu_T)\bw_T),1 \big)_\Gamma
      = \big((\bv_T\cdot\bu_T)\divM\bw_T+ \bw_T\cdot\nablaM(\bu_T\cdot\bv_T), 1 \big)_\Gamma
    \\
    &= \big((\bv_T\cdot\bu_T)\divG\bw_T+ \bu_T\cdot(\nablaM\bv_T)\bw_T+\bv_T\cdot(\nablaM\bu_T)\bw_T, 1\big)_\Gamma \, .
\end{align*}
Invoking \eqref{nablaG}, namely $\bP\nablaM\bu=\nablaG\bu$, and reordering yields the assertion.
\end{proof}

\begin{corollary}[preservation of kinetic energy and linear momentum] 
Let $\bu_T$ be a tangent and incompressible velocity field that is passively transported, namely $\bu$ satisfies $\divG\bu_T=0$ and $\pa_t\bu_T+(\nabla_\Gamma\bu_T)\bu_T=0$, for sufficiently smooth initial condition on a closed surface $\Gamma$. Then the kinetic energy $\frac12 \int_\Gamma \rho\bu_T^2$  and the total linear momentum $\int_\Gamma {\rho}\bu_T$ are preserved over time.
\end{corollary}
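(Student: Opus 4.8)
The plan is to differentiate each conserved quantity in time; since $\Gamma$ is stationary, $\frac{d}{dt}\int_\Gamma(\,\cdot\,)\,dS=\int_\Gamma\partial_t(\,\cdot\,)\,dS$, and I would then eliminate $\partial_t\bu_T$ with the passive transport equation $\partial_t\bu_T=-(\nablaG\bu_T)\bu_T$, reducing both statements to facts about the integral over $\Gamma$ of the covariant convective term $(\nablaG\bu_T)\bu_T$. For the kinetic energy this is immediate:
\begin{equation*}
\frac{d}{dt}\,\frac12\int_\Gamma\rho\,\bu_T^2\,dS=\int_\Gamma\rho\,\bu_T\cdot\partial_t\bu_T\,dS=-\int_\Gamma\rho\,\big((\nablaG\bu_T)\bu_T\big)\cdot\bu_T\,dS=0,
\end{equation*}
the last equality being precisely \eqref{conv_cancel} of Lemma \ref{Temam} with $\bw_T=\bu_T$ (using $\divG\bu_T=0$). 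This part is routine.

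\textbf{Reducing the momentum to a surface divergence.} The same differentiation gives $\frac{d}{dt}\int_\Gamma\rho\,\bu_T\,dS=-\int_\Gamma\rho\,(\nablaG\bu_T)\bu_T\,dS$, so I need $\int_\Gamma(\nablaG\bu_T)\bu_T\,dS=\mathbf{0}\in\R^3$. First I would rewrite the \emph{external} convective term as a surface divergence: by the product rules (Proposition \ref{appendix}), the identity \eqref{divergence}, and the incompressibility $\divM\bu_T=\divG\bu_T=0$, one has componentwise $\big((\nablaM\bu_T)\bu_T\big)_i=\nablaM(\bu_T)_i\cdot\bu_T=\divG\big((\bu_T)_i\bu_T\big)$, so by the row-wise definition \eqref{surf_div} of the matrix divergence
\begin{equation*}
(\nablaM\bu_T)\bu_T=\divG(\bu_T\otimes\bu_T).
\end{equation*}
Because $\bu_T$ is tangent, each row $(\bu_T)_i\bu_T$ of the symmetric tensor field $\bu_T\otimes\bu_T$ is again a tangent vector field; hence the instance $(\divG\bz,1)_\Gamma=0$ of covariant integration by parts valid for any tangent field $\bz$ on the closed surface $\Gamma$ (Proposition \ref{vector_parts}, already used in the proof of Lemma \ref{Temam}) yields $\int_\Gamma\divG\big((\bu_T)_i\bu_T\big)\,dS=0$ for every $i$, and therefore $\int_\Gamma(\nablaM\bu_T)\bu_T\,dS=\mathbf{0}$.

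\textbf{From the external to the covariant term, and the main obstacle.} Next I would account for the gap between $(\nablaG\bu_T)\bu_T$ and $(\nablaM\bu_T)\bu_T$, which is purely normal: from the splitting \eqref{cons_mom} with $u_N=0$ and $\bv=\bu_T$ the normal contribution there is $\bN(\nablaM\bu_T)\bu_T=-(\bu_T\cdot\bB\bu_T)\,\bn$, so $(\nablaG\bu_T)\bu_T=\bP(\nablaM\bu_T)\bu_T=(\nablaM\bu_T)\bu_T+(\bu_T\cdot\bB\bu_T)\,\bn$ and hence
\begin{equation*}
\int_\Gamma(\nablaG\bu_T)\bu_T\,dS=\int_\Gamma(\bu_T\cdot\bB\bu_T)\,\bn\,dS.
\end{equation*}
\textbf{The hard part will be to show that this last integral of the normal centripetal density vanishes}, i.e. $\int_\Gamma(\bu_T\cdot\bB\bu_T)\,n_i\,dS=0$ for each $i$. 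The route I would attempt is a second integration by parts: using $\bB=\nablaM\bn$, the external product rules (Proposition \ref{appendix}), $\bu_T\cdot\bn=0$, and the constraint $\divG\bu_T=0$, I would try to express $(\bu_T\cdot\bB\bu_T)\,n_i$ as the surface divergence of a tangent vector field assembled from $\bn$, $n_i$ and $\bu_T$, whose integral over the boundaryless $\Gamma$ then vanishes by the same no-boundary argument; granting this, $\int_\Gamma(\nablaG\bu_T)\bu_T\,dS=\mathbf{0}$ and both the kinetic energy and the total linear momentum are preserved. Finding the correct tangent field and verifying that all curvature contributions cancel is the delicate point where the geometry of $\Gamma$ genuinely enters, and it is the step I expect to require the most work.
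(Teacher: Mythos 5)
Your kinetic-energy argument is correct and is exactly the paper's. For the momentum, your reduction is also correct as far as it goes: $(\nablaM\bu_T)\bu_T=\divG(\bu_T\otimes\bu_T)$ does integrate to zero over the closed surface, and therefore
\begin{equation*}
\frac{d}{dt}\int_\Gamma\rho\,\bu_T=-\rho\int_\Gamma(\nablaG\bu_T)\bu_T=-\rho\int_\Gamma(\bu_T\cdot\bB\bu_T)\,\bn .
\end{equation*}
But the step you defer --- the vanishing of this last integral --- is the whole content of the momentum claim, and it cannot be established along the route you sketch. It is false for a general tangent, covariantly divergence-free field: on the torus $\bx(\theta,\phi)=((R+r\cos\phi)\cos\theta,(R+r\cos\phi)\sin\theta,r\sin\phi)$ take $\bu_T=f(\phi)\,\bt_\theta$, with $\bt_\theta$ the unit tangent in the $\theta$-direction; then $\divG\bu_T=0$ for every smooth $f$, yet $\bu_T\cdot\bB\bu_T=f(\phi)^2\cos\phi/(R+r\cos\phi)$, and the $z$-component of $\int_\Gamma(\bu_T\cdot\bB\bu_T)\,\bn$ equals $2\pi r\int_0^{2\pi}f^2\sin\phi\cos\phi\,d\phi$, which is $\pi^2r/2\neq0$ for, e.g., $f^2=1+\sin\phi\cos\phi$. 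Consequently $(\bu_T\cdot\bB\bu_T)\,n_i$ cannot in general be written as the covariant divergence of a tangent field, and no amount of curvature bookkeeping will close your argument using only tangency and $\divG\bu_T=0$. (Whether the overdetermined requirement that the transport equation and the constraint hold simultaneously for all times forces the integral to vanish is a different and much subtler question, which you would have to address head on.)

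For context, the paper's own proof of the momentum part is the one-line chain $\big((\nablaG\bu_T)\bu_T,\be_x\big)_\Gamma=\big(\nablaG(\bu_T\cdot\be_x),\bu_T\big)_\Gamma=0$; but since $(\nablaG\bu_T)\bu_T\cdot\be_x=\bP\be_x\cdot(\nablaM\bu_T)\bu_T$ whereas $\nablaG(\bu_T\cdot\be_x)\cdot\bu_T=\be_x\cdot(\nablaM\bu_T)\bu_T$, the first equality there drops pointwise exactly the centripetal density $n_x(\bu_T\cdot\bB\bu_T)$ that you isolated. So your decomposition does not merely leave your own proof unfinished; it shows that the obstruction is real and is glossed over in the paper. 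The honest conclusions available are: the kinetic energy is conserved (your first step); the Euclidean-transport momentum is conserved (your $\divG(\bu_T\otimes\bu_T)$ argument proves $\int_\Gamma(\nablaM\bu_T)\bu_T=0$); but conservation of $\int_\Gamma\rho\bu_T$ under the covariant transport is equivalent to the vanishing of the net normal reaction $\int_\Gamma(\bu_T\cdot\bB\bu_T)\,\bn$, which neither you nor the paper establishes.
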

\begin{proof} Taking into account that density $\rho$ is constant we compute 
\begin{align*} 
    \frac{d}{dt}\int_\Gamma \frac{\rho}2\bu_T^2=\rho\big(\pa_t\bu_T, \bu_T\big)_\Gamma = -\rho\big((\nablaG\bu_T)\bu_T, \bu_T\big)_\Gamma = 0\,,
\end{align*}
according to \eqref{conv_cancel}. To treat the vector of total linear momentum, we consider its $x$-component
\begin{align*}
   \frac{d}{dt}\left(\int_\Gamma {\rho}\bu_T \cdot{} \be_x\right)=\rho \big(\pa_t\bu_T, \be_x\big)_\Gamma=-\rho\big((\nabla_\Gamma\bu_T)\bu_T, \be_x\big)_\Gamma =-\rho\big(\nablaG(\bu_T\cdot\be_x), \bu_T\big)_\Gamma=0
\end{align*}
where used Lemma \ref{vector_parts} (covariant integration by parts) with $\bu=\bu_T$ and $f=\bu_T\cdot\be_x$ at the last step. Other components of $\bu_T$ are dealt with similarly. 
This completes the proof.
\end{proof}

\subsection{Q-tensor transport}
\VY{Although many objective rates are available for modeling of Q-tensor flows even in $\R^3$, we aim to choose one, the corotational derivative \eqref{bulk_corotation}, and show how it should be adapted for the case of a fixed surface $\Gamma$ resulting in Definition \ref{circle_def}.} 
Modeling flows of liquid crystal material in $\R^3$ often involves the following objective rates of change \VY{\cite{sonnet2012dissipative}}, \cite{xiao1998objective} 
\begin{align}\label{bulk_corotation}
\partial_t\bq+(\nabla\bq)\bv-\frac{ (\nabla\bv-\nabla^T\bv)}{2}\bq\,,\quad\partial_t\bQ+(\nabla\bQ)\bv+\bQ \frac{\nabla\bv-\nabla^T\bv}{2}-\frac{\nabla\bv-\nabla^T\bv}{2}\bQ
\end{align}
which are the \textit{corotational derivatives}  of a vector field $\bq$ and of a matrix field $\bQ$ along the flow $\bv$. These {corotational derivatives} express the rate of change of tensors with respect to the (Lagrangian) coordinate system embedded in the fluid, and sometimes they should be chosen over the parallel Euclidean transport to model the physics adequately. For example, one uses the parallel Euclidean transport $\partial_t\bu+(\nabla\bu)\bv$ to express the rate of change of the non-material momentum vector $\rho\bu$; while if one works with the rate of change of the Q-tensor, which provides the statistical description of the liquid crystal orientation, the objective rate \eqref{bulk_corotation} should be used. \LB{ We refer to \cite[Section 10]{nitschke2022observer} for discussion of different time derivatives for vectors and $Q$-tensors on surfaces.}

\begin{remark} \label{emb} A zero corotational derivative \eqref{bulk_corotation} of a vector field $\bq$ means that the vector field is embedded in the flow $\bv$ \VY{in the following kinematical sense \cite{sonnet2012dissipative}}: it is transported parallelly in $\mathbb{R}^3$ by the flow $\bv$ and\VY{, in addition,} is rotated along it by the spin tensor $\frac12(\nabla\bv-\nabla^T\bv)$. 
\LB{To make this point concrete, we consider ${\bf X}(t)$ to satisfy $\frac{d}{dt}{\bf X}(t) = v(t,{\bf X}(t))$. Computing the time derivative of $\tilde{\bq}(t)=\bq(t,{\bf X}(t))$ yields
\[
\frac{d}{dt}\tilde{\bq} = \partial_t\bq+(\nabla\bq)\bv=\frac{ (\nabla\bv-\nabla^T\bv)}{2}\bq = \frac{(\nabla\bv-\nabla^T\bv)}{2}\tilde{\bq}.
\]
}
At the same time, a zero corotational derivative \eqref{bulk_corotation} of a Q-tensor field $\bQ$ means that its eigenframe is embedded in the flow $\bv$: the eigenvectors are transported and rotated as vectors \VY{by \eqref{bulk_corotation}} and the eigenvalues are transported as scalars by \eqref{scalar_trans}.
\end{remark}

In this paper the flow $\bv$ is two-dimensional and is tangent to a surface $\Gamma$, but the Q-tensor is three-dimensional. This  raises the question of what rate of change should be utilized to model the passive motion of Q-tensors along the surface flow.

\LB{
It is natural to try the following transport equation
\begin{align}
    \partial_t\bQ+(\nablaM\bQ)\bv+\bQ \frac{\nablaM\bv-\nablaM^T\bv}{2}-\frac{\nablaM\bv-\nablaM^T\bv}{2}\bQ=0\,,\label{Qtrans_M}
\end{align}
 where all the derivatives correspond to \VY{Cartesian} derivatives of the normal extensions of arguments. Unfortunately, the Q-tensor of type $\bn\otimes\bn-\frac13\bI$ is not in the kernel of the operator \eqref{Qtrans_M}. In fact, using \eqref{e-n}, \eqref{surf_dir_der} and Lemma \ref{Gauss} we compute 
\begin{align*}
  0&= (\nabla_M(\bn\otimes\bn))\bv+(\bn\otimes\bn) \frac{\nablaM\bv}{2}+\frac{\nablaM^T\bv}{2}(\bn\otimes\bn)
   \\
   &=
   \sum_{j=1}^n(\bv\cdot\bP\be_j)\partial_j(\bn\otimes\bn)+\frac12\left(  \bn\otimes({\nablaM^T\bv})\bn +(\nablaM^T\bv)\bn\otimes\bn\right)
   \\
   &=(\nablaM\bn)\bv\otimes\bn+(\nablaM\bn)\otimes\bB\bv
 +\frac12\left(  \bn\otimes(-\bB\bv) +(-\bB\bv)\otimes\bn\right)
 =\frac12{}\left(\bB\bv\otimes\bn+\bn\otimes\bB\bv\right)
\end{align*}
whence $\bB=0$ or, equivalently, the surface has to be flat. We have just shown that transport of a 
Q-tensor by \eqref{Qtrans_M} does not keep alignment with respect to the normal direction.
    These considerations suggest that there are many ways to define the transport of a Q-tensor field and that we need to choose one based on some kinematic assumptions similar to Assumption \ref{assum_vel}.
    In fact, to transport a three-dimensional tensor we need a three dimensional spin tensor which cannot be provided by a two-dimensional surface flow. 
    }
    
    At a point $\bx\in\Gamma$ consider an orthonormal basis $\bt_1,\bt_2, \bn$ of $\R^3$ where the first two vectors belong to the tangent plane of $\Gamma$. 
     A general three-dimensional skew-symmetric {\it spin tensor} $\bW$ can be represented in this local basis with the help of a tangent vector $\bw_T=(w_{2}\bt_1-w_{1}\bt_2)$ and a tangent tensor $\bW_T=w_{3}(\bt_2\otimes\bt_1-\bt_1\otimes\bt_2)$:
     \begin{equation}\label{W}
     \begin{aligned}
         \bW&=w_{3}(\bt_2\otimes\bt_1-\bt_1\otimes\bt_2)+w_{2}(\bt_1\otimes\bn-\bn\otimes\bt_1)+w_{1}(\bn\otimes\bt_2-\bt_2\otimes\bn)
          \\
          &=\bP\bW\bP +(w_{2}\bt_1-w_{1}\bt_2)\otimes\bn-\bn\otimes(w_{2}\bt_1-w_{1}\bt_2)=\bW_T + \bw_T\otimes\bn-\bn\otimes\bw_T \, .
     \end{aligned}
     \end{equation}
     
     Motivated by the three-dimensional corotational derivatives \eqref{Qtrans_M} we are in position to determine the structure of the \textit{surface corotational derivatives of Q-tensors}. Consider a vector field $\bv$ tangent to $\Gamma$, a matrix field $\bQ$ and its eigenvector field $\bq$. A surface corotational derivative along the flow $\bv$ can be expected to be given by
     \begin{align}\label{corotation}
\overset{\circ}{\bq}&=\partial_t\bq+(\nablaM\bq)\bv-\bW\bq\,,
\qquad
\overset{\circ}{\bQ}=\partial_t\bQ+(\nablaM\bQ)\bv+(\bQ\bW-\bW\bQ) \, ,
\end{align}
where the spin tensor $\bW$ is yet to be defined via ${\bw_T}$ and ${\bW_T}$ in \eqref{W}. 
For completeness, we define the corotational derivative of a scalar field by its material derivative \eqref{scalar_trans}, $$\overset{\circ}{f}=\dot{f}.$$ We specify $\bW_T$ and $\bw_T$ in \eqref{W} by making the following assumption.

\begin{assumption}[kinematics of Q-tensors]\label{assum_Q}  The tangent vector $\bw_T$ and the tangent spin tensor $\bW_T$ are such that the normal eigenvector $\bq=\bn$ of a conforming Q-tensor  is in the kernel of the passive transport operator \eqref{corotation} along a tangent flow $\bv=\bu_T$. Also, the passive transport of a tangent eigenvector $\bq=\bt$ of a conforming Q-tensor field  is a combination of the parallel transport \eqref{pt} and the instant rotation by the covariant spin tensor \eqref{strain}:
\begin{align*}
    \overset{\circ}{\bn}=0\,,\qquad \overset{\circ}{\bt}=  \partial_t\bt+(\nablaG\bt)\bu_T-\frac12(\nablaG\bu_T-\nablaG^T\bu_T)\bt\,.
\end{align*}
\end{assumption}
In view of Remark \ref{emb}, $\overset{\circ}{\bt}=0$ means that the tangent vector field $\bt$ is also embedded in the flow $\bu_T$ but this time in the sense of the Riemannian structure on $\Gamma$. \VY{This rate of change $\overset{\circ}{\bt}$ of a tangent vector field $\bt$ is known as \textit{surface Jaumann derivative} \cite{nitschke2022observer}.} From this modelling assumption on how the eigenframe of a conforming Q-tensor is transported, we immediately find what should $\bw_T$ and $\bW_T$ be like.

\begin{lemma}[characterization of spin tensor] In order to satisfy Assumption \ref{assum_Q}, the spin tensor $\bW$ in \eqref{corotation} should be given by \eqref{W} with  
\begin{align}\label{r1}
{\bw_T}=\bB\bu_T\,, \qquad {\bW_T}=\frac12(\nablaG{\bu_T}-\nablaG^T{\bu_T})
\end{align}
\end{lemma}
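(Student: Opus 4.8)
The plan is to impose the two conditions of Assumption~\ref{assum_Q} directly on the corotational operators \eqref{corotation} and read off $\bw_T$ and $\bW_T$. First I would handle the normal eigenvector. Take $\bq=\bn$ in the first identity of \eqref{corotation} along $\bv=\bu_T$; since $\Gamma$ is stationary, $\partial_t\bn=0$, and by \eqref{nablaM-normalextension} together with $\bB=\nabla\bn$ we have $(\nablaM\bn)\bu_T=\bB\bu_T$. Using the decomposition \eqref{W}, $\bW\bn=\bW_T\bn+(\bw_T\otimes\bn)\bn-(\bn\otimes\bw_T)\bn=\bw_T$, because $\bW_T=\bP\bW_T\bP$ kills $\bn$, $(\bn\otimes\bw_T)\bn=0$ as $\bw_T$ is tangent, and $(\bw_T\otimes\bn)\bn=\bw_T$. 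Hence $\overset{\circ}{\bn}=\bB\bu_T-\bw_T$, and $\overset{\circ}{\bn}=0$ forces $\bw_T=\bB\bu_T$, which is the first relation in \eqref{r1}.

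Next I would pin down $\bW_T$ using the second condition. Take $\bq=\bt$ a \emph{tangent} eigenvector; I must show the general expression $\overset{\circ}{\bt}=\partial_t\bt+(\nablaM\bt)\bu_T-\bW\bt$ reduces to the surface Jaumann derivative $\partial_t\bt+(\nablaG\bt)\bu_T-\frac12(\nablaG\bu_T-\nablaG^T\bu_T)\bt$ exactly when $\bW_T=\frac12(\nablaG\bu_T-\nablaG^T\bu_T)$ and $\bw_T=\bB\bu_T$. Decompose $(\nablaM\bt)\bu_T$ into its tangent and normal parts: by \eqref{nablaG} the tangent part is $(\nablaG\bt)\bu_T$, and the normal part is $\bn\bigl(\bn\cdot(\nablaM\bt)\bu_T\bigr)$; since $\bn\cdot\bt=0$ on $\Gamma$, differentiating gives $\bn\cdot(\nablaM\bt)\bu_T=-\bt\cdot(\nablaM\bn)\bu_T=-\bt\cdot\bB\bu_T$, i.e. the normal part equals $-(\bt\cdot\bB\bu_T)\bn$. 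On the other side, $\bW\bt=\bW_T\bt+(\bw_T\otimes\bn)\bt-(\bn\otimes\bw_T)\bt=\bW_T\bt-(\bw_T\cdot\bt)\bn$, using $(\bw_T\otimes\bn)\bt=0$. With $\bw_T=\bB\bu_T$ the two normal contributions cancel: $-(\bt\cdot\bB\bu_T)\bn-(-(\bw_T\cdot\bt)\bn)=0$. What remains is $\overset{\circ}{\bt}=\partial_t\bt+(\nablaG\bt)\bu_T-\bW_T\bt$, and comparing with the prescribed surface Jaumann derivative yields $\bW_T\bt=\frac12(\nablaG\bu_T-\nablaG^T\bu_T)\bt$.

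Finally I would argue that this determines $\bW_T$, not merely its action on one vector. Both $\bW_T$ and $\frac12(\nablaG\bu_T-\nablaG^T\bu_T)$ are tangent skew-symmetric operators (of the form $\bP(\cdot)\bP$ with vanishing symmetric part), so each is determined by a single scalar $w_3$ in the local frame $\{\bt_1,\bt_2,\bn\}$ as in \eqref{W}; agreeing on one nonzero tangent vector $\bt$ forces them to agree as operators. This gives the second relation in \eqref{r1}. The only mildly delicate point is the bookkeeping of normal versus tangent components when expanding $(\nablaM\bt)\bu_T$ and $\bW\bt$ — in particular verifying that the choice $\bw_T=\bB\bu_T$ from the first step is precisely what makes the spurious normal terms in the tangent-eigenvector computation cancel, so that the two assumptions are mutually consistent rather than overdetermined.
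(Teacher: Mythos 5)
Your proof is correct and follows essentially the same route as the paper: evaluate $\overset{\circ}{\bn}=0$ to obtain $\bw_T=\bB\bu_T$, then split $(\nablaM\bt)\bu_T$ and $\bW\bt$ into tangent and normal parts for a tangent eigenvector $\bt$, note that the normal contributions cancel precisely because $\bw_T=\bB\bu_T$ and $\bB$ is symmetric, and match the remaining tangent part with the surface Jaumann derivative to conclude $\bW_T=\bW_\Gamma(\bu_T)$. Your closing observation that a tangent skew-symmetric operator is determined by its action on a single nonzero tangent vector is left implicit in the paper, so it only adds rigor.
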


\begin{proof}Indeed, using the structure of \eqref{corotation} and the definition of the shape operator $\bB=\nablaM\bn$ we compute
    \begin{align*}
\overset{\circ}{\bn}&=\partial_t\bn+(\nablaM\bn)\bu_T-\bW\bn=\bB\bu_T-\bw_T
\end{align*}
and, because $\overset{\circ}{\bn}$ should vanish, we have to have $\bw_T=\bB\bu_T$ in \eqref{W}. Similarly, using Definition~\ref{vector_transp}, we compute
\begin{align*}
\overset{\circ}{\bt}&=\partial_t\bt+(\nablaM\bt)\bu_T-\bW\bt= \partial_t\bt+(\nablaG\bt)\bu_T+\bN(\nabla\bt)\bu_T-\bW\bt
\\
&=\dot{\bt}+\bN(\nabla\bt)\bu_T-\bW_T\bt +(\bw_T\cdot\bt)\bn=\dot{\bt}-\bW_T\bt-(\bn\otimes\bB\bt)\bu_T +(\bw_T\cdot\bt)\bn
=\dot{\bt}-\bW_T\bt
\end{align*}
which leads to the choice $\bW_T=\bW_\Gamma(\bu_T)$ to suit Assumption \ref{assum_Q}. Consequently, the spin tensor \eqref{W} can be expressed in terms of the covariant spin tensor $\bW_\Gamma$ from \eqref{strain} or of the \VY{external} spin tensor from $\bW_M$ \eqref{strainM} as follows (using \eqref{skewBu} in the last equality)
 \begin{align}\label{WM-W}
         \bW&=\bW_\Gamma({\bu_T})+\bB{\bu_T}\otimes\bn-\bn\otimes\bB{\bu_T}=\bW_\Gamma({\bu_T}) + \bW_*({\bu_T})=\bW_M({\bu_T}) + \frac12\bW_*({\bu_T})
\end{align}
where the \textit{star spin tensor} $\bW_*$ is defined for a given velocity $\bu_T$ by
\begin{align}\label{star_spin}
\bW_*({\bu_T}) := \bB{\bu_T}\otimes\bn-\bn\otimes\bB{\bu_T}\,.
\end{align}
This concludes the proof.
\end{proof}
Based on the Assumption \ref{assum_Q} (kinematics of Q-tensors) on how the eigenframe of a conforming Q-tensor is transported in the absence of any forces we define the surface corotational derivative of \textit{general} Q-tensors in the following definition.
\begin{definition}[\VY{external surface corotational
derivative}] \label{circle_def} The \textit{surface corotational derivatives} $\overset{\circ}{\bq}$ of a vector field $\bq:\Gamma\rightarrow\R^3$ and $\overset{\circ}{\bQ}$  of a matrix field $\bQ:\Gamma\rightarrow\R^{3\times3}$ along a tangent vector field $\bv$ are given by \looseness=-1
\begin{align}\label{corotation_defintiion}
\overset{\circ}{\bq}&=\partial_t\bq+(\nablaM\bq)\bv-\left(\bW_\Gamma(\bv)+\bW_*(\bv)\right)\bq\,,\\
\label{corotation_defintiion2}
\overset{\circ}{\bQ}&=\partial_t\bQ+(\nablaM\bQ)\bv+\bQ\left(\bW_\Gamma(\bv)+\bW_*(\bv)\right)-\left(\bW_\Gamma(\bv)+\bW_*(\bv)\right)\bQ
\end{align}
\end{definition}
Clearly, the structure of \VY{Cartesian} corotational derivative \eqref{bulk_corotation} can be recognized in Definition \ref{circle_def} but  the special spin tensor \eqref{WM-W} is used because the domain of definition of all objects is a surface, and the flow $\bv$ is two-dimensional while the Q-tensor is three-dimensional.

 \subsection{Properties of the surface corotational derivative}
 
 In this section we characterize the surface corotational derivative $\overset{\circ}{\bq}$ of general vector fields $\bq$ using the splitting \eqref{u_decomp}. We also explain the structure of the surface corotational derivative $\overset{\circ}{\bQ}$ of a Q-tensor $\bQ$ by relating the passive transport equation $\overset{\circ}{\bQ}=0$ to the passive transport equations $\overset{\circ}{\bq}_i=0$, $\overset{\circ}{\lambda}_i=0$ of the eigenvectors and eigenvalues of $\bQ$.
 
 We first present some intuitive properties of the vector \eqref{corotation_defintiion} and the matrix \eqref{corotation_defintiion2} surface corotational derivatives. It is shown that if a vector field $\bq=q_N\bn+\bq_T$ satisfies the passive transport equation $\overset{\circ}{\bq}=0$ then the normal component $q_N$ is transported by \eqref{scalar_trans} as a scalar field while the tangent component  $\bq_T$ undergoes a combination of the parallel transport \eqref{pt} and the instant rotation by the covariant spin tensor \eqref{strain} (the tangent component is embedded in the tangent two-dimensional flow). At the same time, we recover the usual meaning of the  corotational derivative of matrices but for the case of a surface
 (see Remark \ref{emb}): if $\overset{\circ}{\bQ}=0$ then the eigenvalues and the eigenvectors are embedded (in the sense described above) into the flow along a surface. 
 
 We start with the basic properties of corotational derivatives in the next lemma.
 
\begin{lemma}[properties of corotational derivatives]\label{co_properties}
 The surface corotational derivative \eqref{corotation_defintiion} along a tangent flow $\bv$ has the following distributive properties for vector fields $\ba, \bb$ and matrix field $\bA$
 \begin{align*}
      \overset{\circ}{(\ba\cdot\bb)}&=\overset{\circ}{\ba}\cdot\bb+\overset{\circ}{\bb}\cdot\ba\,,
   \quad
   \overset{\circ}{(f\bA)}=\dot{f}\bA+f\overset{\circ}{\bA}
   \,,
   \quad
    \overset{\circ}{(\ba\otimes\bb)}=\overset{\circ}{\ba}\otimes\bb+{\ba}\otimes\overset{\circ}{\bb}\,.
 \end{align*}
\end{lemma}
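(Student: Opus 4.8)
The three identities are all "Leibniz rules" for the surface corotational derivative defined in \eqref{corotation_defintiion}–\eqref{corotation_defintiion2}, so the approach is uniform: expand each side using the definition, discard the terms that cancel by the ordinary Leibniz rules for $\partial_t$ and for the external derivative $\nablaM$ (already implicit in Proposition~\ref{appendix} on product rules), and check that the remaining spin-tensor contributions match. Recall that $\overset{\circ}{\bq}=\partial_t\bq+(\nablaM\bq)\bv-\bW\bq$ and $\overset{\circ}{\bQ}=\partial_t\bQ+(\nablaM\bQ)\bv+\bQ\bW-\bW\bQ$, where $\bW=\bW_\Gamma(\bv)+\bW_*(\bv)$ is skew-symmetric; this skew-symmetry is the only structural fact about $\bW$ that the proof needs, so I would state it once at the outset and reuse it.

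For the first identity, $\overset{\circ}{(\ba\cdot\bb)}=\dot{(\ba\cdot\bb)}=\partial_t(\ba\cdot\bb)+(\nablaM^T(\ba\cdot\bb))\bv$ since a scalar's corotational derivative is its material derivative. Expanding via the product rules gives $\partial_t\ba\cdot\bb+\ba\cdot\partial_t\bb+((\nablaM\ba)\bv)\cdot\bb+\ba\cdot((\nablaM\bb)\bv)$. On the other hand $\overset{\circ}{\ba}\cdot\bb+\overset{\circ}{\bb}\cdot\ba$ produces exactly those four terms plus $-(\bW\ba)\cdot\bb-(\bW\bb)\cdot\ba$, and these last two cancel because $\bW^T=-\bW$, i.e. $(\bW\ba)\cdot\bb=-\ba\cdot(\bW\bb)$. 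For the second identity, $\overset{\circ}{(f\bA)}=\partial_t(f\bA)+(\nablaM(f\bA))\bv+(f\bA)\bW-\bW(f\bA)$; the $\partial_t$ and $\nablaM$ pieces split by the Leibniz rule into $\dot f\,\bA+f\,\partial_t\bA+f\,(\nablaM\bA)\bv$ (using that $\dot f=\partial_t f+(\nablaM^Tf)\bv$ and that $f$ is scalar so it passes through the contraction with $\bv$), and the spin terms are simply $f(\bA\bW-\bW\bA)$; collecting gives $\dot f\,\bA+f\overset{\circ}{\bA}$. The third identity is handled the same way: $\overset{\circ}{(\ba\otimes\bb)}=\partial_t(\ba\otimes\bb)+(\nablaM(\ba\otimes\bb))\bv+(\ba\otimes\bb)\bW-\bW(\ba\otimes\bb)$, where the elementary identities $(\ba\otimes\bb)\bW=\ba\otimes(\bW^T\bb)=-\ba\otimes(\bW\bb)$ and $\bW(\ba\otimes\bb)=(\bW\ba)\otimes\bb$ convert the two commutator-type terms into $-\ba\otimes\bW\bb-(\bW\ba)\otimes\bb$, exactly the spin contributions appearing in $\overset{\circ}{\ba}\otimes\bb+\ba\otimes\overset{\circ}{\bb}$.

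The only mild subtlety — and the closest thing to an obstacle — is bookkeeping the contraction $(\nablaM(\cdot))\bv$ through the algebraic operations: one must confirm that $(\nablaM(\ba\cdot\bb))\bv=((\nablaM\ba)\bv)\cdot\bb+\ba\cdot((\nablaM\bb)\bv)$, that $(\nablaM(f\bA))\bv=((\nablaM^Tf)\bv)\bA+f((\nablaM\bA)\bv)$, and the analogous rule for $\nablaM(\ba\otimes\bb)$. These are precisely the product rules for the external derivative recorded in the appendix (Proposition~\ref{appendix}), applied to normal extensions and then restricted to $\Gamma$; invoking them, the proof reduces to the three short cancellations above. I would therefore write the argument compactly, proving the first identity in full and remarking that the other two follow identically, using skew-symmetry of $\bW$ and the product rules for $\nablaM$.
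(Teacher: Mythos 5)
Your proposal is correct and follows essentially the same route as the paper's proof: expand each corotational derivative by its definition, apply the product rules of Proposition~\ref{appendix} to the $\partial_t$ and $\nablaM$ terms, and cancel or collect the spin contributions using only the skew-symmetry of $\bW=\bW_\Gamma(\bv)+\bW_*(\bv)$ together with the dyadic identities $(\ba\otimes\bb)\bW=-\ba\otimes\bW\bb$ and $\bW(\ba\otimes\bb)=\bW\ba\otimes\bb$. No gaps; the bookkeeping of $(\nablaM(\cdot))\bv$ through the products that you flag is exactly what the paper handles by citing the same product rules.
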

\begin{proof} The properties follow from Lemma \ref{appendix}, \eqref{nablam}, \eqref{WM-W} and $\bv=\bP\bv$:
\begin{align*}
      \overset{\circ}{(\ba\cdot\bb)}&=\partial_t\ba\cdot\bb+\ba\cdot\partial_t\bb+\bv\cdot((\nablaM\ba)^T\bb+(\nablaM\bb)^T\ba)
   =(\partial_t\ba+(\nablaM\ba)\bv)\cdot\bb+\ba\cdot(\partial_t\bb+(\nablaM\bb)\bv)
    \\
   &=(\overset{\circ}{\ba}+\bW\ba)\cdot\bb+(\overset{\circ}{\bb}+\bW\bb)\cdot\ba=\overset{\circ}{\ba}\cdot\bb+\overset{\circ}{\bb}\cdot\ba+\ba\cdot(\bW+\bW^T)\bb=\overset{\circ}{\ba}\cdot\bb+\overset{\circ}{\bb}\cdot\ba
   \\
   \overset{\circ}{(f\bA)}&=\pa_t(f\bA)+\nablaM(f\bA)\bv + (f\bA)\bW-\bW(f\bA)   \\
   &=\pa_tf\bA+f\pa_t\bA+(f\nablaM\bA+\bA\otimes\nablaM f)\bv+f(\bA\bW-\bW\bA)
=\dot{f}\bA+f\overset{\circ}{\bA}
   \\
    \overset{\circ}{(\ba\otimes\bb)}&=\pa_t\ba\otimes\bb+\ba\otimes\pa_t\bb+\nablaM(\ba\otimes\bb)\bv+(\ba\otimes\bb)\bW-\bW(\ba\otimes\bb)
    \\
    &=(\pa_t\ba+(\nablaM\ba)\bv)\otimes\bb+\bb\otimes(\pa_t\bb+(\nablaM\bb)\bv)-\ba\otimes\bW\bb-\bW\ba\otimes\bb=\overset{\circ}{\ba}\otimes\bb+{\ba}\otimes\overset{\circ}{\bb}
\end{align*}
This concludes the proof.
\end{proof}
Assumption \ref{assum_Q} (kinematics of Q-tensors) dictates how the eigenframe of a conforming Q-tensor field is transported by a tangent flow. In the following lemma we characterize the passive transport $\overset{\circ}{\bq}=0$ of an eigenvector which is neither normal nor tangent to $\Gamma$.

\begin{lemma}[corotational derivative of a vector]\label{charact_corot}
The surface corotational derivative $\overset{\circ}{\bq}$ of a vector field $\bq:\Gamma\to\R^3$ is a vector field with the following normal and tangent components:
   \begin{align*}
      \bN\overset{\circ}{\bq}= \dot{q}_N\bn\,,\quad \bP\overset{\circ}{\bq}=\dot{\bq}_T-\bW_\Gamma\bq_T
\end{align*}
\end{lemma}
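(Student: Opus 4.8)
The plan is to compute $\overset{\circ}{\bq}$ directly from Definition~\ref{circle_def}, writing $\bq = q_N\bn + \bq_T$ with $q_N = \bq\cdot\bn$ and $\bq_T = \bP\bq$, and then project the result onto the normal and tangent subspaces. This is essentially the vector analogue of the momentum computation \eqref{cons_mom}, and the ``kinematics of Q-tensors'' assumption was engineered precisely so that the normal eigenvector is in the kernel, so I expect the normal component to collapse to a scalar transport.

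\textbf{Normal component.} Starting from \eqref{corotation_defintiion}, I would split $\partial_t\bq = \partial_t q_N \,\bn + \partial_t\bq_T$ (using $\partial_t\bn = 0$ since $\Gamma$ is stationary) and expand $(\nablaM\bq)\bv = (\nablaM(q_N\bn))\bv + (\nablaM\bq_T)\bv = q_N(\nablaM\bn)\bv + (\nablaM^T q_N\cdot\bv)\bn + (\nablaM\bq_T)\bv = q_N\bB\bv + (\nablaM^T q_N\cdot\bv)\bn + (\nablaM\bq_T)\bv$, invoking the product rules of Proposition~\ref{appendix}. For the spin term, using \eqref{WM-W} write $\bW = \bW_\Gamma(\bv) + \bW_*(\bv)$ and note $\bW_*(\bv)\bq_T = (\bB\bv\otimes\bn - \bn\otimes\bB\bv)\bq_T = -(\bB\bv\cdot\bq_T)\bn$ while $\bW_*(\bv)(q_N\bn) = q_N\bB\bv$, and $\bW_\Gamma(\bv)$ is tangent so $\bW_\Gamma(\bv)(q_N\bn)=0$. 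Applying $\bN = \bn\otimes\bn$ to the full expression: the $q_N\bB\bv$ terms (one from $(\nablaM\bq)\bv$, one from $\bW q$) cancel, the tangent pieces $(\nablaM\bq_T)\bv$ and $\bW_\Gamma(\bv)\bq_T$ may have normal parts but by the computation in the proof of the characterization lemma (namely $\bN(\nablaM\bq_T)\bv = -(\bv\cdot\bB\bq_T)\bn$, cf.\ \eqref{cons_mom}) the normal part of $(\nablaM\bq_T)\bv$ is exactly $-(\bv\cdot\bB\bq_T)\bn$, which is killed by the $-\bW_*(\bv)\bq_T = +(\bB\bv\cdot\bq_T)\bn$ term using symmetry of $\bB$. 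What survives is $\partial_t q_N\,\bn + (\nablaM^T q_N\cdot\bv)\bn = \dot q_N\,\bn$ by \eqref{scalar_trans}, giving $\bN\overset{\circ}{\bq} = \dot q_N\,\bn$.

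\textbf{Tangent component.} Applying $\bP$ to the same expansion: $\bP\partial_t\bq = \partial_t\bq_T$; $\bP(\nablaM\bq)\bv$ picks up $\bP(\nablaM\bq_T)\bv = (\nablaG\bq_T)\bv$ by \eqref{nablaG}, plus $\bP(q_N\bB\bv) = q_N\bB\bv$ (since $\bB\bv$ is tangent) plus $\bP((\nablaM^T q_N\cdot\bv)\bn) = 0$; and $-\bP\bW\bq = -\bW_\Gamma(\bv)\bq_T - \bP\bW_*(\bv)(q_N\bn) = -\bW_\Gamma(\bv)\bq_T - q_N\bB\bv$ (the $-(\bB\bv\cdot\bq_T)\bn$ piece from $\bW_*(\bv)\bq_T$ is normal, hence $\bP$-annihilated). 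The two $q_N\bB\bv$ terms cancel, leaving $\bP\overset{\circ}{\bq} = \partial_t\bq_T + (\nablaG\bq_T)\bv - \bW_\Gamma(\bv)\bq_T = \dot\bq_T - \bW_\Gamma(\bv)\bq_T$, where $\dot\bq_T = \partial_t\bq_T + (\nablaG\bq_T)\bv$ is the surface material derivative from Definition~\ref{vector_transp}.

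\textbf{Main obstacle.} The only delicate point is the bookkeeping of the normal parts generated by $(\nablaM\bq_T)\bv$ and the cross terms $\bW_*(\bv)\bq_T$, $\bW_*(\bv)(q_N\bn)$; one must use $\bN(\nablaM\bw_T)\bv = -(\bv\cdot\bB\bw_T)\bn$ (as already established in \eqref{cons_mom} and Proposition~\ref{appendix}) together with the symmetry of the shape operator $\bB$ to see the cancellations. Everything else is a direct substitution of the decomposition \eqref{WM-W} and the definitions of $\nablaG$, $\bW_\Gamma$, $\bW_*$, and the product rules. I would present the computation in two short displays, one for $\bN\overset{\circ}{\bq}$ and one for $\bP\overset{\circ}{\bq}$, closely mirroring \eqref{cons_mom} and the proof of the spin-tensor characterization lemma.
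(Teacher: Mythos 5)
Your proposal is correct and follows essentially the same route as the paper's proof: decompose $\bq=\bq_T+q_N\bn$, expand $(\nablaM\bq)\bv$ with the product rules, split $\bW=\bW_\Gamma(\bv)+\bW_*(\bv)$, and use $\bN(\nablaM\bq_T)\bv=-(\bB\bq_T\cdot\bv)\bn$ (Gauss--Weingarten, via \eqref{Bu}) together with the symmetry of $\bB$ to obtain the cancellations. The only difference is presentational -- you project onto $\bN$ and $\bP$ separately, while the paper carries out one combined computation -- so no gap remains.
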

  \begin{proof} We recall Definition \ref{vector_transp} and $\eqref{scalar_trans}$. Since $\bv=\bP\bv$, we compute
\begin{align*}
\overset{\circ}{\bq}&=\partial_t(\bq_T+q_N\bn)+(\nablaM\bq_T+\bn\otimes\nabla_\Gamma{}q_N+q_N\bB)\bv-\bW_\Gamma\bq_T - q_N\bB\bv+(\bB\bv\cdot\bq_T)\bn
\\
&=(\partial_t\bq_T+\bP(\nablaM\bq_T)\bv)+(\partial_t{}q_N+(\nabla_\Gamma{}q_N)\cdot\bv)\bn+(\bN\nablaM\bq_T)\bv-\bW_\Gamma\bq_T +(\bB\bv\cdot\bq_T)\bn
\\
&=\bP(\dot{\bq}_T-\bW_\Gamma\bq_T)+(\dot{q}_N+\bn\cdot(\nablaM\bq_T)\bv+\bB\bv\cdot\bq_T)\bn=\bP(\dot{\bq}_T-\bW_\Gamma\bq_T)+\bN(\dot{q}_N\bn)
     \end{align*}  
where we used symmetry of $\bB$, $\bW_\Gamma=\bP\bW_\Gamma\bP$ and \eqref{Gauss} in the last step.
\end{proof}

\begin{remark}
    The passive transport $\overset{\circ}{\bq}=0$ of vector fields by the surface corotational derivative given in  Definition \ref{circle_def} has the following properties. If at the initial moment of time a vector field $\bq$ is tangent, then it remains to be tangent along the passive flow by a vector field $\bv$. The tangent component $\bq_T$ is subjected to the parallel transport \eqref{pt} and the instant rotation by the covariant spin tensor \eqref{strain} embedded in the flow along $\bv$). Also, the normal component $\bq\cdot\bn$  of a non-tangent vector field $\bq$  is transported by \eqref{scalar_trans} as a scalar field.
\end{remark}

Finally, we characterize the corotational transport  \eqref{corotation_defintiion2} of a general Q-tensor, namely non-conforming to $\Gamma$, via its eigenframe: the eigenvectors and the eigenvalues of a Q-tensor are embedded into the flow and are passively transported in the sense of the $\circ$ operator.
 
 \begin{theorem}[corotational derivative of a  tensor]\label{transport_decomp_Q}
 Given 
 a symmetric matrix field $\bQ\in{}C^1(\Gamma\times(0,T))^{3\times3}$ consider a point $\bx\in\Gamma$ and its neighborhood $U(\bx)\subset\Gamma$ such that 
there exists a spectral decomposition \eqref{spectral} with eigenvalues $\lambda_i\in{}C^1(U(\bx)\times(0,T))$ and the corresponding unit-length eigenvectors $\bq_i\in{}C^1(U(\bx)\times(0,T))^3$, $i=1, 2, 3$. If in $U(\bx)\times(0,T)$
\begin{itemize}
    \item all eigenvalues are distinct then
\begin{align*}
    \overset{\circ}{\bQ}=0\qquad \Longleftrightarrow{}\qquad  \overset{\circ}{\bq}_i=0\, ,\quad{}\dot{\lambda}_i=0\,,\quad i=1, 2, 3;
\end{align*}
    \item two eigenvalues are equal but distinct from the third one with eigenvector $\bq_m$ then
\begin{align*}
    \overset{\circ}{\bQ}=0\qquad \Longleftrightarrow{}\qquad  \overset{\circ}{\bq}_m=0\, ,\quad{}\dot{\lambda}_i=0\,,\quad i=1, 2, 3;
\end{align*}
   \item all eigenvalues are equal then
\begin{align*}
    \overset{\circ}{\bQ}=0\qquad \Longleftrightarrow{}\qquad \dot{\lambda}_i=0\, ,\quad{}i=1, 2, 3\,.
\end{align*}
\end{itemize}
\end{theorem}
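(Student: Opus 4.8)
The plan is to differentiate the spectral decomposition \eqref{spectral} term by term with the product rules of Lemma \ref{co_properties}, expand the result in the moving orthonormal frame $\{\bq_1,\bq_2,\bq_3\}$, and read off the three equivalences from the linear independence of the rank-one tensors $\bq_i\otimes\bq_j$. As a preliminary step I would record two elementary facts about the vector corotational derivatives $\overset{\circ}{\bq}_i$. Since $\bq_i\cdot\bq_j=\delta_{ij}$ is constant on $U(\bx)\times(0,T)$, the scalar/inner-product rules of Lemma \ref{co_properties} give $\overset{\circ}{\bq}_i\cdot\bq_j+\bq_i\cdot\overset{\circ}{\bq}_j=0$; in particular $\overset{\circ}{\bq}_i\cdot\bq_i=0$. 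Writing $\omega_{ij}:=\overset{\circ}{\bq}_i\cdot\bq_j$, this says the array $\omega$ is antisymmetric ($\omega_{ii}=0$, $\omega_{ij}=-\omega_{ji}$), and since $\{\bq_j\}$ is an orthonormal basis of $\R^3$ at each point one has $\overset{\circ}{\bq}_i=\sum_j\omega_{ij}\bq_j$.

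Next I would apply the product rules of Lemma \ref{co_properties} to $\bQ=\sum_i\lambda_i(\bq_i\otimes\bq_i)$ to obtain
\[
\overset{\circ}{\bQ}=\sum_i\dot\lambda_i(\bq_i\otimes\bq_i)+\sum_i\lambda_i\big(\overset{\circ}{\bq}_i\otimes\bq_i+\bq_i\otimes\overset{\circ}{\bq}_i\big),
\]
substitute $\overset{\circ}{\bq}_i=\sum_j\omega_{ij}\bq_j$, and collect the coefficient of $\bq_i\otimes\bq_j$ for $i\neq j$: it receives $\lambda_i\omega_{ij}$ from the $i$-th term and $\lambda_j\omega_{ji}=-\lambda_j\omega_{ij}$ from the $j$-th term, while the diagonal part is unchanged because $\omega_{ii}=0$. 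Using antisymmetry of $\omega$ this collapses to
\[
\overset{\circ}{\bQ}=\sum_i\dot\lambda_i(\bq_i\otimes\bq_i)+\sum_{i\neq j}(\lambda_i-\lambda_j)\,\omega_{ij}\,(\bq_i\otimes\bq_j).
\]
Since the nine tensors $\bq_i\otimes\bq_j$ are linearly independent, $\overset{\circ}{\bQ}=0$ is equivalent to the algebraic system $\dot\lambda_i=0$ for all $i$ together with $(\lambda_i-\lambda_j)\omega_{ij}=0$ for all $i\neq j$. Both directions of each claimed equivalence will then be read directly off this single identity.

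It remains to translate the algebraic system into the three cases. If the $\lambda_i$ are pairwise distinct, $(\lambda_i-\lambda_j)\omega_{ij}=0$ forces $\omega_{ij}=0$, hence $\overset{\circ}{\bq}_i=\sum_j\omega_{ij}\bq_j=0$; conversely $\overset{\circ}{\bq}_i=0$ for all $i$ forces every $\omega_{ij}=0$ and hence kills the off-diagonal part, while $\dot\lambda_i=0$ kills the diagonal part. If two eigenvalues coincide and $\lambda_m$ is the remaining simple one, the constraints reduce to $\omega_{am}=\omega_{bm}=0$ for the two indices $a,b$ of the repeated eigenvalue (the quantity $\omega_{ab}$ is unconstrained), and since $\overset{\circ}{\bq}_m=\omega_{ma}\bq_a+\omega_{mb}\bq_b=-\omega_{am}\bq_a-\omega_{bm}\bq_b$ this is precisely $\overset{\circ}{\bq}_m=0$; the converse is the same bookkeeping run backwards. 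If all eigenvalues coincide, every factor $\lambda_i-\lambda_j$ vanishes, so the only surviving constraint is $\dot\lambda_i=0$, which together with $\dot\lambda_i=0\Rightarrow\overset{\circ}{\bQ}=0$ closes this case.

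I expect no genuine analytic obstacle here; the work is bookkeeping supported entirely by Lemma \ref{co_properties}. The one point that must be handled with care is the degenerate case: the eigenvectors $\bq_a,\bq_b$ attached to a repeated eigenvalue are only determined up to rotation within their common plane, so one cannot expect $\overset{\circ}{\bq}_a$ or $\overset{\circ}{\bq}_b$ to vanish individually, and the argument must isolate exactly the frame-independent combination $\overset{\circ}{\bq}_m$ — which is precisely what the surviving constraints $\omega_{am}=\omega_{bm}=0$ provide. (One also uses implicitly that the hypothesis guarantees a $C^1$ choice of such a frame on $U(\bx)\times(0,T)$, so that the derivatives $\overset{\circ}{\bq}_i$, $\dot\lambda_i$ appearing above are well defined.)
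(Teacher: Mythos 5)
Your proof is correct and follows essentially the same route as the paper: differentiate the spectral decomposition with Lemma \ref{co_properties}, use the antisymmetry of $\omega_{ij}=\overset{\circ}{\bq}_i\cdot\bq_j$, and read the equivalences off the coefficients in the eigenframe, which is the same linear algebra as the paper's contractions $\bq_j\cdot\overset{\circ}{\bQ}\bq_i=(\lambda_i-\lambda_j)\bq_j\cdot\overset{\circ}{\bq}_i$ and $\bq_i\cdot\overset{\circ}{\bQ}\bq_i=\dot{\lambda}_i$. The only cosmetic difference is the repeated-eigenvalue case, where the paper rewrites $\bQ=\lambda\bI+(\lambda_m-\lambda)\bq_m\otimes\bq_m$ while you keep the full frame and note that $\omega_{ab}$ is unconstrained; both correctly isolate the frame-independent condition $\overset{\circ}{\bq}_m=0$.
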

\begin{remark}
Three cases appear in the statement because eigenvectors are not determined uniquely if some of the eigenvalues coincide. For example, in the case of two equal eigenvalues, one can discuss the transport of the corresponding planar eigenspace, but eigenvectors that form this eigenspace may undergo arbitrary deformations without affecting $\overset{\circ}{\bQ}=0$.
\end{remark}

\begin{proof} We start with the case of distinct eigenvalues. The spectral decomposition $\bQ(\bx)=\sum_{k=1}^3\lambda_k(\bx)(\bq_k(\bx)\otimes\bq_k(\bx))$ holds for all $\bx\in U(\bx)$ and $\bq_1(\bx), \bq_2(\bx), \bq_3(\bx)$ form an orthonormal basis. We take the corotational derivative and apply its properties from Lemma \ref{co_properties}:
\begin{align*}
  \overset{\circ}{\bQ}&=   \sum_{k=1}^3\left(\bq_k\otimes\bq_k\right)\dot{\lambda}_k+\lambda_k\left(\overset{\circ}{\bq_k}\otimes\bq_k+\bq_k\otimes\overset{\circ}{\bq_k}\right)
  \end{align*}
  from where the sufficiency follows immediately. 
  To show the necessity we contract the result with $\bq_i$ from the right and then with $\bq_j$ from the left:
  \begin{align*}
  \overset{\circ}{\bQ}\bq_i&=   \lambda_i\overset{\circ}{\bq}_i+\dot{\lambda}_i\bq_i+\sum_{k=1 ,k\neq{}i}^3\lambda_k(\overset{\circ}{\bq_k}\cdot\bq_i)\bq_k
  \\
  \bq_j\cdot\overset{\circ}{\bQ}\bq_i&=   \lambda_i\bq_j\cdot\overset{\circ}{\bq}_i+\dot{\lambda}_i\bq_j\cdot\bq_i+\sum_{k=1 ,k\neq{}i}^3\lambda_k(\overset{\circ}{\bq_k}\cdot\bq_i)(\bq_j\cdot\bq_k)
  \end{align*}
  where we used the identity $(\overset{\circ}{\bq_j}\cdot\bq_i)+(\overset{\circ}{\bq_i}\cdot\bq_j)=0$. Consider the diagonal, $i=j$, and off-diagonal, $i\neq{}j$, contractions separately:
  \begin{align*}
   \bq_i\cdot\overset{\circ}{\bQ}\bq_i&=   \dot{\lambda}_i\,,\qquad{} \bq_j\cdot\overset{\circ}{\bQ}\bq_i=   (\lambda_i-\lambda_j)\bq_j\cdot\overset{\circ}{\bq}_i
   \end{align*}
   If $\overset{\circ}\bQ=0$ and $\lambda_i\neq{}\lambda_j$ then $\dot{\lambda}_i=0$ and all the projections of $\overset{\circ}{\bq}_i$ on the basis vectors $\bq_1,\bq_2,\bq_3$ are zero.
   
   The case of two equal eigenvalues $\lambda_i=\lambda_j=\lambda$ is similar. We rearrange the spectral decomposition
\begin{align*}
    \bQ&=\lambda_m(\bq_m\otimes\bq_m)+\lambda(\bq_i\otimes\bq_i +\bq_j\otimes\bq_j)
    =\lambda_m(\bq_m\otimes\bq_m)+\lambda(\bI-\bq_m\otimes\bq_m)
    \\&=\lambda\bI+(\lambda_m-\lambda)\bq_m\otimes\bq_m
\end{align*}
compute the corotational derivative and contract it with $\bq_m$ from the right and from the left
\begin{align*}
  \overset{\circ}{\bQ}&=   \dot{\lambda}\bI+(\dot{\lambda}_m-\dot{\lambda})(\bq_m\otimes\bq_m)+(\lambda_m-\lambda)\left(\overset{\circ}{\bq}_m\otimes\bq_m+\bq_m\otimes\overset{\circ}{\bq}_m\right)
    \\
   \overset{\circ}{\bQ}\bq_m&=   \dot{\lambda}_m\bq_m+(\lambda_m-\lambda)\overset{\circ}{\bq}_m
   \,,\qquad\bq_m\overset{\circ}{\bQ}\bq_m=   \dot{\lambda}_m
    \end{align*} 
   which shows the equivalence. The case of three equal eigenvalues is trivial.
   \end{proof}
  
  \begin{remark}
  The preceding surface corotational derivatives $\overset{\circ}{\bq}$ and $\overset{\circ}{\bQ}$ have a physically intuitive explanation: they correspond to the three-dimensional,  \VY{Cartesian} corotational derivatives \eqref{bulk_corotation} along the special \textit{rotational extension}  $\bv^r$ of a surface flow $\bv$ from the surface $\Gamma$ to a bulk three-dimensional neighborhood of it. This rotational extension $\bv^r$  is described below.
  
 Consider the tangential projector $\bP$, the normal projector $\bN$ and the shape operator $\bB$ in the basis of principal directions  $\bt_1\,,\bt_2\,,\bn$ of $\Gamma$:
\begin{align*}
\bP&=\bt_i\otimes\bt_i\,,\quad \bN=\bn\otimes\bn\,,\quad\bB=\nabla_\Gamma\bn=\kappa_i\bt_i\otimes\bt_i
\end{align*}
where $\kappa_i$ are principle curvature fields on $\Gamma$. We may extend  a tangent velocity field $\bv=v^i\bt_i$ from $\Gamma$ to $\Omega_\delta$ in the normal direction \textit{constantly} or \textit{rotationally}:
\begin{align*}
    \bv^e&=(v^{i})^e\bt_i^e\,,\qquad{}
    \bv^r=(1+d\kappa_i^e)(v^{i})^e\bt^e_i=\bv^e+d\,\bB^e\bv^e
     \end{align*}
where $d$ is the signed distance to $\Gamma$. In other words, if a surface velocity $\bv$ is aligned with a principle direction, then its rotational extension $\bv^r$ changes linearly away from the center of curvature; if the velocity has two nonzero components along the surface principle directions, then the rotational extensions act on these components separately. \LB{ Using Proposition \ref{appendix} (product rules), the bulk gradient of $\bv^r$ evaluated on $\Gamma$ is
\[ \nabla\bv^r = \nabla \bv^e + (\bB^e\bv^e)\otimes \nabla d  +d \nabla(\bB^e\bv^e) = \nabla_M \bv +(\bB\bv)\otimes \bn .\]
Hence, $\frac{1}{2}(\nabla \bv^r - \nabla^T \bv^r)$ on $\Gamma$ is $\frac{1}{2}(\nabla \bv^r - \nabla^T \bv^r) = \bW_M +\bW_* = \bW.$}
Essentially, the surface corotational derivatives \eqref{corotation_defintiion} and \eqref{corotation_defintiion2} provide the same rate of change as the bulk corotational derivatives \eqref{bulk_corotation} evaluated on $\Gamma$ in which $\bv$ is set as the rotational extension $\bv^r$. \LB{ Although the surface corotational derivatives \eqref{corotation_defintiion} and \eqref{corotation_defintiion2} are independent of the normal extension, the bulk corotational derivatives \eqref{bulk_corotation} do depend on the extension, and $\bv^r$ is one particular choice of bulk extension such that \eqref{bulk_corotation} coincides with \eqref{corotation_defintiion} and \eqref{corotation_defintiion2}.} One reason it is called the rotational extension is that this extension is more physical for rotational velocity fields. For example, if we consider $\Gamma$ to be the unit circle with prescribed velocity $\bv = (\cos\theta, \sin\theta)$ on $\Gamma$, where $\theta$ is the polar angle, then $\Div{\bv^e} \neq0$ while $\Div{\bv^r} =0$ in this special case in two dimensions. Although $\Div{\bv^r} =0$ will not hold for more general surfaces, the above reasoning explains why $\bv^r$ is called the rotational extension.
  \end{remark}

 \section{Derivation of surface Beris--Edwards model} \label{Onsager}

We develop a model of fluidic liquid crystal films following \cite{yang2016hydrodynamic},\cite{unpublishedWang}, \cite{eck2009phase} and \cite{nochetto2014diffuse}. \VY{The modeling approach chosen in this paper is the so-called \textit{generalized Onsager principle} \cite{Wang2021}, \cite{doi2011onsager}} which is used as a tool.
\VY{The principle is formulated without referring to dimensions of the system and its environment, and it appears to be suitable for modelling of embedded surfaces. We refer to \cite{doi2011onsager} for the principle's thermodynamical premises and to  \cite{doi2015onsager}, \cite{wang2021onsager} for further details of its application to particular physical systems. The generalized Onsager principle}
is not an extremal principle which only needs a constitutive relation to complete the model (e.g. of an elastic body)  but rather a sequence of predetermined steps which guide the creation of a model with a thermodynamically consistent energy structure based on predetermined kinematic properties. We briefly outline these abstract steps \VY{(also see \cite{wang2021onsager}, Section~2.3)} as they should be applied to adapt the classical Beris--Edwards model \cite{beris1994thermodynamics} in flat domains to the case of curved surfaces $\Gamma$.

\begin{itemize}
    \item[Step 1:] \textit{Kinematics}. We choose the state variables of the forthcoming thermodynamical system on $\Gamma$ to be the tangent momentum field $\rho\bu$ and the Q-tensor field $\bQ$; density $\rho$ is constant. We postulate that the kinematics of the system are dictated by the surface material derivative $\dot{\bu}$ and the surface corotational derivative $\overset{\circ}{\bQ}$ given by Definitions \ref{vector_transp} and \ref{circle_def};
    
    \item[Step 2:]\textit{Energy landscape.} We define the total energy $E_{\rm{total}}$ of the system to be the sum of the kinetic energy and the Landau--de\,Gennes energy, and express its time derivative in terms of the rates $\overset{\circ}{\bQ}$ and $\dot{\bu}$ from Step 1;
    
    \item[Step 3:]\textit{Evolution laws.} We propose a suitable structure of the evolution laws involving the rates $\overset{\circ}{\bQ}$, $\dot{\bu}$ from Step 1 and several thermodynamical quantities (generalized forces) yet-to-be-determined. We split the latter into reversible and dissipative forces;
    
    \item[Step 4:] \textit{Reversible quantities}. The generalized reversible forces are responsible for the Hamiltonian structure of the system, whence they do not change the energy from Step 2 over time. Exploiting this fact, we determine these reversible quantities;
    
    \item[Step 5:]\textit{Dissipative quantities}. The generalized dissipative forces are responsible for the total energy decrease over time. We propose a nonequilibrium dissipative process in terms of a suitable least action principle, which identifies these dissipative quantities.
\end{itemize}

We apply in detail the five steps of the generalized Onsager principle in Section \ref{gen_ons_pr}, and summarize the resulting \textit{surface Beris--Edwards} model in Section \ref{surf_BE}, which satisfies a natural energy decay over time dictated by the dissipative quantities, namely
\[
\frac{d}{dt} E_{\rm{total}} \le 0.
\]

\subsection{Generalized Onsager principle} \label{gen_ons_pr}

In this section we apply the generalized Onsager principle following the steps outlined above. After the last step the resulting system is simplified using Lemma \ref{matrix_div}.


\medskip
{\it Step 1: Kinematics.}
We start by assuming that  a fluidic liquid crystal film is a nonequilibrium thermodynamical system on a stationary surface $\Gamma$ described by two state variables, namely the tangent incompressible velocity field $\bu$ and the three-dimensional symmetric and traceless Q-tensor field $\bQ$:
\VY{$$\bQ=\bQ(\bx,t) \,,\quad \bu=\bu(\bx,t)\,,\qquad \bx\in \Gamma.$$ 
}
Moreover, we describe the kinematical properties of $\bu$ and $\bQ$ via the surface material derivative $\dot{\bu}$ \VY{(or acceleration)} and the surface corotational derivative $\overset{\circ}{\bQ}$ introduced in Definitions \ref{vector_transp} and \ref{circle_def}. Such definitions are consistent with transport in the absence of any forces (passive transport), in which case they reduce to $\overset{\circ}{\bQ}=0$ and $\dot{\bu}=0$.

\medskip
{\it Step 2: Energy landscape.}
We postulate that the thermodynamical system possesses a total energy $E_{\rm{total}}=E_{LdG}+K$, given by the following Landau--de\,Gennes energy $E_{LdG}[\bQ,\nabla_M\bQ]$ and kinetic energy $K[\bu]$:
 \begin{align}\label{eldg} 
 &E_{LdG}[\bQ,\nabla_M\bQ]=\int_{\Gamma}e[\bQ,\nabla_M\bQ] :=\int_{\Gamma}\frac{L}2|\nabla_M\bQ|^2+\int_{\Gamma}F[\bQ]  
 \,,\qquad{}K[\bu]=\int_{\Gamma}\frac{\rho}2{|\bu|^2},
 \end{align}
 where $|\nabla_M\bQ|^2$ is the \textit{surface Frank energy} \cite{golovaty2017dimension} and $F[\bQ]$ is the double-well potential \eqref{dw}. \VY{More complicated forms of elastic energy $E_{LdG}$ can be postulated here, but we choose to consider the one-constant model of energy for the ease of presentation.} To compute the rate of change of the total energy, \LB{we use the fact that $\Gamma$ is a closed, time-independent surface and $\bu$ is tangential. This can be viewed as an application of Leibniz formula \cite[Lemma 2.1]{dziuk2007finite}. The resulting  change in total energy is}
 \begin{align*}
 \frac{d}{dt} E_{\rm{total}}[\bQ,\nablaM\bQ,\bu] = \int_{\Gamma}\left(\frac{\pa{}e}{\pa{}t}[\bQ,\nabla_M\bQ] +\frac{1}2\frac{\pa{}{(\rho\bu^2)}}{\pa{}t}\right)\, .
 \end{align*}
 We simply write $\int_\Gamma\rho\bu\cdot\pa_t\bu$ for the second term, while for the first term we have
 \begin{align*}
     \int_\Gamma\frac{\pa{}e}{\pa{}t}[\bQ,\nablaM\bQ]=\int_\Gamma\frac{\pa{}e}{\pa{}\bQ}:\frac{\pa{}\bQ}{\pa{}t}+\frac{\pa{}e}{\pa{}(\nablaM\bQ)}\tripledot\frac{\pa{}(\nablaM\bQ)}{\pa{}t},
 \end{align*}  
 where we recall the notation \eqref{contractions} for the contraction `$\tripledot$'. Commuting $\partial_t$ and $\nablaM$, because $\Gamma$ is stationary, and using Corollary \ref{3tensors_parts} \RHN{(external integration by parts)} yields for any matrix field $\bC$
\begin{equation*}
 \int_\Gamma \frac{\pa{}e}{\pa{}(\nablaM\bQ)}\tripledot\nablaM\bC 
 = \int_\Gamma L \nablaM\bQ \tripledot\nablaM\bC 
 = - \int_\Gamma L \divM\big(\nablaM\bQ \big) : \bC.
\end{equation*}
This implies
\begin{equation}\label{dens_rate} 
      \int_\Gamma\frac{\pa{}e}{\pa{}t}[\bQ,\nablaM\bQ]
      = \int_\Gamma \Big(- L \divM\big(\nablaM\bQ \big) + F'[\bQ] \Big):\pa_t\bQ
      = - \int_\Gamma \bH : \pa_t\bQ \, ,
 \end{equation}
 where the \textit{molecular field} $\bH$ is the traceless symmetric matrix
  \begin{equation}\label{def-H}
  \bH=\mathcal{P}\bigg(L\divM\nablaM\bQ-F'[\bQ]\bigg) \, ,
  \end{equation}
  and $\mathcal{P}$ is the projection operator on the subspace of symmetric and traceless matrices. Note that $\bH$ here is neither conforming nor flat-degenerate in the normal direction, in the sense of definitions \eqref{conf_def}  and \eqref{fd_def}, because $\bQ$ is general.
 
  We next intend to express $\frac{d}{dt} E_{\rm{total}}$ in terms of the surface material derivative $\dot{\bu}$ and surface corotational derivative $\overset{\circ}{\bQ}$, or equivalently to substitute $\pa_t\bu$ and $\pa_t\bQ$ by $\dot{\bu}$ and $\overset{\circ}{\bQ}$. To this end, we
  recall the kinematical properties from Definition~\ref{vector_transp} and Definition~\ref{circle_def},
  \begin{equation}\label{time-derivatives}
      \pa_t\bu=\dot{\bu}-(\nabla_\Gamma\bu)\bu\,,\qquad{}\pa_t\bQ=\overset{\circ}{\bQ}-(\nabla_M\bQ)\bu+\bS,
  \end{equation}
where \eqref{WM-W} and \eqref{star_spin} are used to split the tensor $\bS=\bS[\bu, \bQ] = \bW(\bu) \bQ - \bQ \bW(\bu)$ as follows:
    \begin{align}\label{S_def}
       \bS:=\bS_\Gamma+\bS_*\,,\qquad{}\bS_\Gamma:=\bW_\Gamma(\bu)\bQ-\bQ\bW_\Gamma(\bu)\,,\quad{}\bS_*:=\bW_*(\bu)\bQ-\bQ\bW_*(\bu)\,. 
      \end{align}
This, together with the fact that $(\bu, (\nabla_\Gamma\bu)\bu)_{\Gamma}=0$ according to \eqref{conv_cancel}, yields the following expression for the rate of change of total energy
  \begin{equation}\label{total-energy-rate}
  \begin{aligned}
  \frac{d}{dt} E_{\rm{total}}[\bQ,\nablaM\bQ,\bu] &= \int_{\Gamma}\big(-\bH:\pa_t\bQ +\rho\bu\cdot\pa_t\bu\big)
 \\
 &=-\left(\bH,\overset{\circ}{\bQ}\right)_{\!\!\Gamma}+\bigg
 (\rho\bu, \dot{\bu}\bigg)_{\!\!\Gamma}+\bigg(\bH,(\nabla_M\bQ)\bu-\bS[\bu,\bQ]\bigg)_{\!\!\Gamma}.
 \end{aligned}
 \end{equation}

\medskip
{\it Step 3: Evolution laws.}
Following the classical Beris--Edwards model in $\R^3$ we postulate that the surface model is driven by abstract evolution equations on the surface $\Gamma$ with the kinematics derived from Assumptions \ref{assum_vel} (kinematics of momentum) and \ref{assum_Q} (kinematics of Q-tensors). Note that only the structure is postulated while the required new quantities ($\ba^d_T$, $\bF^d$, $\ba^r$, $\bbf_T^r$) are yet-to-be-determined. We formulate these abstract evolution laws as follows.
\vskip 0.2cm
\begin{assumption}[evolution laws]\label{assum_Onsager}
The thermodynamics of the surface Beris--Edwards model has both a dissipative and a Hamiltonian  structure \cite{yang2016hydrodynamic}. The dissipative structure is due to a symmetric tangent stress $\ba^d_T=\bP\ba_T^d\bP$ and a symmetric tensor $\bF^d$. The  reversible Hamiltonian structure is due to a skew-symmetric stress $\ba^r$ and a tangent force $\bbf_T^r=\bP\bbf_T^r$. Motivated by the structure of the three-dimensional Beris--Edwards model, we propose the kinematic equations
 \begin{equation}\label{evol_laws}
 \begin{aligned}
     \rho\dot{\bu}&=\bP\divG(\ba^d_T+\ba^r)+\bbf_T^r\,, 
     \\
     \overset{\circ}{\bQ}&=\bF^d\, ;
 \end{aligned}
 \end{equation}
the terms $\bbf_T^r$ and $\ba_T^d$ in the surface momentum equation are assumed to be tangent to the surface $\Gamma$ because the model \VY{involves a two-dimensional viscous flow along $\Gamma$ which is expected to be recovered upon setting $\bQ=0$ in \eqref{evol_laws}; also, see Remark~\ref{Q=0}}.
 \end{assumption}
\LB{
\begin{remark}[alternative definition of surface divergence and evolution laws]
As mentioned in Remark \ref{rmk:alt-divG}, our definition of $\divG$ follows that in differential geometry. However, suppose we posit the evolution laws to be 
 \begin{equation*}
 \begin{aligned}
     \rho\dot{\bu}&=\bP\widehat{\divG}(\ba^d_T+\ba^r)+\widehat{\bbf}_T^r\,, 
     \\
     \overset{\circ}{\bQ}&=\bF^d\, ,
 \end{aligned}
 \end{equation*}
where $\widehat{\divG}$ is defined as the $L^2(\Gamma)$ adjoint of $\nabla_M$ as motivated by the integration-by-parts formula in Lemma \ref{matrix_parts} (covariant integration by parts).
The procedure outlined in Steps 4 and 5 below would yield exactly the same surface Beris-Edwards system as \eqref{k1} and \eqref{k2}. The resulting $\widehat{\bf f}_T^r$ would be slightly simpler than ${\bbf}_T^r$ because of using the $L^2(\Gamma)$ adjoint $\widehat{\divG}$, but the right-hand sides of the equations for $\dot{\bu}$ would be identical.
\end{remark}
}
Our next task is to combine Assumption \ref{assum_Onsager} with \eqref{total-energy-rate}.
We first invoke Lemma \ref{matrix_parts}, i.e.
\begin{equation*}
     \bigg(\bu,\rho\dot{\bu}\bigg)_{\!\!\Gamma}
     =-\bigg(\gradM\bu,\ba_T^d+\ba^r\bigg)_{\!\!\Gamma}+\bigg((\tr\bB)(\ba_T^d+\ba^r)\bn,\bu\bigg)_{\!\!\Gamma}
     +\bigg(\bu,\bbf_T^r\bigg)_{\!\!\Gamma} \, ,
\end{equation*}
and take into account the symmetry and tangentiality of $\ba_T^d=\bP\ba_T^d\bP$ to write $\big(\ba_T^d,\nablaM\bu \big)_{\Gamma} = \big(\ba_T^d,\bD_\Gamma(\bu) \big)_{\Gamma}$ because of \eqref{strainM} and \eqref{strain_relation}, as well as the skew-symmetry of $\ba^r$ and 
\eqref{strainM} to obtain
\begin{align}
     \bigg(\bu,\rho\dot{\bu}\bigg)_{\!\!\Gamma}=\bigg(\bu, \bbf_T^r+(\tr\bB)\ba^r\bn\bigg)_{\!\!\Gamma} -\bigg(\bD_\Gamma(\bu),\ba^d_T\bigg)_{\!\!\Gamma}-\bigg(\bW_M(\bu),\ba^r\bigg)_{\!\!\Gamma}\, .
     \end{align}
Similarly, \eqref{evol_laws} yields
     \[
    \left(\bH,\overset{\circ}{\bQ}\right)_{\!\!\Gamma}=\bigg(\bH,\bF^d\bigg)_{\!\!\Gamma}\,,
    \]
    and, in view of $\bS=\bS[\bu, \bQ] = \bW(\bu) \bQ - \bQ \bW(\bu)$, we see that
    \begin{equation*}
    \bigg(\bH,(\nablaM\bQ)\bu - \bS[\bu, \bQ]\bigg)_{\!\!\Gamma}=\bigg(\bu,\bH:\nablaM\bQ\bigg)_{\!\!\Gamma}-\bigg(\bH,\bW(\bu)\bQ-\bQ\bW(\bu)\bigg)_{\!\!\Gamma} \, .
  \end{equation*}
  We define the skew-symmetric \textit{Ericksen} stress $\bSigma$  and the \textit{Leslie} force $\bLambda$ to be
   \begin{align}\label{stresses}
       \bSigma=\bQ\bH-\bH\bQ, \qquad \bLambda=-\bH:\nablaM\bQ;
   \end{align}
the latter is tangent to $\Gamma$ due to \eqref{tangent_contr}, whereas the former is unrelated to $\Gamma$. Consequently,
\begin{equation*}
    \bigg(\bH,(\nablaM\bQ)\bu - \bS[\bu, \bQ]\bigg)_{\!\!\Gamma} = - \bigg(\bu,\bLambda\bigg)_{\!\!\Gamma}+\bigg(\bSigma,\bW(\bu)\bigg)_{\!\!\Gamma}\, .
  \end{equation*}

Using \eqref{WM-W} for $\bu=\bu_T$ tangential to $\Gamma$, namely $\bW(\bu)=\bW_M(\bu)+\frac12\bW_*(\bu)$, we deduce
\[
\bSigma:\bW(\bu) = \bSigma:\bW_M+\bB\bSigma\bn\cdot\bu
\]
because $\bSigma$ and $\bW_* (\bu)=\bB\bu\otimes\bn - \bn \otimes\bB\bu$ being antisymmetric yield
\begin{equation*}
 \frac12 \bSigma : \bW_* =
 \bSigma: (\bB\bu\otimes\bn) =
 \tr(\bn\bu^T\bB\bSigma) =
 \tr(\bu^T\bB\bSigma\bn) =
 \bB\bSigma\bn\cdot\bu.
 \end{equation*}
Hence the rate of change of the  total energy $E_{\rm{total}}[\bQ,\nablaM\bQ,\bu]$ is given by
 \begin{equation}\label{rate}
 \begin{aligned}
  \frac{d}{dt} E_{\rm{total}}[\bQ,\nablaM\bQ,\bu] & =-\bigg(\bH,\bF^d\bigg)_{\!\!\Gamma}- \bigg(\bD_\Gamma(\bu),\ba^d_T\bigg)_{\!\!\Gamma}
   \\
  &+\bigg(\bu, \bbf_T^r-\bLambda+(\tr\bB)\ba^r\bn +\bB\bSigma\bn\bigg)_{\!\!\Gamma} -\bigg(\bW_M(\bu),\ba^r-\bSigma\bigg)_{\!\!\Gamma} \, ,
 \end{aligned} 
 \end{equation}
 where the dissipative and reversible terms are collected in separate lines.

 \begin{remark}\label{key}
 Only the dissipative terms in the first line of \eqref{rate} should contribute to the energy rate $\frac{d}{dt}E_{total}$ as identified by Assumption~\ref{assum_Onsager}. Two reversible terms from the second line should cancel with each other for any possible dynamics of the system. From the perspective of theory of constitutive modeling, this gives rise to a plethora of possible models where the functional dependence of $\ba^r$ and $\bbf_T^r$ on $\bu$ and $\bQ$ varies even in the flat case. We will require in the Step 4 that each term in the second line vanishes separately. This modeling choice is consistent with the classical Beris--Edwards system in flat domains. One could try to attribute this choice to the principle of frame indifference, but it is beyond the scope of this paper.
 \end{remark}

 \medskip
 {\it Step 4: Reversible quantities.}
To find the reversible quantities $\ba^r$ and $\bbf_T^r$ we recall that these terms should not contribute to the time derivative of the total energy \eqref{rate}. Some of the reversible quantities in \eqref{rate} are paired with the velocity $\bu$ which represents a uniform motion of an infinitesimal material volume while others are paired with $\bW_M(\bu)$ which represents a rotation of an infinitesimal volume. 
As discussed in Remark~\ref{key}, none of these conjugated pairs should produce mechanical work, whence the last two terms in \eqref{rate} vanish for any $\bu$:
\begin{equation}\label{reversible-quantities}
\ba^r=\bSigma
\qquad
\bbf_T^r=\bLambda-(\bB+(\tr\bB)\bP)\bSigma\bn\, .
 \end{equation}  
Consequently, the reversible quantities in \eqref{evol_laws} are fully determined.

\vskip 0.2cm
{\it Step 5: Dissipative quantities.}
To find the dissipative quantities $\ba_T^d$ and $\bF^d$ we make an additional assumption regarding the  nonequilibrium thermodynamics of the model in the form of \textit{the least action principle} (see \cite{Wang2021} for details)
First, we define the dissipation functional 
  \begin{align*}
     \Phi[\bF^d,\ba_T^d] :=\int_{\Gamma}\left(\frac{|\bF^d|^2}{2M}+\frac{|\ba^d_T|^2}{4\mu}\right)
 \end{align*}
where the mobility $M$ and the viscosity $\mu$ are material constants. According to the least action principle, dissipative quantities should minimize the expression $\frac{d}{dt} E_{\rm{total}}+\Phi$ at every time during the evolution to be thermodynamically consistent. Therefore, its first variation must vanish
\begin{align*}
     \delta_{(\ba_T^d,\bF^d)}\left(\frac{d}{dt} E_{\rm{total}} [\bQ,\nablaM\bQ,\bu]+\Phi[\bF^d,\ba_T^d]\right)=0 \, .
\end{align*}
In view of the first line of \eqref{rate}, we discover that the tensors $\ba_T^d$ and $\bF^d$
satisfy

\begin{equation}\label{dissipative-quantities}
    -\bD_\Gamma(\bu)+ \frac{1}{2\mu}\ba^d_T=0
    \,,\quad -\bH+ \frac{1}{M}\bF^d=0\,.
 \end{equation}
Consequently, the dissipative quantities in \eqref{evol_laws} are fully determined.

 We have just finished the five steps of the generalized Onsager principle and are now
 ready to write the ensuing system of equations on $\Gamma$. Inserting \eqref{reversible-quantities} and \eqref{dissipative-quantities} into \eqref{evol_laws} yields
   \begin{align}
    \rho\dot{\bu}&=2\mu\bP\divG\bD_\Gamma(\bu)+\bP\divG\bSigma+\bLambda-\big(\bB+(\tr\bB)\bP\big)\bSigma\bn\,, \label{k1}
     \\
     \overset{\circ}{\bQ}&=M\bH\,,\label{k2}
    \end{align}
 which by construction enjoys the following energy structure.
 \begin{proposition}[energy law] The system of equations \eqref{k1}-\eqref{k2} on the surface $\Gamma$ satisfies
 \begin{align} \label{energy_law} \frac{d}{dt}E_{\rm{total}}[\bQ,\nablaM\bQ,\bu] = -2\mu\|\bD_\Gamma(\bu)\|_\Gamma^2-M\|\bH\|_\Gamma^2\,. \end{align}
 \end{proposition}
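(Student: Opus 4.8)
The plan is to read off the energy law directly from the rate-of-change identity \eqref{rate}, since that identity has already separated the reversible contributions (its second line) from the dissipative ones (its first line), and all the integration by parts has been done. Concretely, I would substitute the constitutive choices \eqref{reversible-quantities} and \eqref{dissipative-quantities} produced by Steps~4 and~5 — equivalently, the closed evolution laws \eqref{k1}--\eqref{k2} — into \eqref{rate}, and collect terms.

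First I would dispose of the reversible line of \eqref{rate}. The choice $\ba^r=\bSigma$ from \eqref{reversible-quantities} makes the pairing $(\bW_M(\bu),\ba^r-\bSigma)_\Gamma$ vanish identically. For the remaining reversible pairing with $\bu$, insert $\ba^r=\bSigma$ and $\bbf_T^r=\bLambda-(\bB+(\tr\bB)\bP)\bSigma\bn$; then the bracket $\bbf_T^r-\bLambda+(\tr\bB)\ba^r\bn+\bB\bSigma\bn$ telescopes to $(\tr\bB)(\bSigma\bn-\bP\bSigma\bn)=(\tr\bB)(\bn\cdot\bSigma\bn)\,\bn$, which is zero because $\bSigma=\bQ\bH-\bH\bQ$ is skew-symmetric, so that $\bn\cdot\bSigma\bn=0$ (this is the same skew-symmetry observation used to obtain $\tfrac12\bSigma:\bW_*=\bB\bSigma\bn\cdot\bu$ just before \eqref{reversible-quantities}). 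Hence the entire second line of \eqref{rate} vanishes for every admissible $\bu$ and $\bQ$ — precisely the modelling requirement recorded in Remark~\ref{key}.

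It then remains to evaluate the dissipative line of \eqref{rate} with the relations from \eqref{dissipative-quantities}, namely $\ba^d_T=2\mu\,\bD_\Gamma(\bu)$ and $\bF^d=M\bH$. Substituting gives $-(\bH,\bF^d)_\Gamma=-M\|\bH\|_\Gamma^2$ and $-(\bD_\Gamma(\bu),\ba^d_T)_\Gamma=-2\mu\|\bD_\Gamma(\bu)\|_\Gamma^2$, and adding the two yields \eqref{energy_law}. Since $\mu>0$ and $M>0$, this in particular gives $\tfrac{d}{dt}E_{\rm total}\le 0$.

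There is no genuine obstacle: by construction the reversible quantities were designed so the reversible line cancels, and the dissipative quantities so the dissipative line becomes the stated negative-semidefinite expression; the only point needing a moment's care is verifying that the reversible bracket collapses to a multiple of $\bn\cdot\bSigma\bn$. As an alternative that avoids quoting \eqref{rate}, one may test \eqref{k1} against $\bu$ and \eqref{k2} against $\bH$ in $(\cdot,\cdot)_\Gamma$, using $(\rho\dot{\bu},\bu)_\Gamma=\tfrac{d}{dt}K[\bu]$ (from $\partial_t\bu=\dot{\bu}-(\nabla_\Gamma\bu)\bu$ and \eqref{conv_cancel}), Lemma~\ref{matrix_parts} for the viscous and Ericksen stress terms, and \eqref{dens_rate} together with \eqref{time-derivatives} for the $\bQ$ terms; this simply reconstructs \eqref{rate} and reaches the same conclusion.
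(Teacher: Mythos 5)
Your proposal is correct and follows the paper's own argument: the proof given in the paper is exactly the substitution of \eqref{reversible-quantities} and \eqref{dissipative-quantities} into \eqref{rate}. Your extra check that the reversible bracket collapses to $(\tr\bB)(\bn\cdot\bSigma\bn)\,\bn=0$ by skew-symmetry of $\bSigma$ is a correct and welcome verification of what Step~4 was designed to achieve, but it is not a departure from the paper's route.
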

\begin{proof}
Simply replace \eqref{reversible-quantities} and \eqref{dissipative-quantities} into \eqref{rate}.
\end{proof}

A further simplication of \eqref{k1} is in order. We express the tangential force $\bP\divG\bSigma$ due to the Ericksen stress $\bSigma$ defined in \eqref{stresses} in terms of the \textit{tangent Ericksen stress} $\bSigma_\Gamma=\bP\bSigma\bP$ and a remainder.
We resort to Lemma \ref{matrix_div} to relate $\divG\bSigma$ to $\divG\bSigma_\Gamma$ as follows:
  \begin{align}\label{q1}
     \bP\divG\bSigma&=\bP\divG\bSigma_\Gamma+\tr(\bB)\bP\bSigma\bn+\bB\bSigma^T\bn\, .
 \end{align}
We observe that the term $\tr(\bB)\bP\bSigma\bn$ in \eqref{q1} cancels with the last term in
\eqref{k1}, while the skew-symmetry $\bSigma=-\bSigma^T$ of the Ericksen stress implies
$\bB\bSigma\bn-\bB\bSigma^T\bn=2\bB\bSigma\bn$. We thus end up with the following reduced form of the momentum equation
\begin{equation}\label{momentum-new}
\rho\dot{\bu}=2\mu\bP\divG\bD_\Gamma(\bu)+\bbf_E +\bLambda-\bbf_* \, ,
\end{equation}
with the {\it Ericksen force} $\bbf_E$ and {\it star force} $\bbf_*$ defined by
\begin{equation}\label{ericksen-star}
\bbf_E : = \bP\divG\bSigma_\Gamma,
\qquad
\bbf_* := 2\bB\bSigma\bn.
\end{equation}
Both forces are tangent to $\Gamma$, $\bbf_E$ due to the projection $\bP$ and $\bbf_*$ because the range of the shape operator $\bB$ is contained in the tangent plane at each point of $\Gamma$. It is worth realizing that thermodynamics consistency requires the presence of the novel force $\bbf_*$ in \eqref{momentum-new}. If the surface $\Gamma$ is flat, e.g. a domain in $\R^2$, then $\bB=0$ and $\bbf_*=0$. Moreover, $\bbf_*$ vanishes again provided $\bSigma\bn=0$ as it would happen if both $\bQ$ and $\bH$ are assumed to be conforming and flat-degenerate. The relaxation of these assumptions is the main contribution of this paper.

We further explore the extraction of the tangent part $\bSigma_\Gamma$ from $\bSigma$ in \eqref{q1}, that leads to the Ericksen force $\bbf_E$ of \eqref{ericksen-star}. In fact, we present a simple characterization of $\bbf_E$, which is of independent interest and quite useful to understand simulations in Section \ref{S:Ericksen-flat}.

\begin{lemma}[characterization of the Ericksen force]\label{L:alt_erick}
There exists a scalar function $\theta$ such that
\begin{align}\label{Ericksen_alternative}
     \bbf_E = \bn \times \nablaG\theta \, .
 \end{align}
\end{lemma}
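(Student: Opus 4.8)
\emph{Proof idea.} The plan is to exploit that on the two\nobreakdash-dimensional surface $\Gamma$ every skew\nobreakdash-symmetric tangent second order tensor is a scalar multiple of one fixed ``rotation tensor''. Since $\bQ$ and $\bH$ are symmetric, $\bSigma=\bQ\bH-\bH\bQ$ is skew\nobreakdash-symmetric, hence $\bSigma_\Gamma=\bP\bSigma\bP$ is skew\nobreakdash-symmetric and tangent, i.e. $\bSigma_\Gamma=\bP\bSigma_\Gamma\bP$ and $\bSigma_\Gamma\bn=\bzero$. Let $\bR$ be the matrix field defined by $\bR\bv=\bn\times\bv$ for all $\bv\in\R^3$; then $\bR^T=-\bR$, $\bR\bn=\bzero$, the range of $\bR$ lies in the tangent plane, so $\bR=\bP\bR\bP$, and $|\bR|^2=2$. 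In a positively oriented orthonormal tangent frame both $\bSigma_\Gamma$ and $\bR$ act on the tangent plane as a scalar multiple of the $90^{\circ}$ rotation; since each is determined by its action on the tangent plane, there is a unique scalar field $\theta=\tfrac12\,\bR:\bSigma$ with $\bSigma_\Gamma=\theta\,\bR$. As $d\in C^2$ (so $\bn\in C^1$) and all fields are assumed smooth, $\theta$ is regular enough to be differentiated tangentially.

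Next I would reduce the claim to $\divG\bR=\bzero$. Applying the (row\nobreakdash-wise) product rule for the covariant divergence (Proposition~\ref{appendix}) gives $\divG(\theta\bR)=\theta\,\divG\bR+\bR\,\nablaG\theta=\theta\,\divG\bR+\bn\times\nablaG\theta$, using $\bR\,\nablaG\theta=\bn\times\nablaG\theta$. Since $\nablaG\theta=\bP\nabla\theta$ is tangent, the vector $\bn\times\nablaG\theta$ is orthogonal to $\bn$, so $\bP(\bn\times\nablaG\theta)=\bn\times\nablaG\theta$. Therefore $\bbf_E=\bP\divG\bSigma_\Gamma=\theta\,\bP\divG\bR+\bn\times\nablaG\theta$, and the proof is finished once $\divG\bR=\bzero$.

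Finally, I would verify $\divG\bR=\bzero$ row by row. The $i$\nobreakdash-th row of $\bR$ has entries $\varepsilon_{ijk}n_j$, i.e. it depends linearly on $\bn$, and $\bn=\bn^e$ is its own normal extension; hence its covariant divergence is obtained by contracting the Levi-Civita symbol against $\nablaG\bn=\bB$, and using $\bB=\bB\bP$ it equals $\varepsilon_{ijk}\bB_{jk}$, which vanishes because $\bB$ is symmetric (it is the Hessian of the signed distance $d$) while $\varepsilon$ is antisymmetric in $j,k$. Thus $\divG\bR=\bzero$, giving $\bbf_E=\bn\times\nablaG\theta$. The only delicate point is the bookkeeping for the row\nobreakdash-wise matrix divergence and the identification $\divG\bR=\divM\bR$ via \eqref{divergence}; the geometric content of the argument is simply the symmetry of the shape operator $\bB$.
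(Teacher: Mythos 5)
Your proposal is correct and follows essentially the same route as the paper: identify $\bSigma_\Gamma$ as $\theta$ times the cross-product tensor $\bn\times(\cdot)$ (the paper's $\bOmega_\bn$), apply the product rule for the surface divergence, and show the divergence of that tensor vanishes. Your index computation $\varepsilon_{ijk}\bB_{jk}=0$ via symmetry of the shape operator is just a rephrasing of the paper's observation that $\div\bOmega^e_\bn=-\mathrm{curl}\,\bn^e=0$ since $\bn^e=\nabla d$, so the two arguments coincide in substance.
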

\begin{proof}
Consider the right-handed basis of principal directions $\bt_1, \bt_2,\bn$ at a point $\bx\in\Gamma$. Any second order tensor $\bA$ may be represented in this basis via dyads as follows:
\begin{align*}
    \bA=\sum_{i,j=1}^2a_{ij}\bt_i\otimes\bt_j+\sum_{i=1}^2(a_{3i}\bn\otimes\bt_i+a_{i3}\bt_i\otimes\bn)+a_{33}\bn\otimes\bn
\end{align*}
for some components $a_{ij}$, $1\leq i,j\leq 3$.
Since the surface Ericksen stress $\bSigma_\Gamma=\bP\bSigma\bP$ is tangent, its representation does not include any dyads involving $\bn$. Moreover, skew-symmetry $\bSigma_\Gamma=-\bSigma_\Gamma^T$ implies
$a_{11}=a_{22}=0$ and $a_{21}=-a_{12} =\theta$, where $\theta$ is the only non-trivial component of $\bSigma_\Gamma$ and is a function of $\bx\in\Gamma$. Consequently,
\begin{equation*}
    \bSigma_\Gamma=\theta \left(\bt_2\otimes\bt_1-\bt_1\otimes\bt_2\right)\,,
\end{equation*}
and the second order tensor $\bOmega_\bn:=\bt_2\otimes\bt_1-\bt_1\otimes\bt_2$ maps $\bt_1$ to $\bt_2$ and $\bt_2$ to $-\bt_1$. It is thus a rotation by $\frac{\pi}{2}$ around the axis $\bn$ or, simply, the  cross product operator $\bOmega_\bn\ba = \bn \times \ba$ according to the corkscrew rule. It turns out that $\bOmega_\bn$ admits the following \textit{cross product} matrix representation in terms of the canonical
\VY{Cartesian} basis $\be_x, \be_y, \be_z$
\begin{align}\label{Omega-n}
\bSigma_\Gamma=\theta\bOmega_\bn=
\theta\begin{pmatrix}
    0 & -\bn_z & \bn_y\\
    \bn_z & 0 & -\bn_x\\
    -\bn_y & \bn_x & 0
  \end{pmatrix}\,.
\end{align}
Note that the function $\theta$ and the normal $\bn$ fully describe the surface Ericksen stress $\bSigma_\Gamma$. Moreover, using \eqref{divergence}, \eqref{e-n} together with Proposition \ref{appendix} (product rules), we calculate
\begin{align*}
 \divG \bSigma_\Gamma = \divM \bSigma_\Gamma=\div \bSigma_\Gamma^e=\div(\theta^e\bOmega_\bn^e)
              =\theta^e\div\bOmega^e_\bn+\bOmega^e_\bn\nabla\theta^e \, .
\end{align*}
We finally observe that $\div\bOmega^e_\bn = -\textrm{curl} \, \bn^e = 0$, because $\bn^e=\nabla d$, to obtain that $\divG \bSigma_\Gamma = \bn\times\nablaG\theta$ and that $\bbf_E=\bP\divG \bSigma_\Gamma$ is given by \eqref{Ericksen_alternative} as asserted.
\end{proof}

We point out that the orientation of $\bn$ is not unique. If we change $\bn$ to $-\bn$, then we also have to exchange $\bt_1$ with $\bt_2$ to have a right-handed basis and this flips the sign of $\theta$; hence the representation of \eqref{Ericksen_alternative} is well defined. Moreover, since $\divG \bSigma_\Gamma=\bn\times\nablaG\theta$ is already tangent to $\Gamma$, we deduce that the projection $\bP$ in the definition $\bbf_E=\bP\divG\bSigma_\Gamma$ is superfluous.

 \subsection{Surface Beris--Edwards model} \label{surf_BE}
 We are now in a position to present the novel model of fluidic liquid crystal films. Let a closed surface $\Gamma$ represent the liquid crystal film. The liquid crystal may be generally oriented in $\R^3$, but the material flows tangentially to $\Gamma$ so that $\Gamma$ does not change over time.
 The incompressible flow is described by the tangential velocity $\bu$, which is assumed to be divergence-free $\divG\bu=0$. Therefore, the density $\rho$ is constant and the scalar pressure field $p$ enforces $\divG\bu=0$ on $\Gamma$.
  
 The new model combines the equations \ref{k2} and \eqref{momentum-new} with the expressions \eqref{def-H} and \eqref{time-derivatives} and the constitutive relations \eqref{stresses} and \eqref{ericksen-star}. Given initial conditions $\bu_0$ and $\bQ_0$, the model reads: find symmetric and traceless matrix fields $\bH,\bQ$ as well as tangent velocity $\bu=\bP\bu$ and scalar pressure $p$ on $\Gamma$ such that for all times the following system of PDEs is satisfied on $\Gamma$
 
\begin{equation}\label{beris-edwards}
\begin{aligned}
    {\bf H} +\mathcal{P}F'[\bQ] &=   L\,\mathcal{P}\divM\nablaM{\bQ}\,, \\
                  \pa_t{\bf Q} +(\nablaM{\bf Q}){\bf u}    &= M{\bf H}+  ({\bf S}_\Gamma+{\bf S}_*)\,,  \\
         \rho\big(\pa_t\bu +(\nabla_\Gamma \bu)\bu+\nabla_{\Gamma}p \big) &= 2\mu \bP\divG\bD_\Gamma(\bu)-(\bbf_{\Gamma}+\bbf_*)  \,, \\
         \divG \bu &=0 \,,
\end{aligned}
\end{equation}
where  
\begin{align*}
    \bS_\Gamma&=\bS_\Gamma[\bQ,\bu]=\bW_\Gamma(\bu)\bQ-\bQ\bW_\Gamma(\bu)\,,&\quad
     \bS_*&=\bS_*[\bQ,\bu]=\bW_*(\bu)\bQ-\bQ\bW_*(\bu)\,,\\
   \bSigma&=\bSigma[\bQ,\bH]=\bQ\bH-\bH\bQ\,,&\quad   \bSigma_\Gamma&=\bSigma_\Gamma[\bQ,\bH]=\bP\bSigma\bP\,,
     \\
     \bbf_{\Gamma} &= \bbf_{\Gamma} [\bQ,\bH]=-\bP\divG{\bf \Sigma}_\Gamma+  \bH:\nabla{}_M\bQ\,,&\quad
     \bbf_*&=\bbf_*[\bQ,\bH]=2\bB\bSigma\bn\, .
    \end{align*}
   Here $\bS_\Gamma$ is the corotation \eqref{S_def} by the covariant spin tensor $\bW_\Gamma$ in \eqref{strain}, $\bS_*$ is the corotation \eqref{S_def} by the star spin tensor $\bW_*$ in \eqref{star_spin}, $\bSigma$ is the Ericksen stress \eqref{stresses} with the tangent part $\bSigma_\Gamma=\bP\bSigma\bP$, $\bbf_{\Gamma}$ is the tangent \textit{surface Beris--Edwards} force  which consists of the Ericksen force $\bbf_E=\bP\divG{\bf \Sigma}_\Gamma$ in \eqref{ericksen-star} and the Leslie force  $\bLambda=-\bH:\nablaM\bQ$ in \eqref{stresses}, and $\bbf_*$ is the star force \VY{from}
   \eqref{ericksen-star}.
    The first variation of the double-well potential $F[\bQ]$ in \eqref{dw} is given by
    \begin{align}\label{der_dw}
     \mathcal{P}F'[\bQ]=a{\bf Q} -b{\bf Q}^2 +\frac{b}{3}\text{tr}({\bf Q}^2 ){\bf I} +c\,\tr({\bf Q}^2){\bf Q}\,,
      \end{align}
      where $\mathcal{P}$ is the projection onto the subspace of traceless and symmetric matrices. The operator $\mathcal{P}$ acts likewise on the variation $\divM\nablaM\bQ$ of the elastic energy.
      The surface Beris--Edwards model \eqref{beris-edwards} obeys the energy law \eqref{energy_law} by construction.
   \begin{remark}
   We would like to stress that the star corotation tensor $\bS_*$ and the star force $\bbf_*$ distinguish our  model from the model in \cite{nestler2021active} where $\bQ$ is assumed to be conforming to the surface with a prescribed eigenvalue in the normal direction. These terms guarantee the thermodynamical consistency of our model for a non-flat surface. We demonstrate the behavior of conforming and non-conforming Q-tensors in our numerical experiments of Section \ref{numerics}. 
    \end{remark} 
   
   \begin{remark}\label{Q=0}
   If we disregard the Q-tensor equations and the coupling force $\bbf_\Gamma+\bbf_*$ we will be left with the well-known surface Navier--Stokes system \cite{jankuhn2018incompressible} which models an incompressible surface flow driven by inertia.
   \end{remark}

\section{Representation of Q-tensors on surfaces}\label{S:representation}
 In this section we define the notions of uniaxiality and flat-degeneracy, along with the biaxiality parameter $\beta[\bQ]$ which relates them. We also introduce the non-conformity parameter $r_\Gamma[\bQ]$. These parameters will be instrumental in describing and visualizing the numerical experiments in Section \ref{numerics}. 
 

 \subsection{Biaxiality parameter}\label{representation}

We start with a simple definition: unit vector fields $\bq \in \R^3$ will be called {\it director} fields.
\begin{definition}[flat-degeneracy]\label{flat-degenerate}
A Q-tensor $\bQ\in\R^{3\times3}$ is {\it flat-degenerate} if one of its eigenvalues is zero, say $\lambda_2=0$ whence $\lambda_1=-\lambda_3$. Therefore, if $r:=2\lambda_1$, then $\bQ$ reads
\begin{align}\label{e1}
    \bQ=\lambda_1 \left(\bq_1\otimes\bq_1 - \bq_3\otimes\bq_3\right)=r \left(\bq_1\otimes\bq_1 - \frac12\bP_{\bq_2}\right) \, ,
\end{align}
where $\bP_\bq:=\bI-\bq\otimes\bq$ is the projector onto the orthogonal plane to the director $\bq$. 
\end{definition}

This definition is consistent with \eqref{fd_def}.
A flat-degenerate state of the form \eqref{e1} is essentially a two-dimensional Q-tensor state in the plane orthogonal to $\bq_2$. Another important class of three-dimensional Q-tensor states is given by the following definition \VY{\cite{mottram2014introduction}, \cite{borthagaray2021q}, \cite{sonnet2012dissipative}}.
\begin{definition}[unixiality]\label{uni-bi} A Q-tensor $\bQ\in\R^{3\times3}$ is \textit{uniaxial} if it may be represented as
\begin{align}\label{uniax}\bQ=s\left(\bq\otimes\bq-\frac13\bI\right)
\end{align}
for some director $\bq$ and order parameter $s$. Otherwise the Q-tensor is \textit{biaxial}.
\end{definition}


The following biaxiality parameter \cite{majumdar2010equilibrium} relates the notions of flat-degeneracy and uniaxiality of Q-tensors, and allows for a classification of Q-tensor fields useful in simulations.  
\begin{definition}[biaxiality parameter]\label{biaxiality_def}
For a non-zero Q-tensor $\bQ\in\R^{3\times3}$ the \textit{biaxility parameter} is the real number
\begin{align}\label{biax}\beta[\bQ]=1-6\frac{(\tr{}\bQ^3)^2}{(\tr{}\bQ^2)^3} \, .
\end{align}
\end{definition}

\VY{It is well known that the vanishing of $\beta[\bQ]$ indicates that $\bQ$ is uniaxial \cite{majumdar2010equilibrium}, but the opposite limit, when it is equal to one, indicates the flat-degeneracy of liquid crystal state as shown in the following Lemma.}

\begin{lemma}[properties of biaxiality parameter]\label{L:biaxiality}
The biaxiality parameter satisfies
\begin{equation}\label{biaxiality-bounds}
0\le \beta[\bQ]\le 1 \qquad\forall \, \bQ \in \R^{3\times3}\,,\quad\bQ\neq 0.
\end{equation}
The minimal value, $\beta[\bQ]=0$, corresponds to $\bQ$ being uniaxial whereas the maximal value, $\beta[\bQ]=1$, corresponds to $\bQ$ being flat-degenerate.
\end{lemma}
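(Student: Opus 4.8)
The plan is to work directly with the eigenvalues $\lambda_1,\lambda_2,\lambda_3$ of the symmetric matrix $\bQ$, since both $\tr\bQ^2$ and $\tr\bQ^3$ are symmetric functions of them. Because $\bQ$ is traceless, $\lambda_1+\lambda_2+\lambda_3=0$, so there are really only two free parameters. A convenient normalization is to set $x_i:=\lambda_i/\sqrt{\tr\bQ^2}$ (well-defined since $\bQ\neq 0$), so that $\sum x_i = 0$ and $\sum x_i^2 = 1$; then $\beta[\bQ] = 1 - 6\big(\sum x_i^3\big)^2$, and the whole problem reduces to finding the range of $p_3:=\sum x_i^3$ over the one-parameter family $\{(x_1,x_2,x_3): \sum x_i=0,\ \sum x_i^2=1\}$. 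First I would record the Newton-type identity that for traceless triples $\sum x_i^3 = 3x_1x_2x_3$ (this follows from $p_3 = e_1 p_2 - e_2 p_1 + 3e_3$ with $e_1=p_1=0$), so $\beta[\bQ] = 1 - 54\,(x_1x_2x_3)^2 = 1 - 54\,e_3^2$.

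Next I would parametrize the constraint set explicitly. With $e_1=0$ and $p_2 = e_1^2 - 2e_2 = -2e_2 = 1$, we get $e_2 = -1/2$ fixed, so $x_1,x_2,x_3$ are the three roots of $t^3 - \tfrac12 t - e_3 = 0$, and the only remaining freedom is the value of $e_3$. The question becomes: for which $e_3$ does this cubic have three real roots, and what is the range of $e_3$? The discriminant of $t^3 + pt + q$ with $p=-1/2$, $q=-e_3$ is $-4p^3 - 27q^2 = \tfrac12 - 27 e_3^2 \ge 0$, i.e. $e_3^2 \le 1/54$. Hence $54 e_3^2 \in [0,1]$, which immediately gives $0 \le \beta[\bQ] \le 1$, establishing \eqref{biaxiality-bounds}. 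This is the cleanest route; alternatively one can substitute $x_i = \sqrt{2/3}\cos(\phi + 2\pi i/3)$ (the standard trigonometric solution of the depressed cubic satisfying the two constraints) and compute $x_1x_2x_3 = \tfrac{1}{\sqrt{54}}\cos(3\phi)$ directly, which makes the bounds transparent and pins down the extremal cases by inspection.

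For the characterization of the extremes: $\beta[\bQ]=0$ iff $e_3^2 = 1/54$ iff the cubic has a double root, i.e. two of the $x_i$ (hence two $\lambda_i$) coincide — and a traceless symmetric matrix with a repeated eigenvalue is exactly a uniaxial one, $\bQ = s(\bq\otimes\bq - \tfrac13\bI)$ with $\bq$ spanning the simple eigendirection; conversely a uniaxial $\bQ$ has eigenvalues proportional to $(2,-1,-1)$, giving $\beta=0$ by direct substitution. And $\beta[\bQ]=1$ iff $e_3 = 0$ iff $x_1x_2x_3 = 0$ iff some $\lambda_i = 0$, which is precisely the definition of flat-degeneracy (Definition~\ref{flat-degenerate}); the converse is the same computation run backwards. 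I would present both implications in each case so the "corresponds to" is a genuine iff.

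The main obstacle is essentially bookkeeping rather than conceptual: one must be careful that the reduction to eigenvalues is legitimate (it is, since $\tr\bQ^k$ depends only on the spectrum and the spectrum is real by symmetry of $\bQ$), that the normalization by $\sqrt{\tr\bQ^2}$ is allowed (it is, since $\bQ\neq 0 \Rightarrow \tr\bQ^2>0$), and that the discriminant sign condition is applied in the right direction (three real roots $\Leftrightarrow$ nonnegative discriminant). A minor subtlety worth a sentence: in the flat-degenerate case the labeling $\lambda_2=0,\ \lambda_1=-\lambda_3$ in Definition~\ref{flat-degenerate} is just a choice of which eigenvalue to call the middle one, and $\beta=1$ holds regardless of which $\lambda_i$ vanishes, so no loss of generality is incurred.
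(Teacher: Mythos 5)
Your proof is correct, and it takes a genuinely different route from the paper's, even though both reduce the problem to the (real) eigenvalues of $\bQ$. The paper eliminates $\lambda_2=-\lambda_1-\lambda_3$, introduces the ratio $\alpha=\lambda_3/\lambda_1$ (handling the caveat that one of $\lambda_1,\lambda_3$ is nonzero), and proves the bound by the explicit factorization $(\tr\bQ^2)^3-6(\tr\bQ^3)^2=2\lambda_1^6(1-\alpha)^2(2\alpha+1)^2(\alpha+2)^2\ge0$; the extreme cases $\beta=0$ and $\beta=1$ are then read off from the roots of that product and of $\tr\bQ^3$. You instead normalize so that $\sum x_i=0$, $\sum x_i^2=1$, invoke Newton's identity to get $\beta=1-54\,e_3^2$ with $e_3=x_1x_2x_3$, and obtain $e_3^2\le 1/54$ from the nonnegativity of the discriminant of the depressed cubic $t^3-\tfrac12 t-e_3$ whose roots are the $x_i$ (all real by symmetry of $\bQ$). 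This buys a symmetric treatment of the eigenvalues (no division by a possibly vanishing $\lambda_1$), a one-line formula for $\beta$, and a structural reason for both bounds: $\beta=1$ iff $e_3=0$ iff some eigenvalue vanishes (flat-degeneracy), and $\beta=0$ iff the discriminant vanishes iff two eigenvalues coincide, which for a nonzero traceless symmetric matrix is exactly uniaxiality — the same spectral characterizations the paper uses at the end. The paper's factorization, in exchange, delivers both characterizations simultaneously from one identity without appealing to discriminant theory. Your bookkeeping points (normalization legitimate since $\tr\bQ^2>0$ for $\bQ\ne0$, direction of the discriminant criterion, irrelevance of the eigenvalue labeling in Definition \ref{flat-degenerate}) are all handled correctly, so the argument is complete as proposed.
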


\begin{proof}
We exploit the spectral decomposition \eqref{spectral} to write $\bQ^j=\sum_{k=1}^3\lambda_k^j (\bq_k\otimes \bq_k)$ for $j\in\mathbb{N}$, whence $\tr \bQ^j = \sum_{k=1}^3 \lambda_k^j$.
In view of the definition \eqref{biax}, we have to prove
\[
0 \le 6 \frac{\big(\tr \bQ^3 \big)^2}{\big( \tr \bQ^2 \big)^3} \le 1.
\]
The leftmost inequality is trivial. The rightmost one entails the following tedious computation.
Since $\tr \bQ=0$ we let $\lambda_2=-\lambda_1-\lambda_3$ and rewrite the desired traces in terms of $\alpha=\lambda_1^{-1}\lambda_3$
\begin{align*}
\tr \bQ^2 &= \lambda_1^2 \big( 1 + (1+\alpha)^2 + \alpha^2 \big) = 2 \lambda_1^2 (1+\alpha + \alpha^2),
\\
\tr \bQ^3 &= \lambda_1^3 \big( 1 - (1+\alpha)^3 + \alpha^3 \big) = - 3 \lambda_1^3 \alpha (1+\alpha),
\end{align*}
because $\bQ\ne 0$ implies either $\lambda_1\ne0$ or $\lambda_3\ne0$.
Consequently, we obtain the asserted inequality
\begin{equation*}
  \big( \tr \bQ^2 \big)^3 - 6 \big( \tr \bQ^3 \big)^2 =
  2 \lambda_1^6 \big(2+3\alpha-3\alpha^2-2\alpha^3 \big)^2 = 
  2 \lambda_1^6 (1-\alpha)^2 ( 2\alpha+1 )^2 (\alpha+2)^2 \ge 0.
\end{equation*}

Moreover, this explicit expression reveals that $\beta[\bQ]=0$ is equivalent to either
$\alpha=1, \alpha = -\frac12$ or $\alpha = -2$. This in turn corresponds to $\lambda_1=\lambda_3$,
$\lambda_1=-2\lambda_3$ (i.e. $\lambda_2 = \lambda_3$) or $\lambda_3=-2\lambda_1$ (i.e. $\lambda_2 = \lambda_1$). According to the characterization of a uniaxial Q-tensor after Definition \ref{uni-bi}, we deduce that $\beta[\bQ]=0$ is equivalent to $\bQ$ 
being uniaxial.

In contrast, $\beta[\bQ]=1$ is equivalent to $\tr \bQ^3 = 0$ which reduces to either $\alpha=0$ or
$\alpha=-1$. This in turn reads either $\lambda_3 = 0$ (or symmetrically $\lambda_1=0$) or $\lambda_3=-\lambda_1$ (i.e. $\lambda_2 = 0$). According to Definition \ref{flat-degenerate}, we infer 
that $\beta[\bQ]=1$ if and only if $\bQ$ is flat degenerate. This completes the proof.
\end{proof}

Finally, the following result is proved in \cite[Proposition 1]{majumdar2010equilibrium}, but we state it as a lemma for further reference in Section \ref{numerics}.

\begin{lemma}[minimizer of $F$]\label{L:double-well}
Let the parameters $a,b,c$ of the double-well potential $F$ in \eqref{dw} satisfy
$a<0\,,b>0\,,c>0$. Then $F[\bQ]$ is minimized by a uniaxial state \eqref{uniax} with $s$ given by
\begin{align}\label{s_min}
    s_+=\frac{b+\sqrt{b^2-24ac}}{4c} \, .
\end{align}
\end{lemma}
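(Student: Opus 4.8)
The plan is to use the fact that $F$ is an isotropic function of $\bQ$, so $F[\bQ]$ depends only on the invariants $\tr\bQ^2$ and $\tr\bQ^3$ (recall $\tr\bQ=0$, $|\bQ|^2=\tr\bQ^2$, $\bQ:\bQ^2=\tr\bQ^3$, $|\bQ|^4=(\tr\bQ^2)^2$). Writing $r=\tr\bQ^2\ge 0$ we get $F[\bQ]=\frac a2 r-\frac b3\tr\bQ^3+\frac c4 r^2$, so on each level set $\{\tr\bQ^2=r\}$ the only term that is not fixed is $-\frac b3\tr\bQ^3$; since $b>0$, minimizing $F$ on that set amounts to \emph{maximizing} $\tr\bQ^3$. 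First I would invoke the algebra already carried out in the proof of Lemma~\ref{L:biaxiality}: parametrizing the eigenvalues by $\alpha=\lambda_3/\lambda_1$ yields $(\tr\bQ^2)^3-6(\tr\bQ^3)^2=2\lambda_1^6(1-\alpha)^2(2\alpha+1)^2(\alpha+2)^2\ge 0$, hence $\tr\bQ^3\le \frac1{\sqrt6}(\tr\bQ^2)^{3/2}$, with equality precisely when $\bQ$ is uniaxial and $\tr\bQ^3\ge 0$. For a uniaxial state $\bQ=s(\bq\otimes\bq-\frac13\bI)$ one computes $\tr\bQ^2=\frac23 s^2$ and $\tr\bQ^3=\frac29 s^3$, so the maximizer on $\{\tr\bQ^2=r\}$ (with $r>0$) is the uniaxial state with $s=+\sqrt{3r/2}>0$, for which $\tr\bQ^3=\frac1{\sqrt6}r^{3/2}$.

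Consequently the problem collapses to the scalar minimization $\min_{r\ge 0} g(r)$ with $g(r)=\frac a2 r-\frac{b}{3\sqrt6}r^{3/2}+\frac c4 r^2$; substituting $t=\sqrt r\ge 0$ this is $\min_{t\ge 0}\tilde g(t)$ with $\tilde g(t)=\frac a2 t^2-\frac{b}{3\sqrt6}t^3+\frac c4 t^4$ and $\tilde g'(t)=t\bigl(ct^2-\frac{b}{\sqrt6}t+a\bigr)$. Since $a<0$, the quadratic factor is negative near $t=0^+$ and positive for large $t$, and its discriminant $\frac{b^2}{6}-4ac$ is automatically positive (because $ac<0$), so it has a unique positive root $t_+=\bigl(\frac{b}{\sqrt6}+\sqrt{\frac{b^2}{6}-4ac}\bigr)/(2c)$. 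As $\tilde g$ is coercive ($c>0$), $\tilde g(0)=0$ is a local maximum ($\tilde g''(0)=a<0$), and $t_+$ is the only critical point in $(0,\infty)$, we conclude that $t_+$ is the global minimizer and $\tilde g(t_+)<0$; in particular the minimum of $F$ is attained (also guaranteed by coercivity of $F$ on the closed set of symmetric traceless matrices, since $\frac c4|\bQ|^4$ dominates) and is realized by the uniaxial state identified above.

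Finally I would translate $t_+$ back into the order parameter: the minimizing $\bQ$ is uniaxial with $\tr\bQ^2=r_+=t_+^2$ and $\tr\bQ^3>0$, so $s=\sqrt{3/2}\,t_+$, and the short simplifications $\sqrt{3/2}\cdot\frac{b}{\sqrt6}=\frac b2$ and $\sqrt{3/2}\cdot\sqrt{\frac{b^2}{6}-4ac}=\frac12\sqrt{b^2-24ac}$ give exactly $s=(b+\sqrt{b^2-24ac})/(4c)=s_+$. The only mildly delicate points are (i) making sure that the equality case of the biaxiality bound together with the sign $\tr\bQ^3>0$ genuinely forces the minimizer onto the branch $s=s_+$ (ruling out its biaxial competitors on the same level set and the branch $s=-\sqrt{3r_+/2}$), and (ii) the bookkeeping with the nested radicals; everything else is routine one-variable calculus and I expect no real obstacle. (Alternatively one could simply cite \cite{majumdar2010equilibrium}, Proposition~1, but the self-contained argument above is short enough to include.)
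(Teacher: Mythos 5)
Your proof is correct. Note that the paper itself does not prove this lemma: it is stated with a pointer to Proposition~1 of \cite{majumdar2010equilibrium}, so there is no in-paper argument to compare against line by line. Your route is the standard one and, pleasantly, it is self-contained within the paper's own machinery: you reuse the factorization $(\tr\bQ^2)^3-6(\tr\bQ^3)^2=2\lambda_1^6(1-\alpha)^2(2\alpha+1)^2(\alpha+2)^2$ from the proof of Lemma~\ref{L:biaxiality} to get the sharp bound $\tr\bQ^3\le \tfrac{1}{\sqrt6}(\tr\bQ^2)^{3/2}$ with equality exactly at uniaxial states of nonnegative sign, which (since $b>0$) reduces the minimization of $F$ in \eqref{dw} at fixed $|\bQ|^2=r$ to the uniaxial branch $s=\sqrt{3r/2}>0$. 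The remaining one-variable analysis checks out: with $t=\sqrt r$ one has $\tilde g'(t)=t\bigl(ct^2-\tfrac{b}{\sqrt6}t+a\bigr)$, the quadratic factor has a unique positive root because $a<0<c$ forces roots of opposite sign, $t=0$ is a local maximum, coercivity gives attainment, and the conversions $\sqrt{3/2}\cdot\tfrac{b}{\sqrt6}=\tfrac b2$ and $\sqrt{3/2}\cdot\sqrt{\tfrac{b^2}{6}-4ac}=\tfrac12\sqrt{b^2-24ac}$ land exactly on \eqref{s_min}; I also verified the uniaxial traces $\tr\bQ^2=\tfrac23 s^2$, $\tr\bQ^3=\tfrac29 s^3$ that your reduction relies on. The two delicate points you flag are indeed handled: equality in the biaxiality bound together with $\tr\bQ^3>0$ at the optimum rules out biaxial competitors and the negative branch $s=-\sqrt{3r/2}$. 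So your argument is a valid substitute for the external citation; the only trade-off is length versus the paper's one-line reference, and in exchange you get a proof whose only ingredient beyond calculus is a computation the paper has already carried out.
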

%

\subsection{Non-conformity parameter}\label{S:non-conformity}

In this section we turn to the representation of Q-tensors on surfaces and introduce a scalar field which quantifies the notion of non-conformity of Q-tensors. In fact, we extend the representation of conforming Q-tensors of \cite[equation (5)]{nestler2020properties} to arbitrary $\bQ\in\R^{3\times3}$.

Strictly speaking, since we are concerned with general Q-tensors there is no a priori relation between the surface $\Gamma$ and the eigenframe of $\bQ$. Nevertheless, we would like to split the general state of the Q-tensor into liquid crystal states related to the normal and tangent subspaces to $\Gamma $.

Given the unit normal $\bn$ to $\Gamma$ and a tangent director field $\bq_\Gamma$, i.e. $\bq_\Gamma\cdot\bn=0$, we consider sums $\mathbf{N}_\Gamma+\bT_\Gamma$ of arbitrary traceless matrices of the form
\begin{align}\label{homeo_form}
    \mathbf{N}_\Gamma=n_\Gamma \left(\bn\otimes\bn-\frac{1}{2}\bP\right)=\frac32 n_\Gamma \left(\bn\otimes\bn-\frac{1}{3}\bI\right)\,,\quad{}\bT_\Gamma=s_\Gamma\left(\bq_\Gamma\otimes\bq_\Gamma-\frac12\bP\right) \, ,
    \end{align}
where $\mathbf{N}_\Gamma$ is an uniaxial homeotropic Q-tensor (normal to $\Gamma$), and $\bT_\Gamma$ is a flat-degenerate Q-tensor tangent to $\Gamma$.
Note that $\beta[\bN_\Gamma]=0$ and $\beta[\bT_\Gamma]=1$ and that both $\bN_\Gamma$ and $\bT_\Gamma$ are always conforming. 
Clearly, an arbitrary Q-tensor $\bQ\in\R^3$ cannot be represented by such sums if its eigenframe does not include the normal vector. Therefore, for $\bQ\in\R^3$ we define its traceless conforming (normal and tangential) components $\bN_\Gamma[\bQ]$ and $\bT_\Gamma[\bQ]$ by minimizing the residual with respect to $n_\Gamma$ and $\bT_\Gamma$:
\begin{align}\label{minim}
&\min\limits_{n_\Gamma, \bT_\Gamma}|\bR_\Gamma{}|^2\,,\qquad\bR_\Gamma{}= \bQ-\mathbf{N}_\Gamma -\bT_\Gamma \,.
\end{align}

\begin{lemma}[homeotropic decomposition of $\bQ$ on $\Gamma$]\label{L:homeotropic-decomposition}
An arbitrary Q-tensor $\bQ\in\R^{3\times3}$ admits the orthogonal decomposition $\bQ=\bN_\Gamma[\bQ]+\bT_\Gamma[\bQ]+\bR_\Gamma[\bQ]$ into three traceless symmetric tensors, where $\bN_\Gamma[\bQ]$ is a uniaxial $Q$-tensor given by \eqref{homeo_form} with 
$n_\Gamma=\bn^T\bQ\bn$, $\bT_\Gamma[\bQ]$ is a flat-degenerate tangent Q-tensor, and $\bR_\Gamma[\bQ]$
has minimal Frobenius norm. Moreover, they satisfy 
$|\bQ|^2 = |\bN_\Gamma[\bQ]|^2 + |\bT_\Gamma[\bQ]|^2 + |\bR_\Gamma[\bQ]|^2$ and are given by the expressions
\begin{equation}\label{Q_decomp_traceless}
\begin{aligned} 
\bN_\Gamma[\bQ]&=n_\Gamma\left(\bn\otimes\bn-\frac12\bP\right)\,,\quad  \bT_\Gamma[\bQ]=\bP\bQ\bP+\frac{n_\Gamma}{2}\bP\,,
\\
\bR_\Gamma[\bQ]&=(\bQ-n_\Gamma\bI)\bn\otimes{}\bn+\bn\otimes{}(\bQ-n_\Gamma\bI)\bn\,.
\end{aligned}
\end{equation}
\end{lemma}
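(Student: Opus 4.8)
The claim is a concrete minimization-plus-orthogonality statement, so the natural strategy is to solve the least-squares problem in \eqref{minim} directly and then verify the Pythagorean identity by checking pairwise orthogonality of the three summands. First I would fix the normal $\bn$ and the tangent director $\bq_\Gamma$ and treat $n_\Gamma$ and $s_\Gamma$ (equivalently $\bT_\Gamma$ through its one-parameter family for a fixed $\bq_\Gamma$, and then varying $\bq_\Gamma$) as the unknowns. The residual $\bR_\Gamma = \bQ - \bN_\Gamma - \bT_\Gamma$ has a block structure once we decompose $\R^{3\times3}$ (on symmetric traceless tensors) into: the span of $\bn\otimes\bn - \tfrac12\bP$ (one dimension), the space of tangent traceless symmetric $2\times2$ blocks $\bP\bA\bP$ with $\bA$ symmetric (two dimensions, which is exactly where $\bT_\Gamma$ lives as $\bq_\Gamma$ ranges over directors and $s_\Gamma$ over reals), and the space of ``mixed'' tensors of the form $\bn\otimes\bw + \bw\otimes\bn$ with $\bw$ tangent (two dimensions). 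These three subspaces are mutually orthogonal under the Frobenius inner product and they span all of the five-dimensional space of symmetric traceless matrices.

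The key computational step is to write $\bQ$ in this block form: the $\bn$-$\bn$ component of $\bQ$ is the scalar $\bn^T\bQ\bn =: n_\Gamma$; the purely tangent part is $\bP\bQ\bP$ (which is a symmetric tangent $2\times2$ block, but not traceless in general — its trace along $\Gamma$ equals $\tr\bQ - n_\Gamma = -n_\Gamma$, so its traceless part is $\bP\bQ\bP + \tfrac{n_\Gamma}{2}\bP$); and the mixed part is $(\bQ - n_\Gamma\bI)\bn\otimes\bn + \bn\otimes(\bQ - n_\Gamma\bI)\bn$, which is precisely the proposed $\bR_\Gamma[\bQ]$. Since $\bN_\Gamma$ lives in the first subspace and $\bT_\Gamma$ in the second, while the mixed part is untouched by both, the residual norm $|\bR_\Gamma|^2$ decomposes as $|\text{(normal defect)}|^2 + |\text{(tangent defect)}|^2 + |\bR_\Gamma[\bQ]|^2$, and is minimized by matching the first two blocks exactly: this forces $n_\Gamma = \bn^T\bQ\bn$ and $\bT_\Gamma[\bQ] = \bP\bQ\bP + \tfrac{n_\Gamma}{2}\bP$ (the latter being automatically flat-degenerate since it is a traceless symmetric tensor supported on the two-dimensional tangent plane, hence has eigenvalue $0$ in the $\bn$-direction). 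One then records that $\bN_\Gamma[\bQ]$ as in \eqref{Q_decomp_traceless} is uniaxial by its very form $\tfrac32 n_\Gamma(\bn\otimes\bn - \tfrac13\bI)$, matching Definition~\ref{uni-bi}. The Pythagorean identity $|\bQ|^2 = |\bN_\Gamma[\bQ]|^2 + |\bT_\Gamma[\bQ]|^2 + |\bR_\Gamma[\bQ]|^2$ is then immediate from the mutual orthogonality of the three subspaces.

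I would carry out the steps in this order: (1) set up the orthogonal decomposition of symmetric traceless $3\times3$ tensors relative to $\bn$ into normal/tangent/mixed blocks and check orthogonality via direct Frobenius-inner-product computations (e.g. $(\bn\otimes\bn-\tfrac12\bP):(\bn\otimes\bw+\bw\otimes\bn) = 0$ for tangent $\bw$, using $\bP\bn = 0$ and $\tr(\bn\otimes\bw) = \bn\cdot\bw = 0$); (2) project $\bQ$ onto each block and read off the candidate formulas; (3) argue that the residual is minimized iff the normal and tangent blocks are matched exactly, yielding the stated $n_\Gamma$ and $\bT_\Gamma[\bQ]$, with $\bR_\Gamma[\bQ]$ being the leftover mixed part; (4) verify the structural properties — $\bN_\Gamma[\bQ]$ uniaxial, $\bT_\Gamma[\bQ]$ flat-degenerate (eigenvalue $0$ along $\bn$), all three traceless and symmetric; (5) conclude the norm identity from orthogonality.

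The main obstacle, such as it is, is bookkeeping rather than conceptual: one must be careful that $\bP\bQ\bP$ is \emph{not} traceless on its own and that the correction $+\tfrac{n_\Gamma}{2}\bP$ is exactly what restores tracelessness while keeping the tensor tangent — this is where the $-\tfrac12\bP$ in the ansatz \eqref{homeo_form} is doing its work. A secondary point to get right is confirming that the two-parameter family $\{s_\Gamma(\bq_\Gamma\otimes\bq_\Gamma - \tfrac12\bP) : \bq_\Gamma \perp \bn,\ s_\Gamma\in\R\}$ actually sweeps out the \emph{entire} two-dimensional space of tangent traceless symmetric tensors (so that the unconstrained projection onto that subspace is attainable), which follows from spectral decomposition of a symmetric $2\times2$ traceless matrix. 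Everything else is routine linear algebra once the block structure is in place.
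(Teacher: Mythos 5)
Your proposal is correct, and it reaches the result by a genuinely different (though closely related) route than the paper. You first split the five-dimensional space of symmetric traceless matrices, relative to $\bn$, into three mutually orthogonal blocks --- the span of $\bn\otimes\bn-\tfrac12\bP$, the tangent traceless symmetric tensors, and the mixed tensors $\bn\otimes\bw+\bw\otimes\bn$ with $\bw$ tangent --- and obtain \eqref{Q_decomp_traceless} as the orthogonal projections of $\bQ$ onto these blocks, so that the minimality of $\bR_\Gamma[\bQ]$ and the Pythagorean identity follow at once from the block orthogonality. The paper instead treats \eqref{minim} as a constrained least-squares problem: it expands $|\bR_\Gamma|^2$, sets to zero the first variations with respect to $n_\Gamma$ and with respect to $\bT_\Gamma$ (varied within tangent traceless symmetric tensors, i.e. along $\bP\bC\bP-\tfrac12\tr(\bP\bC\bP)\bP$), solves for the optimal values, and only then checks $\bN_\Gamma[\bQ]:\bT_\Gamma[\bQ]=0$ directly, with the orthogonality of $\bR_\Gamma[\bQ]$ to $\bN_\Gamma[\bQ]+\bT_\Gamma[\bQ]$ obtained as a reformulation of the minimization property. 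Your projection argument buys a more transparent norm identity (all three cross terms vanish by construction) and makes uniqueness of the minimizer evident, but it does require the point you explicitly flag: that the family $s_\Gamma(\bq_\Gamma\otimes\bq_\Gamma-\tfrac12\bP)$ in \eqref{homeo_form} sweeps out the entire two-dimensional tangent traceless block, which you settle correctly via the spectral decomposition of a traceless symmetric tangent tensor; the paper's variational computation sidesteps this attainability issue at the price of a more tedious expansion. Both arguments use the tracelessness of $\bQ$ in the same place, namely $\tr(\bP\bQ\bP)=-\bn^T\bQ\bn$, to identify $n_\Gamma=\bn^T\bQ\bn$, and in both the structural claims (uniaxiality of $\bN_\Gamma[\bQ]$, flat-degeneracy and tangency of $\bT_\Gamma[\bQ]$, tracelessness and symmetry of all three pieces) are read off the explicit formulas.
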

\begin{proof}
We expand the residual
\begin{align*}
|\bR_\Gamma{}|^2 &=(\bQ- \bT_\Gamma- n_\Gamma \bn\bn^T-\frac{n_\Gamma}{2}\bP ):(\bQ- \bT_\Gamma - n_\Gamma \bn\bn^T -\frac{n_\Gamma}{2}\bP)
\\
&=\bQ:\bQ-2\bQ:\bT_\Gamma+\bT_\Gamma:\bT_\Gamma- 2n_\Gamma(\bQ-  \bT_\Gamma):\bn\bn^T-n_\Gamma\bP:(\bQ-  \bT_\Gamma) + \frac32 n_\Gamma^2 \,,
\end{align*}
and compute its first variations. Since $n_\Gamma$ is scalar, we readily have
\begin{equation*}
0=\frac{\partial}{\partial{}n_\Gamma}|\bR_\Gamma{}|^2=- 2\bQ:\bn\bn^T+\tr{}(\bP\bQ\bP-\bT_\Gamma)+3n_\Gamma \, .
\end{equation*}
On the other hand, since $\bT_\Gamma$ is traceless and tangent to $\Gamma$, a general variation of $\bT_\Gamma$ reads $\bP \bC \bP - \frac12 \tr \big( \bP \bC \bP \big) \bP$ for an arbitrary symmetric matrix $\bC \in \R^{3\times3}$. Consequently, a tedious computation of $\frac{\partial}{\partial{}\bT_\Gamma}|\bR_\Gamma{}|^2 : \bC$ using that $\tr \bP=2$ yields
\begin{equation*}
0=\frac{\partial}{\partial{}\bT_\Gamma}|\bR_\Gamma{}|^2=- 2\bP\bQ\bP - \tr{(\bP\bQ\bP)}\bP +2\bT_\Gamma \, . 
\end{equation*}
Exploiting that $\tr (\bT_\Gamma) = 0$, these equations give the optimal values
\begin{equation*}
n_\Gamma=\frac{2}{3}\bn^T\bQ\bn-\frac13\tr{}(\bP\bQ\bP)\,,\qquad \bT_\Gamma=\bP\bQ\bP-\frac12\tr{(\bP\bQ\bP)}\bP\,.
\end{equation*}
Since
\begin{equation*}
  \bP \bQ \bP = \big(\bI - \bn\otimes\bn \big)\bQ \big(\bI - \bn\otimes\bn\big)
  = \bQ - \bn\otimes\bQ\bn - \bQ\bn\otimes\bn + \big(\bn^T\bQ\bn\big) \bn\otimes\bn \, ,
\end{equation*}
and $\bQ$ is traceless, we deduce $\tr\big( \bP \bQ \bP \big)=- \bn^T \bQ \bn$
whence $n_\Gamma = \bn^T \bQ \bn$ and the expressions for $\bT_\Gamma$ and $\bR_\Gamma$ in \eqref{Q_decomp_traceless} follow immediately. Moreover, $\bN_\Gamma[\bQ]$ and $\bT_\Gamma[\bQ]$ are orthogonal because
\begin{equation*}
    \bN_\Gamma[\bQ] : \bT_\Gamma[\bQ] = n_\Gamma \Big(\bn\otimes\bn - \frac12 \bP\Big) : \Big(\bP\bQ\bP + \frac{n_\Gamma}{2} \bP \Big) = \frac{n_\Gamma}{2} \big( \bn^T\bQ\bn - n_\Gamma \big) = 0
    \, ,
\end{equation*}
whence the minimization property \eqref{minim} is equivalent to the orthogonality of $\bR_\Gamma[\bQ]$ and $\bN_\Gamma[\bQ] + \bT_\Gamma[\bQ]$.
This concludes the proof.
\end{proof}

Now we are in a position to introduce a quantitative measure of non-conformity for an arbitrary Q-tensor $\bQ\in\R^3$. Since both $\bN_\Gamma[\bQ]$ and $\bT_\Gamma[\bQ]$ are conforming to $\Gamma$, possible non-conformity of $\bQ$ is dictated by the solution $\bR_\Gamma[\bQ]$ of the minimization problem \eqref{minim}. The relation $|\bR_\Gamma[\bQ]|\le |\bQ|$ motivates the forthcoming definition. 

\begin{definition}[non-conformity parameter] The \textit{non-conformity parameter} $r_\Gamma[\bQ]$ of an arbitrary $\bQ\in\R^{3\times3}$ on $\Gamma$ is the fraction $0\leq r_\Gamma[\bQ] \leq 1$ defined by
\begin{align}\label{nonconf_param}
r_\Gamma[\bQ]:=\frac{|\bR_\Gamma{}[\bQ]|}{|\bQ|} \, .
\end{align}
\end{definition}

\begin{remark}
We see that $r_\Gamma[\bQ]=0$ if and only if $\bQ=\bN_\Gamma[\bQ]+\bT_\Gamma[\bQ]$ or equivalently $\bR_\Gamma[\bQ]=0$. In contrast, $r_\Gamma[\bQ]=1$ if and only if $\bN_\Gamma[\bQ]=\bT_\Gamma[\bQ]=0$ or equivalently $n_\Gamma=0$ and
\begin{equation*}
    \bQ = \bR_\Gamma[\bQ] = \bQ\bn\otimes\bn+\bn\otimes\bQ\bn \, .
\end{equation*}
Therefore, if $r_\Gamma[\bQ]=1$ we infer that $\bQ\bn$ is tangent to $\Gamma$ because $n_\Gamma = \bn^T\bQ\bn=0$ and $\bQ\bq$ is perpendicular to $\Gamma$ for any tangent vector $\bq$ because $0=\bT_\Gamma[\bQ] \bq = \bP \bQ\bq$.
\end{remark}

\begin{remark}
If $\bQ$ is conforming to $\Gamma$, i.e. $r_\Gamma[\bQ]=0$, and its normal component $\bN_\Gamma[\bQ]=0$, then $\bQ=\bT_\Gamma[\bQ]$
is flat degenerate. Therefore, according to Lemma \ref{L:biaxiality}, $\bQ$ is biaxial and the biaxiality parameter
$\beta[\bQ]=1$ is maximal.
\end{remark}

\subsection{Enforcing Conformity: The Hess-Osipov Energy}\label{S:osipov-hess}
We consider enforcing conformity through penalization. Natural penalizations are the following physically justified energies, which can be found in \cite[Eq. (4)]{nestler2020properties} and \cite[Eq. (8)]{golovaty2017dimension}, and are closely tied to the energies found in \cite[Eq. (4)]{golovaty2015dimension} and \cite[Eq. (7)]{osipov1993density}:
\begin{align}\label{pen+norm}
E_{pen}[\bQ] := \gamma\int_\Gamma |\bP\bQ\bn|^2,
\quad
E_{norm}[\bQ] := \alpha\int_\Gamma \big|\bn^T\bQ\bn-\delta \big|^2.
\end{align}
The energy $E_{pen}[\bQ]$ with penalty parameter $\gamma$ weakly enforces that $\bQ\bn$ be normal. In fact,
in the limit $\gamma\rightarrow\infty$, the energy $E_{pen}[\bQ]$ is minimized provided $\bP\bQ\bn=0$ or equivalently if $\bQ\bn$ is normal. This is the strong form of conformity according to \eqref{conf_def}.

On the other hand, the energy $E_{norm}[\bQ]$ with penalty parameter $\alpha>0$ enforces a value $\delta$ of orientational order in the normal direction. For conforming Q-tensors the value $n_\Gamma=\bn^T\bQ\bn$ is the eigenvalue of $\bQ$ in the normal direction $\bn$ according to Lemma \ref{L:homeotropic-decomposition} (homeotropic decomposition of $\bQ$ on $\Gamma$). However, unless $\bn$ is an eigenvector of $\bQ$, the limit $\alpha\rightarrow\infty$ only penalizes the deviation of $\bn^T\bQ\bn$ from $\delta$, which may vary along $\Gamma$.

The role of \eqref{pen+norm} will be computationally explored in Section \ref{S:normal_anchoring} and Section \ref{S:conformity}.

\section{Exploration of the surface Beris--Edwards model}\label{numerics}
 In this section we explore computationally basic properties of the surface Beris--Edwards model presented in Section \ref{surf_BE}. We resort to the biaxiality parameter $\beta[\bQ]$ of Section \ref{representation} and the non-conformity parameter $r_\Gamma[\bQ]$ of Section \ref{S:non-conformity} to interpret and display our results. We start in Section \ref{LdGonsphere} with the kinematics of the \textit{surface Landau--de\,Gennes} equation without transport of momentum. In Section \ref{coupling_forces} we compute profiles of the Leslie force \eqref{stresses} and the Ericksen and star forces  \eqref{ericksen-star} on some simple Q-tensor configurations with a defect; this provides basic intuition on the thermodynamical coupling of the Q-tensor and the momentum transport on surfaces. 
 In section \ref{sec_nonconformity} we demonstrate computationally that the transition between two conforming states may occur through non-conforming intermediate states. Finally, we show in Section \ref{sec_instability} why the relaxation of the conformity assumption \eqref{conf_def} may be critical for the modeling of liquid crystal films. We consider a homeotropic, radially symmetric Q-tensor on a unit sphere and investigate the influence of the weak anchoring on the stability of this Q-tensor configuration.
 
\subsection{Landau--de Gennes dynamics on a sphere} \label{LdGonsphere}

In this section we consider the \textit{surface Landau--de\,Gennes} model from Section \ref{surf_BE} without the momentum equation, and explore the main kinematical and dynamical properties of this simplified model. For all experiments in this section, we set the mobility $M$, the elastic constant $L$, and the parameters of the double-well potential $a,b,c$ in \eqref{der_dw} to be
\begin{equation*}
    M=1, \quad L=1, \quad a = -5, \quad b=1, \quad c=10.
\end{equation*}
Consequently, the equilibrium value \eqref{s_min} of the order parameter is $s_+ \approx 0.60$.

\subsubsection{Passive corotational transport of a non-conforming Q-tensor}
The first numerical simulation demonstrates the action of the corotational derivative $\overset{\circ}{\bQ}$ defined in \eqref{corotation_defintiion2}. To this end,
we consider the passive velocity $\bv(x,y,z)=\pi\be_z\times(x,y,z)$
for $(x,y,z)\in\Gamma$ over the unit sphere $\Gamma$; $\bv$ is tangent to $\Gamma$ and corresponds to a rigid rotation of $\Gamma$ around the axis $\be_z$. We examine the passive transport equation $\overset{\circ}{\bQ}=0$ dictated by $\bv$ over $\Gamma$ where the initial condition $\bQ_0$ of $\bQ$ is uniaxial
\begin{align}\label{passive_model}
    \bQ_0(x,y,z) := s_0\left(\bq_0\otimes\bq_0-\frac13\bI\right)\,,\quad(x,y,z)\in\Gamma\,,
\end{align}
and the order parameter $s_0$ and director $\bq_0$ are given by
\begin{equation}\label{ICs}
s_0(x,y,z)=s_+ \big(1+\exp(-20(y-0.6)) \big)^{-1}\, ,
\quad 
\bq_0(x,y,z)=(1,1,0)/\sqrt{2}\, .
\end{equation}
We stress that $\bq_0$ is neither normal nor tangent, and $s_0$ localizes $\bQ_0$ to a spherical cap
    $0.6 \lesssim y$. We display the passive dynamics in Figure \ref{passive_corot_transport2}.

\begin{figure}[ht!]
\vskip0.2cm
\hskip 2cm
\begin{center}
{
 a)\quad
\begin{overpic}[abs,height=0.25\textwidth,
unit=1mm,
]
{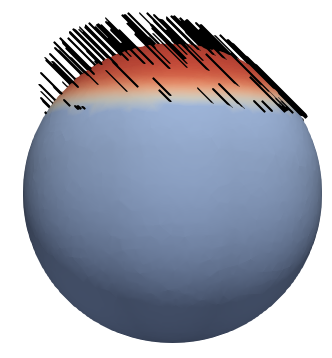}
\put(14, 38){\small{$t = 0$}}
\end{overpic}
\hskip 0.5cm
\begin{overpic}[abs,height=0.25\textwidth,
unit=1mm,
]
{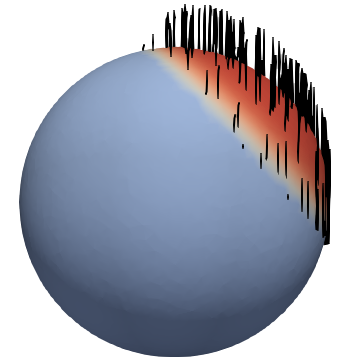}
\put(14, 38){\small{$t = 0.25$}}
\end{overpic}
\hskip 0.5cm
\begin{overpic}[abs,height=0.225\textwidth,
unit=1mm,
]
{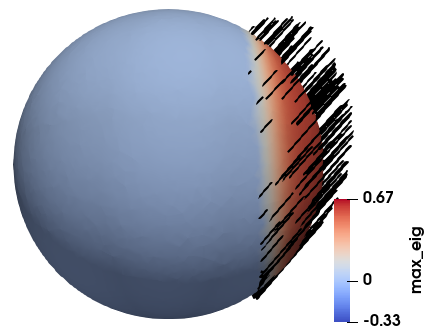}
\put(14, 38){\small{$t = 0.5$}}
\end{overpic}
}
\end{center}
 
\begin{center}
{
 b)\quad
\begin{overpic}[abs,height=0.25\textwidth,
unit=1mm,
]
{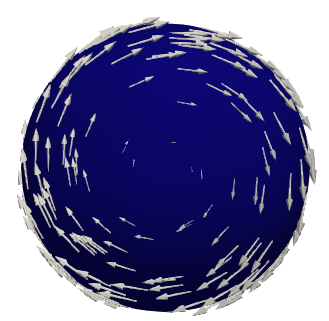}
\end{overpic}
\hskip 0cm
\begin{overpic}[abs,height=0.25\textwidth,
unit=1mm,
]
{corot/hair_biaxiality_0.png}
\end{overpic}
\hskip 0.1cm
\begin{overpic}[abs,height=0.235\textwidth,
unit=1mm,
]
{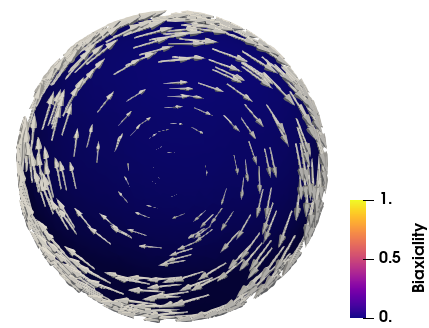}
\end{overpic}
}
\end{center}
\caption{\label{passive_corot_transport2}
\small{
Passive surface corotational transport of an initially uniaxial Q-tensor along a rotation of the unit sphere $\Gamma$ given by the prescribed velocity field $\bv=\pi\be_z\times(x,y,z)$ for $(x,y,z)\in\Gamma$;
all pictures display the $(x,y)$-plane so $\be_z$ is perpendicular to it.
a) largest eigenvalue of $\bQ$ and corresponding oblique eigenvector $\bq$.
b) biaxiality parameter $\beta[\bQ]$ of \eqref{biax} and velocity field $\bv$. Since $\beta[\bQ]$ stays close to zero, $\bQ$ remains uniaxial with respect to $\bq$. The uniaxial Q-tensor state $(s,\bq)$ is uniform on the spherical cap and rotates rigidly. Therefore, the entire Q-tensor eigenframe moves along the sphere as if the ambient space experiences the rotation.}  }
\label{pas_tr}
\end{figure}

According to Theorem \ref{transport_decomp_Q}, and the property that a Q-tensor with two equal eigenvalues is uniaxial, the solution to $\overset{\circ}{\bQ}$ with initial condition \eqref{passive_model} is the uniaxial Q-tensor
\begin{align}\label{w1}
\bQ(x,y,z,t)=s\left(\bq\otimes\bq-\frac13\bI\right)\,,\quad(x,y,z)\in\Gamma\,, 
\end{align}
where $s=s(x,y,z,t)$ and $\bq=\bq(x,y,z,t)$ satisfy the initial value problems on $\Gamma$
\begin{align}\label{ivp-passive}
    \dot{s}=0\,,\quad s(0)=s_0\,,\qquad\qquad\overset{\circ}{\bq}=0\,,\quad\bq(0)=\bq_0\,.
\end{align}
In view of Lemma \ref{charact_corot}, the director field $\bq$ admits the decomposition $\bq=\bq_T+q_N\bn$ in terms of normal component $q_N$ and tangential component $\bq_\Gamma$, which satisfy the following initial value problems:
\begin{align*}
    \dot{q}_N=0\,,\quad q_N(0)=(\bq_0)_N\,,\qquad \pa_t\bq_T+(\nablaG\bq_T)\bv-\bW_\Gamma(\bv)\bq_T=0\,,\quad\bq_T(0)=(\bq_0)_T\,,
\end{align*}
Since $\bv$ is a rotation of the sphere, the solution $(s(t),\bQ(t))$ of the initial value problem \eqref{ivp-passive} is just the rigidly rotated initial condition $(s_0,\bq_0)$. This solution of \eqref{ivp-passive} for $t\in [0,0.5]$ is shown in Figure~\ref{pas_tr}: at the final time $t=0.5$ the solution $(s(t),\bQ(t))$ has rotated $\pi/2$ around $\be_z$ and the biaxiality parameter $\beta[\bQ]\approx0$. This corroborates that $\bQ(t)$ remains uniaxial for all $t$.

It is worth realizing that if one did not use the covariant spin tensor $\bW_\Gamma$ defined in \eqref{strain} in the transport of the tangent component $\bq_T$, then the parallel transport \eqref{pt} would not result in the rotated solution $\bq(t)$ (and $\bQ(t)$). This numerical example highlights the importance of corotational derivatives \eqref{corotation_defintiion} and \eqref{corotation_defintiion2} for the kinematics of liquid crystal films.

\subsubsection{Diffusion of a uniaxial Q-tensor.}
In this example, we explore the so-called ``dry" case of the surface Beris--Edwards model. The Q-tensor changes are driven solely by the interaction of elastic energy and double-well potential $F$ in the Landau - de\,Gennes energy \eqref{eldg} in the absence of momentum transport. We thus set $\bu=0$ (no fluid) in the system \eqref{beris-edwards} from Section \ref{surf_BE}, thereby resulting in the gradient flow dynamics for the surface Landau--de\,Gennes energy $E_{LdG}[\bQ,\nabla_M\bQ]$
\begin{equation}\label{LdG_dynamics}
\begin{aligned}
    {\bf H} +\mathcal{P}F'[\bQ] &=   L\,\mathcal{P}\divM\nablaM{\bQ} \,, 
    \\
                  \pa_t{\bf Q}     &= M{\bf H} \,,
\end{aligned}
\end{equation}
which is supplemented with the initial condition $\bQ(0)=\bQ_0$ from \eqref{passive_model}.
The initial value problem \eqref{LdG_dynamics} is solved numerically on a unit sphere $\Gamma$ and the results are displayed on Figure \ref{diff_tr}. The numerical solution exhibits two crucial aspects of the Landau--de\,Gennes dynamics. First, since the initial condition is localized approximately to the spherical cap $y>0.6$, the Q-tensor state diffuses due to the term $L\,\mathcal{P}\divM\nablaM{\bQ}$ in \eqref{LdG_dynamics}. Second, the order parameter $s$ is zero away from a spherical cap that expands downwards (light blue on Figure~\ref{diff_tr}a). The nonlinear term $\mathcal{P}F'[\bQ]$ in \eqref{LdG_dynamics}, associated with the double-well potential $F[\bQ]$, drives the order parameter $s$ everywhere to the value $s_+$ that minimizes $F[\bQ]$ according to Lemma \ref{L:double-well}. In addition, the director field $\bq$ stays parallel to the initial value $\bq_0$ to minimize the elastic energy in \eqref{eldg}. The solution is thus uniaxial and given by \eqref{w1}. This is corroborated by Figure~\ref{diff_tr}b, which depicts the biaxiality parameter $\beta[\bQ]$ defined in \eqref{biax}. In fact, $\beta[\bQ]\approx0$ for all times in the entire surface, which is only possible if $\bQ$ is uniaxial according to Lemma \ref{L:biaxiality}. Therefore, the uniaxial evolution of the Q-tensor field is preferable to avoid competition between the elastic and potential energies that give rise to $E_{LdG}[\bQ,\nabla_M\bQ]$ in  \eqref{eldg}, provided the initial director field $\bq_0$ is constant and the corresponding elastic energy vanishes.

\vskip 0.5cm
\begin{figure}[ht!]
\begin{center}
{
 a)\quad
\begin{overpic}[abs,height=0.25\textwidth,
unit=1mm,
]
{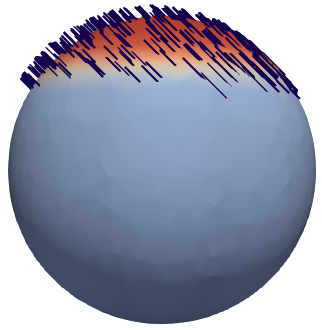}
\put(14, 40){\small{$t = 0$}}
\end{overpic}
\hskip 0.1cm
\begin{overpic}[abs,height=0.25\textwidth,
unit=1mm,
]
{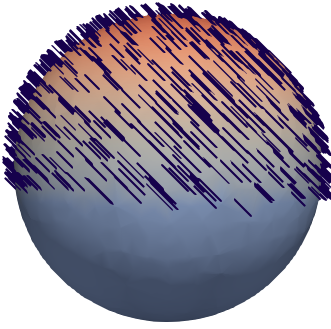}
\put(14, 40){\small{$t = 0.2$}}
\end{overpic}
\hskip 0.1cm
\begin{overpic}[abs,height=0.25\textwidth,
unit=1mm,
]
{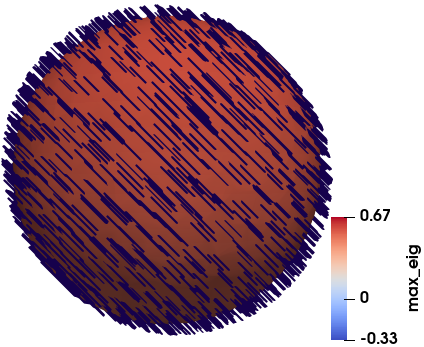}
\put(14, 40){\small{$t = 0.8$}}
\end{overpic}
}
\end{center}

\begin{center}
{
 b)\quad
\begin{overpic}[abs,height=0.25\textwidth,
unit=1mm,
]
{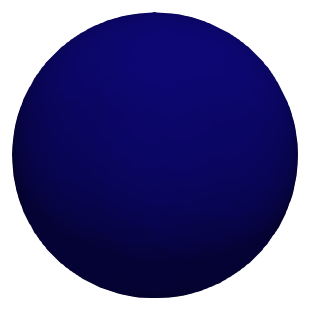}
\end{overpic}
\hskip 0cm
\begin{overpic}[abs,height=0.25\textwidth,
unit=1mm,
]
{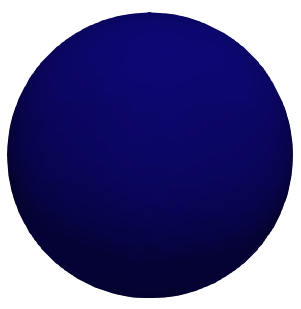}
\end{overpic}
\hskip 0.1cm
\begin{overpic}[abs,height=0.26\textwidth,
unit=1mm,
]
{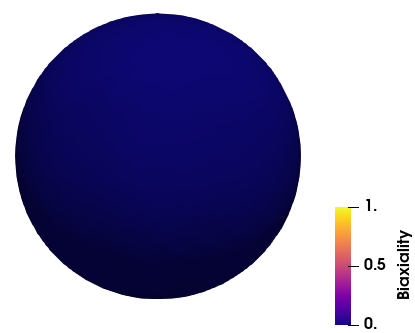}
\end{overpic}
}
\end{center}
\caption{
\small{Diffusion along the unit sphere $\Gamma$ of a Q-tensor $\bQ$ with initial uniaxial condition $\bQ_0$ given by \eqref{passive_model} localized to a spherical cap. All pictures show the $xy$-plane view: a) the largest eigenvalue and corresponding eigenvector of $\bQ$; b) the biaxiality parameter $\beta[\bQ]\approx 0$ 
for all times indicates that $\bQ$ is always uniaxial according to Lemma \ref{L:biaxiality}. The localized Q-tensor diffuses along $\Gamma$ while staying parallel to $\bQ_0$ to minimize the elastic energy. At the same time, the double-well potential $F[\bQ]$ drives the scalar order parameter $s$ of $\bQ$ to the
minimizer $s_+$ of $F[\bQ]$ stated in \eqref{s_min}     .} }
\label{diff_tr}
\end{figure}

\subsubsection{Evolution of a uniform Q-tensor under passive rotation.}
This example couples the Landau--de\,Gennes dynamics on the unit sphere $\Gamma$ of the previous example with a passive rotation. We prescribed the tangential velocity $\bv(x,y,z)=\pi\be_z\times(x,y,z)$ for $(x,y,z)\in\Gamma$ and replace the time derivative $\partial_t\bQ$ in \eqref{LdG_dynamics} with the corotational derivative $\overset{\circ}{\bQ}$ of \eqref{corotation_defintiion2}. We consider the initial
$\bQ_0=s_0 \big(\bq_0\otimes\bq_0-\frac13\bI\big)$ in \eqref{passive_model} with uniform director field $\bq_0$ given by \eqref{ICs} but with the non-equilibrium value $s_0=0.1$ of the order parameter $s$. This results in the following initial value problem
\begin{equation}\label{BE_dynamics}
\begin{aligned}
    {\bf H} +\mathcal{P}F'[\bQ] &=   L\,\mathcal{P}\divM\nablaM{\bQ}\,, 
    \\
                  \overset{\circ}{\bf Q}    &= M{\bf H} \,.
\end{aligned}
\end{equation}
with $\bQ(0)=\bQ_0$. Figure~\ref{noneq} documents the evolution for $t\in[0,1]$. Since the prescribed velocity $\bv$ is a rotation around the $z$-axis, the solution consists of the concatenation of diffusion without velocity with a rigid rotation. Since the initial condition $\bQ_0$ is uniform, the elastic energy is zero and only the double-well potential $F[\bQ]$ is active to drive the order parameter $s$. This is precisely what Figure~\ref{noneq}a illustrates: the eigenframe of $\bQ$ at each point of $\Gamma$ rotates by an angle $\pi$ in the plane orthogonal to $\be_z$ while $s$ evolves towards the minimizer $s_+$ of the potential $F[\bQ]$ given by \eqref{s_min}. Moreover, $\bQ$ remains uniaxial for all time because the biaxiality parameter $\beta[\bQ]\approx0$ in light of Figure~\ref{noneq}b, whence Lemma \ref{L:biaxiality} applies. This example reveals the essential role of the corotational derivative \eqref{corotation_defintiion2} in modeling liquid crystals on surfaces in that it does not generate spurious biaxial states during a passive dynamics of the eigenframe of $\bQ$.

\begin{figure}[ht!]
\vskip0.2cm
\begin{center}
{
 a)\quad
\begin{overpic}[abs,height=0.25\textwidth,
unit=1mm,
]
{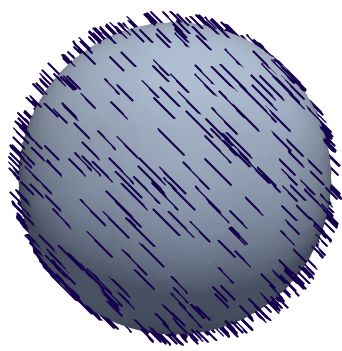}
\put(14, 38){\small{$t = 0$}}
\end{overpic}
\hskip 0.1cm
\begin{overpic}[abs,height=0.25\textwidth,
unit=1mm,
]
{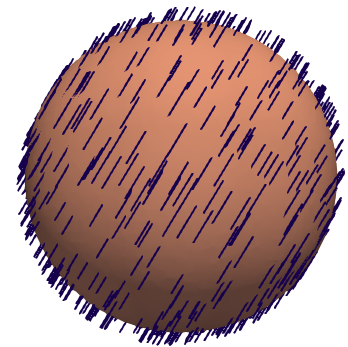}
\put(14, 38){\small{$t = 0.5$}}
\end{overpic}
\hskip 0.1cm
\begin{overpic}[abs,height=0.25\textwidth,
unit=1mm,
]
{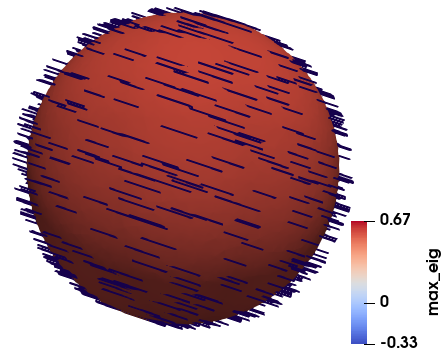}
\put(14, 38){\small{$t = 1$}}
\end{overpic}
}
\end{center}

\begin{center}
{
 b)\quad
\begin{overpic}[abs,height=0.25\textwidth,
unit=1mm,
]
{corot/hair_biaxiality_0.png}
\end{overpic}
\hskip 0cm
\begin{overpic}[abs,height=0.25\textwidth,
unit=1mm,
]
{corot/hair_biaxiality_0.png}
\end{overpic}
\hskip 0.1cm
\begin{overpic}[abs,height=0.25\textwidth,
unit=1mm,
]
{corot/hair_biaxiality_2.png}
\end{overpic}
}
\end{center}
\caption{\small{Evolution of a uniform Q-tensor under passive rotation around the $z$-axis with velocity field $\bv=\pi\be_z\times(x,y,z)$ for $(x,y,z)\in\Gamma$. All pictures show the $xy$-plane view: a) the largest eigenvalue and corresponding eigenvector of $\bQ$; b) the biaxiality parameter of \eqref{biax} satisfies $\beta[\bQ]\approx0$ for all times. Lemma \ref{L:biaxiality} implies that the Q-tensor remains uniaxial for all times. In fact, $\bQ$ is always uniform in space and rotates rigidly with $\bv$, whence the elastic energy vanishes. The order parameter $s$ evolves uniformly in space from $s_0=0.1$ to the minimizer $s_+$ of the potential energy $F[\bQ]$ in \eqref{s_min}.}}
\label{noneq}
\end{figure}

\subsection{Coupling forces in the momentum equation of surface Beris--Edwards model}\label{coupling_forces}
This set of experiments explore the action of the forces on the momentum equation \eqref{beris-edwards}c
\begin{equation}\label{momentum}
\rho\big(\pa_t\bu +(\nabla_\Gamma \bu)\bu+\nabla_{\Gamma}p \big) = 2\mu \bP\divG\bD_\Gamma(\bu)+ \bLambda + \bbf_E - \bbf_*  \,, 
\end{equation}
namely the Leslie force $\bLambda$ in \eqref{stresses}, and the Ericksen $\bbf_E$ and star $\bbf_*$ forces in \eqref{ericksen-star}
\begin{equation}\label{coupled-forces}
    \bLambda = - \bH : \nabla_M \bQ,
    \qquad
    \bbf_E = \bP \divG{\bSigma_\Gamma},
    \qquad
    \bbf_* = 2 \bB \bSigma \bn.
\end{equation}
We deal with the following configuration of $\bQ$ lying in the $xz$-plane and described in terms of polar coordinates $(r,\phi)$, i.e. $\phi=\mathrm{atan2}{(x, z)}\,, r=\sqrt{x^2+z^2}$.
Let $\omega\ge0$ be a parameter that controls the swirled director field $\bq_\omega$ perpendicular to $\be_y = (0,1,0)$
\begin{equation}\label{swirl_dir}
    \bq_\omega=\bq_\omega(r,\phi)= \big(\cos(\phi+\omega{}r),  0, \sin(\phi+\omega{}r) \big),
\end{equation}
and let the order parameter $s_{k,\xi}(r)$ vary between $0$ and $s_+$ defined in \eqref{s_min} via a regularized radial step function which is the logistic sigmoid with midpoint $\xi$ and width $k$
\begin{equation*}
    s_{k,\xi}=s_{k,\xi}(r)=\frac{s_+}{1+\exp{}(-2k(r-\xi))}.
\end{equation*}
The Q-tensor is uniaxial with eigenvector $\bP\bq_\omega$ tangential to $\Gamma$ and order parameter $s_{k,\xi}$, namely
\begin{equation}\label{swirl_Q}
\bQ = \bQ[k;\xi;\omega]={s_{k,\xi}}\left(\frac{\bP\bq_\omega}{|\bP\bq_\omega|}\otimes\frac{\bP\bq_\omega}{|\bP\bq_\omega|}-\frac13\bI\right)\, .
\end{equation}
This is a regularized degree $+1$ defect because at the origin, where $\bq_\omega$ becomes singular, the order parameter $s_{k,\xi}$ is about zero. The largest eigenvalue of $\bQ$ is $\lambda_{max}=2/3s_+ \approx 0.82$ with $s_+$ defined in \eqref{s_min}.
Moreover, the physical parameters of the fluid are its density $\rho=0.1$ and viscosity $\mu=0.1$ in \eqref{momentum}.

\begin{remark}
We point out that to generate non-zero Ericksen stresses $\bSigma$ we need a configuration of the Q-tensor with a swirled and regularized director field $\bq_\omega$. Figure~\ref{homeo_uniform} shows homeotropic and uniform Q-tensors on a spherical cap for which all coupling forces in \eqref{coupled-forces} are zero. Hence, no transport of momentum appears in such Q-tensor configurations.
\end{remark}

\begin{figure}[ht!]
\vskip0.2cm
\begin{center}
{
\begin{overpic}[abs,width=0.3\textwidth,
unit=1mm,
]
{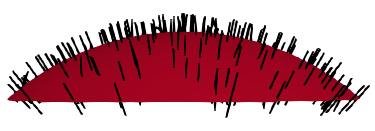}
\end{overpic}
\hskip 1cm
\begin{overpic}[abs,width=0.3\textwidth,
unit=1mm,
]
{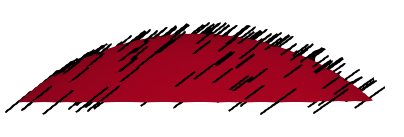}
\end{overpic}
}
\end{center}

\caption{\small{Q-tensors on spherical caps for which all the forces in \eqref{coupled-forces} are zero. Left: a homeotropic Q-tensor $\bQ={s_+}\left(\bn\otimes\bn-\bI/3\right)$, right: a uniform Q-tensor  $\bQ={s_+}\left((1,1,1)\otimes(1,1,1)-\bI\right)/3$.  }  }
\label{homeo_uniform}
\end{figure}

\subsubsection{Leslie force on a flat disc}\label{S:Leslie-flatdisc}

We first examine $\bLambda$ in \eqref{coupled-forces}. To this end, we consider a flat disc $\Gamma$ of radius 5 orthogonal to $(0,1,0)$  with Dirichlet boundary conditions for the Q-tensor $\bQ$ and the velocity $\bu$. We let $\omega=0$ and $\bQ$ be the radial uniaxial regularized defect $\bQ[5; 2.5; 0]$ of degree $+1$ defined in \eqref{swirl_Q} without a ``swirl". The order parameter $s_{k,\xi}$ of $\bQ$ is almost flat except near the circle of radius $r=\xi=2.5$, whence its gradient is radial and points outwards. Figure~\ref{leslie})b displays $\bQ$.

\begin{figure}[ht!]
\vskip0.2cm
\begin{center}
{
\begin{overpic}[abs,height=0.35\textwidth,
unit=1mm,
]
{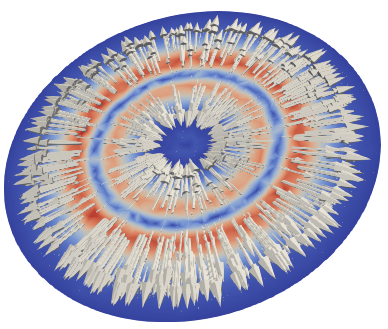}
\end{overpic}
\hskip 0.1cm
\hskip 0.1cm
\hskip 0.1cm
\begin{overpic}[abs,height=0.35\textwidth,
unit=1mm,
]
{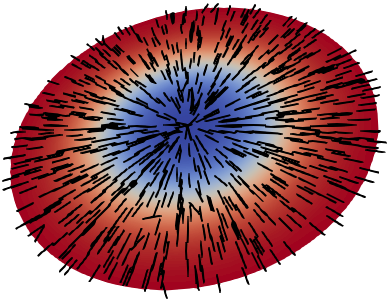}
\end{overpic}
}
\end{center}

\caption{\small{ Leslie force $\bLambda=-\bH:\nablaM\bQ$ (left) for the Q-tensor $\bQ[5; 2.5; 0]$ in \eqref{swirl_Q} on a flat disc of radius 5; this is a regularized defect of degree $+1$ with order parameter $s_{k, \xi}$ about zero at the origin (right). 
The gradient of $s_{k, \xi}$, which is radial and points outwards, is mostly responsible for the structure of $\bLambda$ (left). In fact, its concavity flips in the transition region near the circle of radius $r=\xi=2.5$, thereby resulting in a radial $\bLambda$ that point outwards for $r>\xi$ and inwards for $r <\xi$. }  }
\label{leslie}
\end{figure}
 
 The molecular field $\bH$ might be thought of approximately as the Laplacian of $\bQ$, whence it changes sign around $r=\xi$ where the convexity of $s_{k,\xi}$ flips to concavity. Since $\nabla_M\bQ$ must be radial, because of symmetry arguments,
 the Leslie force $\bLambda=-\bH:\nablaM\bQ$ is also radially symmetric and points inwards for $r < \xi$ and outwards for $r > \xi$. This is shown in Figure~\ref{leslie})a.

Since the disc $\Gamma$ is flat, the shape operator $\bB=\nabla_M\bn$ vanishes and so does the star force $\bbf_*=2\bB\bSigma\bn$ in \eqref{momentum}. In addition, computations reveal that the Ericksen tensor 
$\bSigma=\bQ\bH-\bH\bQ$ is zero and so is the Ericksen force $\bbf_E=\bP\divG\bSigma_\Gamma$. Therefore, the only active force is the Leslie force $\bLambda$, which is not divergence-free according to Figure~\ref{leslie}. Computations also show that $\bLambda$ does not produce fluid flow because the velocity
is $\bu=\bzero$, which in turn implies that $\bLambda$ is a gradient equilibrated by the pressure term to enforce the incompressibility condition $\divG\bu=0$.

\subsubsection{Ericksen force on a flat annulus}\label{S:Ericksen-flat}
We now examine the impact of the tangent Ericksen stress $\bSigma_\Gamma=\bP(\bQ\bH-\bH\bQ)\bP$ and corresponding Ericksen force $\bbf_E=\bP\divG\bSigma_\Gamma$ on the momentum equation \eqref{momentum}. We consider the flat annulus $\Gamma$ of inner radius 1 and outer radius 5 which is orthogonal to $\be_y=(0,1,0)=\bn$. Note that removing the inner disc gets rid of the defect at the origin. 
Throughout $\Gamma$ we take the order parameter $s$ to be the constant $s_+$ defined in \eqref{s_min} and the swirl parameter $\omega = 0.1$ in the definition \eqref{swirl_dir} of the director $\bq_\omega$. We consider the following uniaxial Q-tensor
\begin{equation}\label{swirl_Q_const}
\bQ ={s_{+}}\left({\bq_\omega}\otimes{\bq_\omega}-\frac13\bI\right)\, ,
\end{equation}
which is depicted in Figure~\ref{donut}b. The value $\omega\ne0$ is responsible for the Ericksen tensor
$\bSigma=\bQ\bH-\bH\bQ\ne0$, for otherwise radial symmetry forces $\bH$ and $\bQ$ to have the same conforming, radial eigenframe at every $\bx\in \Gamma$ and $\bSigma=0$. On the other hand, $s_+$ minimizes the double-well potential, according to Lemma \ref{L:double-well} (minimizer of $F$), and $\bQ$ in \eqref{swirl_Q_const} satisfies $\mathcal{P}F'[\bQ]=0$.

\begin{figure}[ht!]
\vskip0.2cm
\begin{center}
{
\begin{overpic}[abs,height=0.35\textwidth,
unit=1mm,
]
{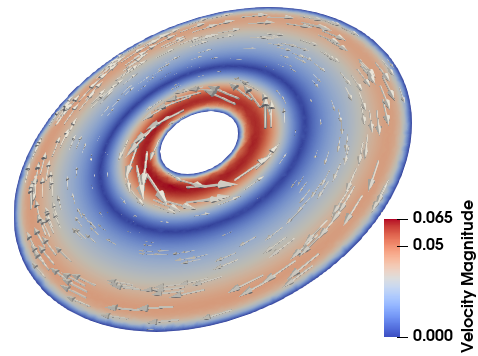}
\end{overpic}
\hskip 0.1cm
\hskip 1cm
\begin{overpic}[abs,height=0.35\textwidth,
unit=1mm,
]
{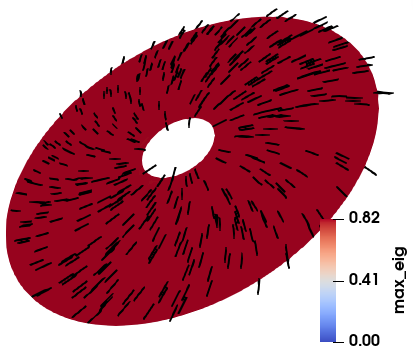}
\end{overpic}
}
\end{center}

\vskip0.2cm
\begin{center}
{
\begin{overpic}[abs,height=0.35\textwidth,
unit=1mm,
]
{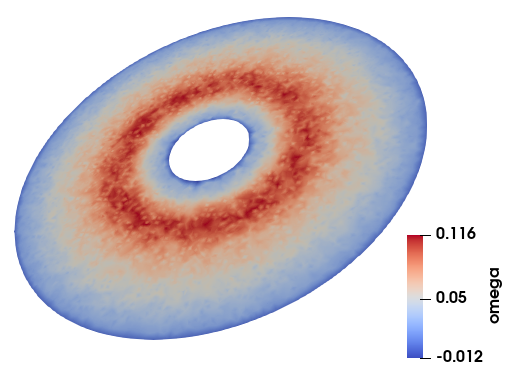}
\end{overpic}
\hskip 0.1cm
\hskip 0.1cm
\begin{overpic}[abs,height=0.35\textwidth,
unit=1mm,
]
{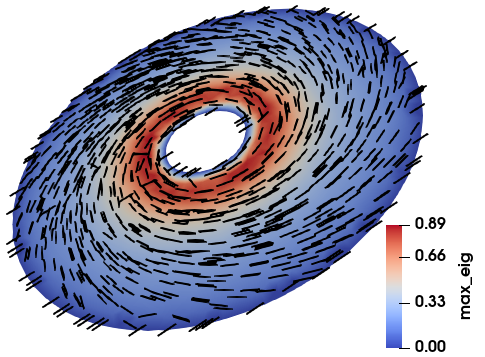}
\end{overpic}
}
\end{center}

\caption{\small{Velocity field $\bu$ produced by the Ericksen force $\divG\bSigma_\Gamma$ on an annulus $[1,5]\times{S}^1$ (top left). Uniaxial tensor $\bQ$ given by \eqref{swirl_Q} with swirl parameter $\omega=0.1$ and constant order parameter $s_{k,\xi}=s_+$ defined in \eqref{s_min} (top right). The complex flow exhibits two regions of rotation, the outer one clockwise and the inner one counterclockwise, separated by an stagnation layer of vanishing velocity. The parameter $\theta$ from \eqref{Ericksen_alternative} corresponds to the unit normal $\bn=(0, 1, 0)$ of $\Gamma$ pointing upwards (bottom left). The vector $\nablaG\theta$ is radial and points towards the stagnation layer in both the inner and outer annuli. Therefore, the Ericksen force $\bbf_E=\bn\times\nablaG\theta$ is rotational and mimics the velocity. The largest eigenvalue and corresponding eigenvector of the tensor $\bH$ (bottom right), which is also conforming with zero eigenvalue in the normal direction. The eigenframes of $\bQ$ and $\bH$ do not coincide.} }
\label{donut}
\end{figure}

We impose Dirichlet boundary conditions to both the Q-tensor $\bQ$ and the velocity $\bu$, and report the computational results in Fig. \ref{donut}. It turns out that the Ericksen force $\bbf_E$ generates a rotational incompressible flow with two distinct regions of rotation separated by a stagnation layer with zero velocity; this is displayed on Figure~\ref{donut}a. In the inner region the flow rotates \textit{counterclockwise} while in the outer region the liquid crystal material flows in the opposite direction. Since the Q-tensor, displayed in Figure~\ref{donut}b, has a uniformly \textit{clockwise} swirl on the entire annulus, one might wonder what originates this complex flow.

First, we investigate analytically the structure of the molecular field $\bH$ and surface Ericksen stress $\bSigma_\Gamma$ for $\bQ$ in \eqref{swirl_Q_const}. If we consider the basis $\bq_1=\bq_\omega, \bq_2=\bn\times\bq_\omega, \bn$, which is an eigenframe for the uniaxial Q-tensor $\bQ$, then $\bQ$ and the projector $\bP$ may be represented by
 \begin{align*}
     \bQ=\begin{pmatrix}
    \frac23s_+ & 0 & 0\\
    0 & -\frac13s_+ & 0\\
    0 & 0 & -\frac13s_+
  \end{pmatrix}\,,\qquad    \bP=\begin{pmatrix}
    1 & 0 & 0\\
    0 & 1 & 0\\
    0 & 0 & 0
  \end{pmatrix}\,.
 \end{align*}
 On the other hand, the tensor $\bH$ satisfies \eqref{BE_dynamics} with $\mathcal{P}F'[\bQ]=0$, whence
 \[
 \bH\bn = L \mathcal{P} \divM (\nablaM \bQ) \, \bn = 0
 \]
 due to the flatness of $\Gamma$ and conformity of $\bQ$. Since $\bH$ is traceless and symmetric, we get
 \begin{align}\label{H}
     \bH=\begin{pmatrix}
    a & b & 0\\
    b & -a & 0\\
    0 & 0 & 0
  \end{pmatrix}\,,
 \end{align}
 in the basis $\bq_1,\bq_2,\bn$ for suitable functions $a$ and $b$. The Ericksen stress
 $\bSigma=\bQ\bH-\bH\bQ$ reads
 \begin{align}\label{si}
     \bSigma = \begin{pmatrix}
    0 & {s_+ b} & 0\\
    -{s_+b} & 0 & 0\\
    0 & 0 & 0
  \end{pmatrix}
 \end{align}
 in the same basis and shows that $b\neq0$ is required for a nontrivial $\bSigma$.
 In other words, the eigenframe of $\bH$ should not coincide with that of of $\bQ$ for $\bSigma\ne0$, as alluded to earlier \VY{in Section~\ref{S:Ericksen-flat}}. Moreover, comparing $\bSigma=\bSigma_\Gamma$ in \eqref{si} with \eqref{Omega-n}, the function $\theta$ in \eqref{Ericksen_alternative} satisfies
 \begin{align}\label{theta-b}
     \theta=-s_+b\,,\qquad \bbf_E=\bP\divG\bSigma_\Gamma=-s_+ \bn\times\nablaG{}b \, .
 \end{align}
 
Intuitively, the molecular field $\bH$ enters the expression of $\pa_t\bQ$ in (\ref{beris-edwards}b), and a nonzero off-diagonal component, $b\neq{}0$, in \eqref{H} means that the eigenframe of $\bH$ rotates relative to that of $\bQ$. In this sense, the surface Ericksen force $\bbf_E$ in \eqref{theta-b} encodes the spatial rate of change of the eigenframe rotation: the linear momentum is the thermodynamic counterpart of the relative rotation of the molecular field $\bH$ from \eqref{H} due to the swirl structure of $\bQ$ in \eqref{swirl_Q_const}.

 We next provide a computational justification for the intriguing flow in Figure~\ref{donut}a. We resort to the parameter $\theta$ in \eqref{Ericksen_alternative}, which provides the representation $\bbf_E=\bn\times\nablaG\theta$ of the Ericksen force according to Lemma \ref{L:alt_erick}. The scalar field $\theta$ is displayed on Figure~\ref{donut}c, whence its gradient $\nablaG\theta$ is radially symmetric and pointing towards an annulus where $\theta$ exhibits its largest value; hence $\nablaG\theta$ changes orientation from an inner to an outer annular region. Therefore, the Ericksen force $\bbf_E$ is rotational and exhibits the same structure as the velocity field on Figure~\ref{donut}a with inner and outer regions of counterclockwise and clockwise orientation.


\subsubsection{A regularized swirled defect on a flat disc}

We next combine the effects of the Ericksen force $\bbf_E=\divG\bSigma_\Gamma$ and the Leslie force $\bLambda=-\bH~:~\nablaM\bQ$ in one single experiment. We consider the swirled regularized Q-tensor $\bQ[5; 2.5; 0.1]$ defined in \eqref{swirl_Q} with swirl parameter $\omega = 0.1$ and transition parameter $\xi=2.5$ on a flat disc $\Gamma$ of radius 5 orthogonal to $(0,1,0)$. The computational results are shown on Figure~\ref{ericksen}. The parameter $\xi$ characterizes the green layer on Figure~\ref{ericksen}b where the incompressible flow changes the direction of rotation. For $r<\xi$ the fluid rotation is counterclockwise
according Figure~\ref{ericksen}a, which also depicts $\bQ$, namely both the swirl director field $\bq_\omega$ and order parameter $s_{k,\xi}$ in \eqref{swirl_Q}. Moreover, in Figure~\ref{ericksen}b we display the
profiles of $\bbf_E$ and $\bLambda$ and realize that the fluid flow is consistent with the rotational character of $\bbf_E$ and the azimuthal component of $\bLambda$.

\begin{figure}[ht!]
\vskip0.2cm
\begin{center}
{
a)\hskip 0.2cm \begin{overpic}[abs,height=0.38\textwidth,
unit=1mm,
]
{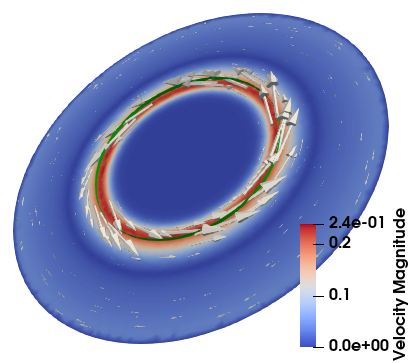}
\end{overpic}
\hskip 0.5cm
\hskip 0.5cm
\begin{overpic}[abs,height=0.38\textwidth,
unit=1mm,
]
{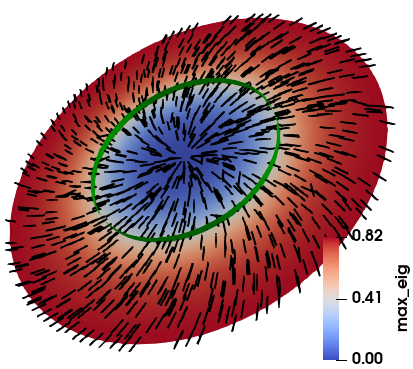}
\end{overpic}
\vskip 0.2cm
b)\hskip 0.2cm \begin{overpic}[abs,height=0.4\textwidth,
unit=1mm,
]
{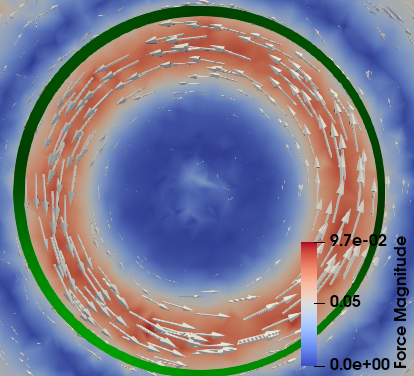}
\end{overpic}
\hskip 0.5cm
\hskip 0.5cm
\begin{overpic}[abs,height=0.4\textwidth,
unit=1mm,
]
{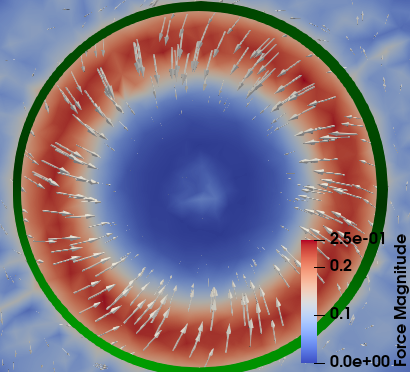}
\end{overpic}
}
\end{center}

\caption{\small{ a) Velocity field $\bu$ (left) created by the swirled regularized defect $\bQ[5; 2.5; 0.1]$ defined in \eqref{swirl_Q} (right) on a flat disc $\Gamma$ of radius 5. 
b) Ericksen force $\bbf_E=\divG\bSigma_\Gamma$ (left) and Leslie force $\bLambda=-\bH:\nablaM\bQ$ (right) near the transition region near  $r=\xi=2.5$ depicted by the green circle.  Note that $\bLambda$ exhibits a larger magnitude than $\bbf_E$, but only the rotational part of $\bLambda$ may generate incompressible flow.
The rotational flow is mostly due to $\bbf_E$.}  }
\label{ericksen}
\end{figure}

Notice that, in contrast to the experiment in Section \ref{S:Leslie-flatdisc} where $\omega=0$, the Leslie force $\bLambda$ has a radial and an azimuthal component. The former is absorbed into the pressure and does not create linear momentum as in Section \ref{S:Leslie-flatdisc}. However, the latter adds to the Ericksen force $\bbf_E$ to give rise to an inner region $r <\xi$ with clockwise rotational flow and an outer region $r>\xi$ with opposite flow. We observe that the velocity magnitude is much larger in the transition region $r \approx \xi$ than in the inner and outer regions and that, even though $\bbf_E$ is smaller in magnitude than $\bLambda$, it is mostly responsible for the counterclockwise flow.

\subsubsection{The star force on a unit sphere}
The star force $\bbf_*=2\bB\bSigma\bn$ is zero on flat geometries because the shape operator $\bB=\nabla_M\bn$ vanishes. In this experiment we consider the unit sphere $\Gamma$, for which $\bB=\bP$, to show the non-trivial behavior of $\bbf_*$ even for a surface with a simple shape operator. To demonstrate the action of $\bbf_*$ on $\Gamma$ we choose regularized unixial Q-tensors defined in \eqref{swirl_Q}: radial $\bQ[5, 0.5, 0]$ with $\omega=0$ and swirled $\bQ[5, 0.5, 1.5]$ with $\omega=1.5$. Figure~\ref{star} and Figure~\ref{swirl_star} show these configurations with unit vector $\be_y=(0,1,0)$ pointing upwards. The definition \eqref{swirl_Q} of $\bQ$ is relative the $xz$-plane perpendicular to $\be_y$, so the transition region occurs at $r =\sqrt{x^2+z^2} = \xi = 0.5$.

\begin{figure}[ht!]
\vskip0.2cm
\begin{center}
{

\begin{overpic}[abs,height=0.4\textwidth,
unit=1mm,
]
{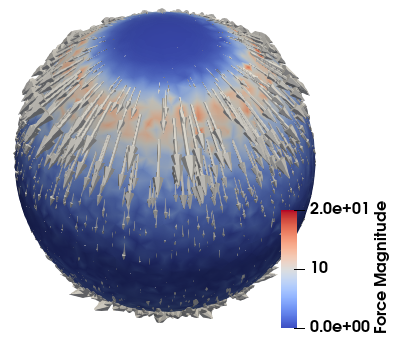}
\end{overpic}
\hskip 1cm
\begin{overpic}[abs,height=0.4\textwidth,
unit=1mm,
]
{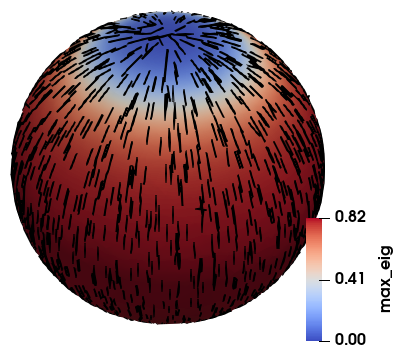}
\end{overpic}
}
\end{center}
\caption{\small{Star force $\bbf_*=2\bB\bSigma\bn$ (left) for a regularized radial defect $\bQ[5, 0.5, 0]$ of degree $+1$ defined in \eqref{swirl_Q} on the unit sphere $\Gamma$ with $\be_y$ pointing upwards (right). The Leslie force $\bLambda$ is similar to the flat disc (Figure \ref{leslie}) and the Ericksen force $\bbf_E$ is zero (neither is shown). The total force generates no flow.}  }
\label{star}
\end{figure}

\begin{figure}[ht!]
\vskip0.2cm
\begin{center}
{
\begin{overpic}[abs,height=0.4\textwidth,
unit=1mm,
]
{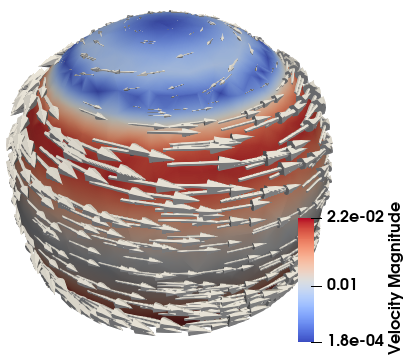}
\end{overpic}
\hskip 0.8cm
\begin{overpic}[abs,height=0.4\textwidth,
unit=1mm,
]
{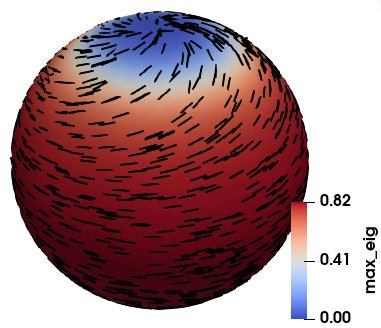}
\end{overpic}
}
\end{center}
\caption{\small{Velocity field $\bu$ (left) produced by the star force $\bbf_*=2\bB\bSigma\bn$ for a regularized swirled defect $\bQ[5, 0.5, 1.5]$ of degree $+1$ defined in \eqref{swirl_Q} (right) on a unit sphere $\Gamma$.  The field $\bu$ rotates clockwise near the defect at the north pole and counterclockwise near the equator.}  }
\label{swirl_star}
\end{figure}



It turns out that the ``radial" tensor $\bQ[5, 0.5, 0]$ generates no flow. This is because the star force $\bbf_*=2\bP\bSigma\bn=2\bSigma\bn$ has a radial structure and is localized near the transition region $r\approx\xi$, where the Ericksen stress $\bSigma = \bQ\bH - \bH\bQ$ is non-zero; hence $\bbf$ seems to be a corotational gradient that is compensated by $\nabla_\Gamma p$ in \eqref{momentum}. Moreover, the Leslie force $\bLambda=-\bH:\nablaM\bQ$ (not shown on Figure~\ref{star}) is also radial and similar to that in Figure~\ref{leslie}, whence it can also be absorbed into the pressure term. However, the profiles of $\bbf_*$ and $\bLambda$ are quite different: the former has a direction pointing towards the equator in both the upper and lower spherical caps of $\Gamma$ (see Figure~\ref{star}a), whereas the latter flips its direction near the transition region as in Figure~\ref{leslie}a. Finally, the Ericksen force $\bbf_E=\bP\divG\bSigma_\Gamma$ appears to be zero while $\bSigma\bn$ is clearly not.

In contrast, the swirled tensor $\bQ[5, 0.5, 1.5]$, shown on Figure~\ref{swirl_star}b, creates a force $\bbf_*$ that generates flow. Viewed from the north pole, such a flow develops an outer region $r>\xi$, where the velocity rotates counterclockwise, as well as an inner region $r<\xi$, where the velocity rotates clockwise but is much smaller in magnitude than the former (see Figure~\ref{swirl_star}a). On the othet hand, $\bQ$ swirls clockwise 
(see Figure~\ref{swirl_star}b).

\subsection{Relaxation of a flat-degenerate state}\label{sec_nonconformity}

In Section \ref{intro} we argued that assuming Q-tensors to be conforming, namely to obey \eqref{conf_def}, may be inconsistent with their surface dynamics unless an additional (penalty) energy enforces this configuration. We now explore such inconsistency computationally on a simple configuration of the Q-tensor on the unit sphere $\Gamma$. The initial configuration is a flat-degenerate Q-tensor field with zero normal eigenvalue (see Definition \ref{flat-degenerate}), while the final configuration is uniaxial with $\bn^T\bQ\bn=\frac23s_+$ (see Definition \ref{uni-bi}) and $s_+=1.5$ given in \eqref{s_min}. We will see that the intermediate states are generally non-conforming even if we penalize the lack of conformity, unless the penalty parameters are sufficiently large.

To describe $\bQ_0 = \bQ(0)$ in Figure \ref{nonconformity}, let $\be_y=(0,1,0)$ point upwards and let the director field $\bm=\bP\be_y/|\be_y|$ be tangent to the unit sphere $\Gamma$, where
$\bP=\bI-\bn\otimes\bn$. Then, let
\begin{align}\label{tangentQ}
    \bQ_0 = s_+\left(\bm\otimes\bm-\frac12\bP\right)
\end{align}
be a flat-degenerate Q-tensor with degree $+1$ defects at both north pole $y=1$ and south pole $y=-1$. Therefore, the biaxiality parameter $\beta[\bQ_0]=1$ defined in \eqref{biax} attains the largest possible value, according to Lemma \ref{L:biaxiality}), at all points of $\Gamma$ except for the defects.
Since minimizers of the double-well potential $F[\bQ]$ are uniaxial states  (Lemma~ \ref{L:double-well}) and $\bQ_0$ is far from uniform and carries large elastic energy at the defects, we expect $\bQ_0$ to be far from a minimizer of the Landau--de\,Gennes energy $E_{LdG}[\bQ,\nabla_M\bQ]$ in \eqref{eldg}. In fact, the final configuration is a uniaxial state \eqref{uniax} with director field $\bq=\bn$ and orientational order $s=s_+$,
whence
\begin{equation}
    \bn^T \bQ \bn = \frac{2}{3} s_+
\end{equation}
is the eigenvalue in the normal direction.
Flat-degenerate Q-tensors are prototypical for simulations in flat, two-dimensional domains. However, we stress that the evolution of $\bQ_0$ involves non-conforming Q-tensors with three non-zero eigenvalues.

\begin{figure}[ht!]
\vskip 1cm
\begin{center}
{
\begin{overpic}[abs,width=0.24\textwidth,
unit=1mm,
]
{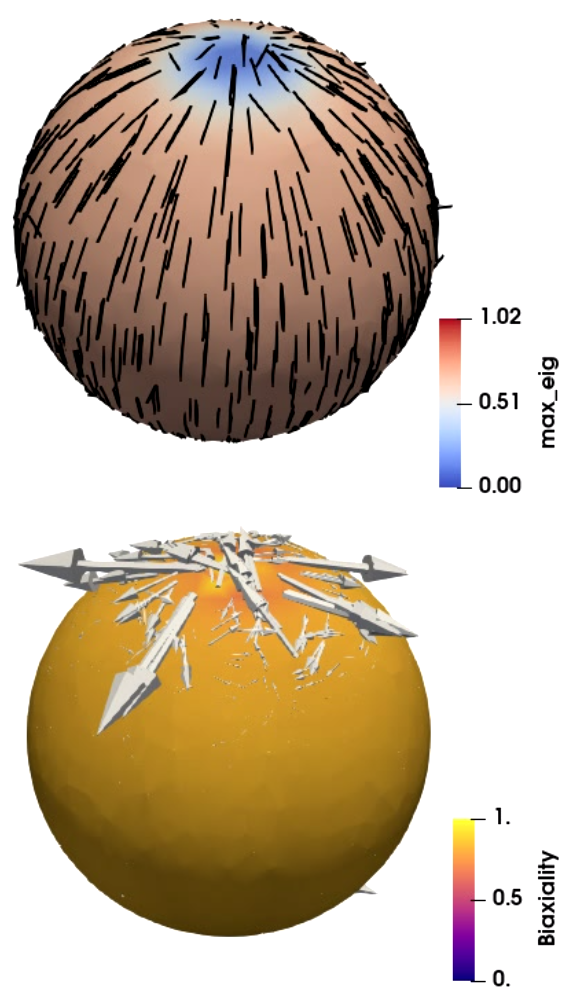}
\put(8,65){\small{$t = 0.005$}}
\end{overpic}
\begin{overpic}[abs,width=0.24\textwidth,
unit=1mm,
]
{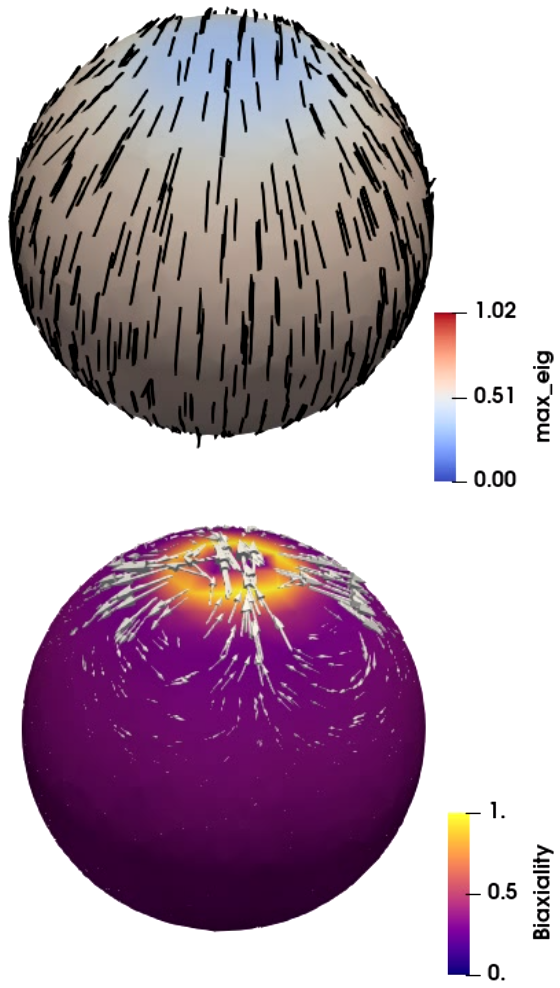}
\put(8,65){\small{$t = 0.015$}}
\end{overpic}
\hskip 0.1cm
\begin{overpic}[abs,width=0.24\textwidth,
unit=1mm,
]
{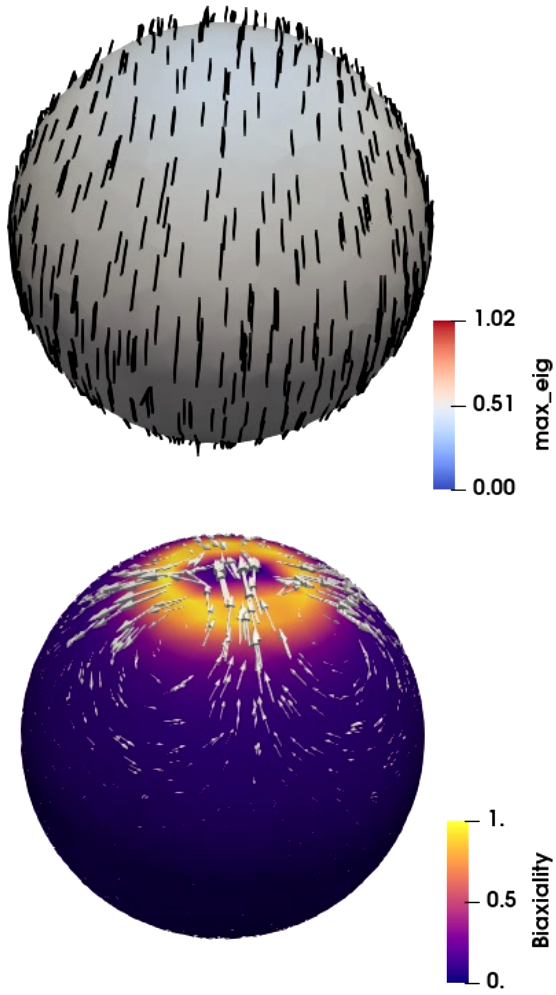}
\put(8,65){\small{$t = 0.023$}}
\end{overpic}
\begin{overpic}[abs,width=0.24\textwidth,
unit=1mm,
]
{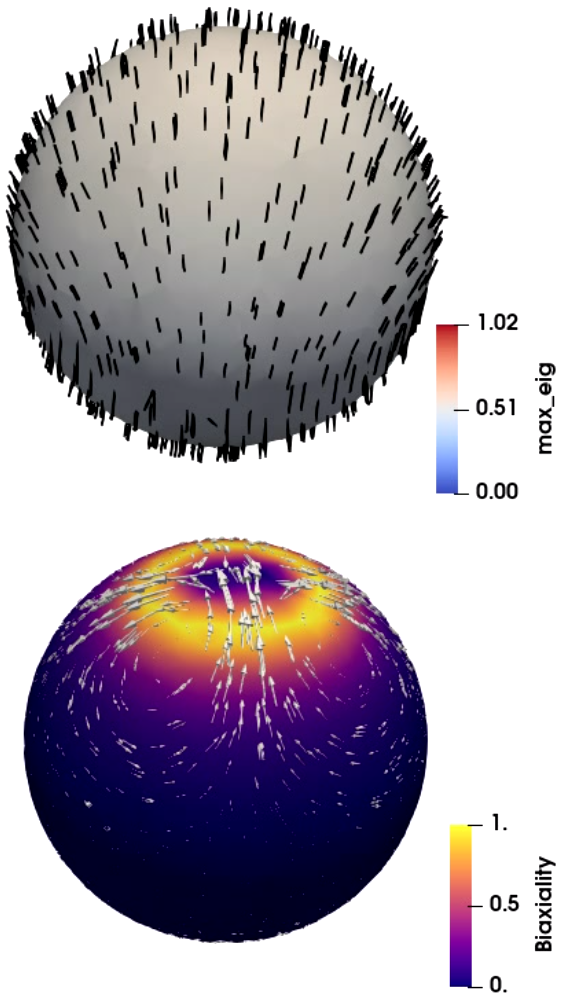}
\put(8,65){\small{$t = 0.03$}}
\end{overpic}
}
\end{center}
\vskip 1cm
\begin{center}
{
 \quad
\begin{overpic}[abs,width=0.24\textwidth,
unit=1mm,
]
{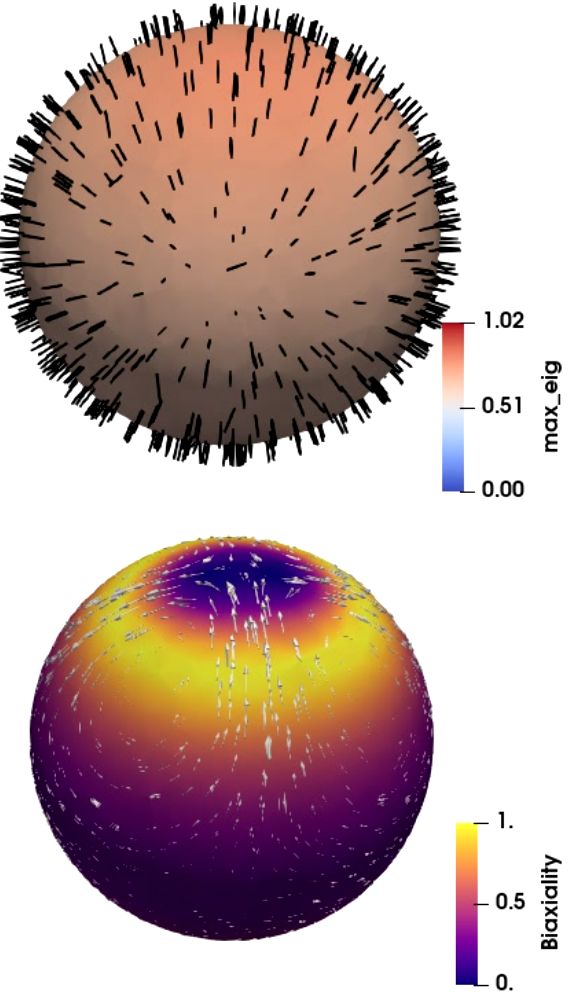}
\put(8,65){\small{$t = 0.05$}}
\end{overpic}
\hskip 0.1cm
\begin{overpic}[abs,width=0.24\textwidth,
unit=1mm,
]
{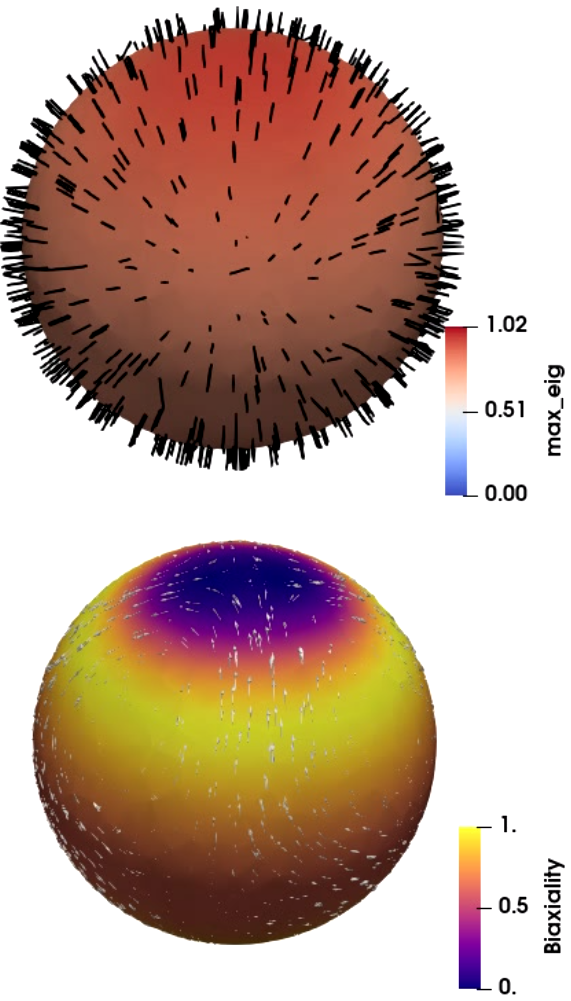}
\put(8,65){\small{$t = 0.075$}}
\end{overpic}
\begin{overpic}[abs,width=0.24\textwidth,
unit=1mm,
]
{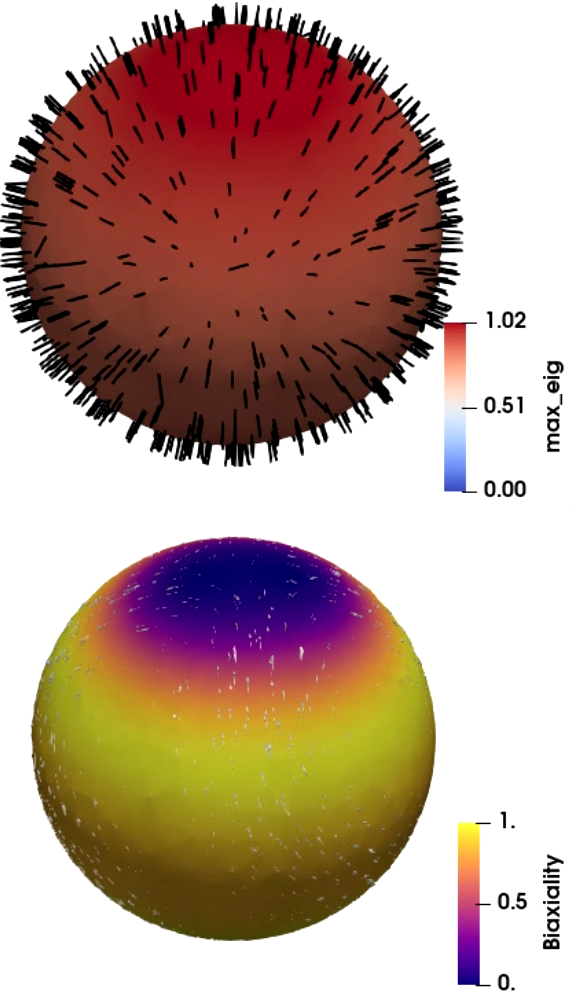}
\put(8,65){\small{$t = 0.1$}}
\end{overpic}
\begin{overpic}[abs,width=0.24\textwidth,
unit=1mm,
]
{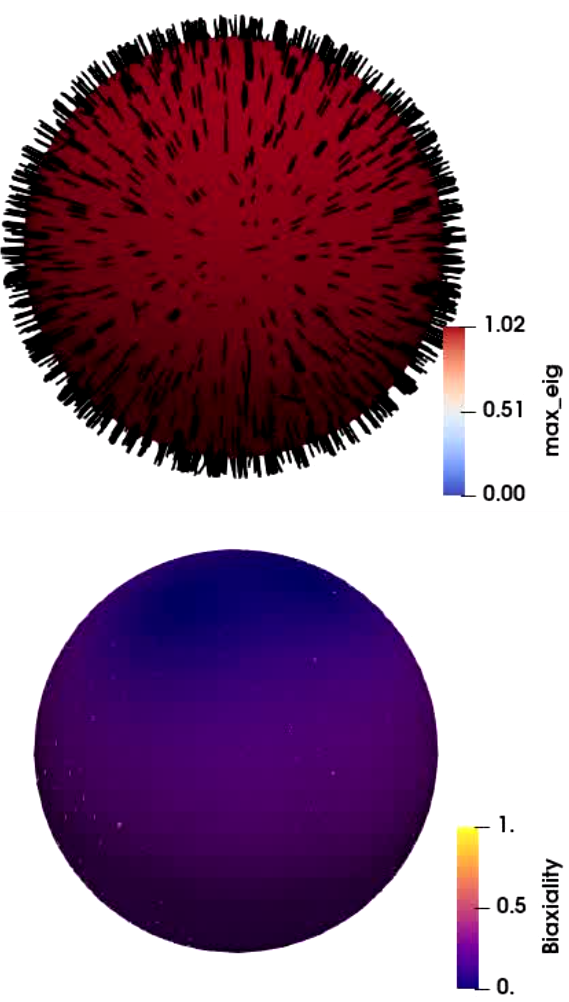}
\put(8,65){\small{$t = 0.4$}}
\end{overpic}
}
\end{center}
\caption{\small{Relaxation of the Q-tensor from the axisymmetric flat-degenerate state $\bQ_0$ in \eqref{tangentQ} to the uniaxial state \eqref{uniax} with $\bq = \bn$ and $s=s_+$. The interface parameters
in the energy \eqref{E_norm} are $\alpha=10$, $\delta=\frac{2}{3}s_+$. The Q-tensor relaxes from  $\bn^T\bQ\bn=0$ to $\bn^T\bQ\bn=\frac{2}{3}s_+=1$ passing through non-conforming states. Top: maximum eigenvalue and corresponding eigenvector of $\bQ$ evolve from tangential to normal to $\Gamma$. Bottom: biaxiality parameter $\beta[\bQ]$ and velocity field $\bu$ (scaled by $0.5$); $\beta[\bQ]$ varies from $0$ to $1$ with intermediate alternating regions of biaxiality. Vertical direction corresponds to the director $\be_y$.}  }
\label{nonconformity}
\end{figure}

\subsubsection{Normal anchoring penalization}\label{S:normal_anchoring}

As discussed in Section \ref{S:osipov-hess}, 
the Landau--de\,Gennes energy \eqref{eldg} of a liquid crystal film may include, in some applications \cite{golovaty2015dimension,golovaty2017dimension,nestler2020properties,osipov1993density}, the energy $E_{norm}[\bQ]$ with penalty parameter $\alpha>0$ defined in \eqref{pen+norm}, which enforces a value $\delta$ of orientational order in the normal direction
\begin{align}\label{E_norm}
E_{norm}[\bQ]=\alpha\int_\Gamma \big|\bn^T\bQ\bn-\delta \big|^2\,.
\end{align}

The dynamics of the Beris--Edwards system is dictated by the competition of several energies: the double-well potential $F[\bQ]$ promotes uniaxial states with order parameter $s_+$; the elastic energy $L |\nablaM\bQ |^2/2$ promotes uniform states in $\R^3$; and the energy $E_{norm}[\bQ]$ promotes a certain degree of orientational order, but does not affect the conformity.
A rigid condition of the form $\bn^T\bQ\bn=c$, as discussed in Section \ref{intro}, is often a modelling assumption postulated along with the conformity assumption. If this condition is relaxed but the conformity assumption is still applied, then $\bn^T\bQ\bn$ is an additional scalar variable representing the normal orientational order. One could model a transition from the conforming flat-degenerate state \eqref{tangentQ} with $\bn^T\bQ\bn=0$ to the conforming uniaxial state with $\bn^T\bQ\bn=\delta$ enforcing conformity of $\bQ$ for all intermediate times. However, our simulations show that our Beris-Edwards model find non-conforming intermediate states more energetically favorable.

We simulate the full surface Beris--Edwards system \eqref{beris-edwards} with initial conditions $\bu(0)=0$ and $\bQ(0)=\bQ_0$ given by \eqref{tangentQ}, as well as the augmented Landau--de\,Gennes energy \eqref{eldg} by \eqref{E_norm}. This leads to the following variant of \eqref{beris-edwards}a
\begin{equation}\label{beris-edwards_aug}
    {\bf H} +\mathcal{P}(F'[\bQ]+E_{norm}'[\bQ]) =   L\,\mathcal{P}\divM\nablaM{\bQ}\,,
\end{equation}
where $E_{norm}'[\bQ]$ is the variational derivative of $E_{norm}$. We choose the parameters
\[
a=-1, \, b=1, \, c=1;
\quad
M=1, \, L=1, \, \rho=0.1, \, \mu=0.1;
\quad
\alpha=10, \, \delta=\frac23s_+ = 1.0
\]
in \eqref{dw}, \eqref{beris-edwards} and \eqref{E_norm} respectively. We report  on Figure \ref{nonconformity} the numerical results for time evolution of the augmented surface Beris--Edwards system \eqref{beris-edwards}--\eqref{beris-edwards_aug}. The Q-tensor relaxes from the flat-degenerate state with $\bn^T\bQ\bn=0$ to the uniaxial state with $\bn^T\bQ\bn=\frac{2}{3}s_+$ passing through non-conforming states. The biaxiality parameter $\beta[\bQ]$ is uniform at the beginning and end of the simulation, with values $\beta[\bQ]=1$ (biaxial) to
$\beta[\bQ]=0$ (uniaxial) respectively, and exhibits alternating and space-dependent values in between. The energy landscape is complex with non-conforming intermediate states.

\subsubsection{Non-conformity penalization}\label{S:conformity}

To further check our claim of non-conformity on the transition from flat-degenerate to uniaxial
configurations, we develop a second experiment. To enforce that $\bQ\bn$ be normal, whence $\bn$ be an eigenvector of $\bQ$, we incorporate the physically justified anchoring energy 
\cite{golovaty2015dimension,golovaty2017dimension,nestler2020properties,osipov1993density}, 
\begin{align}\label{e_pen}
    E_{pen}[\bQ]=\gamma\int_\Gamma |\bP\bQ\bn|^2\, .
\end{align}
already discussed in \eqref{pen+norm}.
Therefore, the limit $\gamma\rightarrow\infty$ imposes the strong conformity condition $\bQ\bn=\lambda\bn$
because \eqref{e_pen} is minimized if $\bQ\bn$ is normal.

\begin{figure}[ht!]
\vskip 1cm
\begin{center}
{
 \quad
\begin{overpic}[abs,height=0.25\textwidth,
unit=1mm,
]
{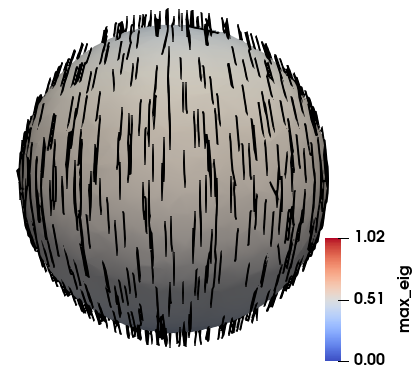}
\put(14, 38){\small{$\gamma = 0$}}
\end{overpic}
\hskip 0.1cm
\begin{overpic}[abs,height=0.25\textwidth,
unit=1mm,
]
{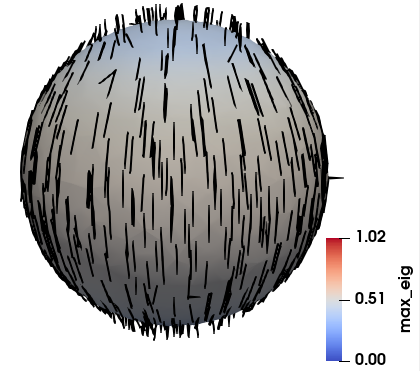}
\put(14, 38){\small{$\gamma= 1e2$}}
\end{overpic}
\hskip 0.1cm
\begin{overpic}[abs,height=0.25\textwidth,
unit=1mm,
]
{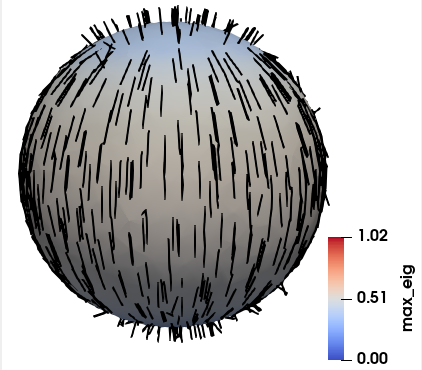}
\put(14, 38){\small{$\gamma = 1e4$}}
\end{overpic}
}
\end{center}
\begin{center}
{
 \quad
\begin{overpic}[abs,height=0.25\textwidth,
unit=1mm,
]
{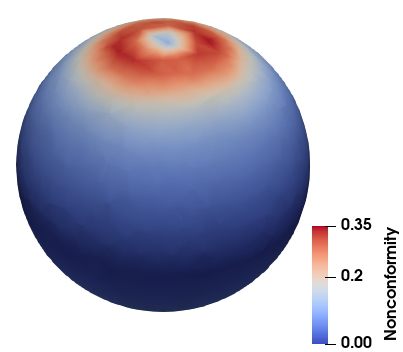}
\put(14, 38){\small{$\gamma = 0$}}
\end{overpic}
\hskip 0.1cm
\begin{overpic}[abs,height=0.25\textwidth,
unit=1mm,
]
{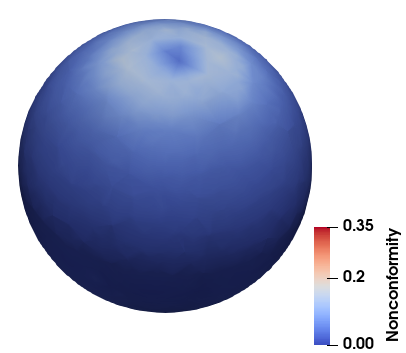}
\put(14, 38){\small{$\gamma= 1e2$}}
\end{overpic}
\hskip 0.1cm
\begin{overpic}[abs,height=0.25\textwidth,
unit=1mm,
]
{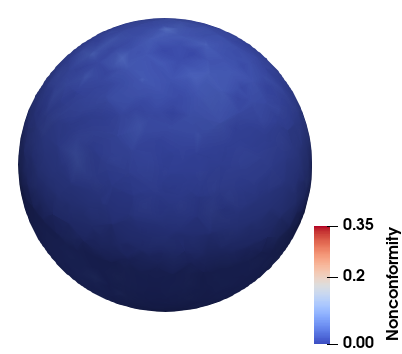}
\put(14, 38){\small{$\gamma = 1e4$}}
\end{overpic}
}
\end{center}

\caption{\small{Relaxation of the flat-degenerate Q-tensor field \eqref{tangentQ}, with director $\be_y$ pointing upwards, for different values of the parameter $\gamma$ in \eqref{e_pen} that penalizes non-conformity; $\gamma=0$ corresponds to Figure \ref{nonconformity}. All snapshots are taken for the time $t=0.023$, which is far from the final steady state. Small to moderate values of $\gamma$ give rise to intermediate non-conforming Q-tensor fields. Top: maximal eigenvalue and corresponding eigenvector. Bottom:  non-conformity parameter $r_\Gamma[\bQ]$ defined in \eqref{nonconf_param}. 
\VY{Note that, for $\gamma=1e4$, at each point of the sphere one of the eigenvectors is almost exactly normal. What is shown on the Figure is the eigenvector with the largest eigenvalue so a discontinuity may appear where two eigenvalues are equal and are largest.  }}  }
\label{conformity}
\end{figure}

We now repeat the preceding simulation of the augmented system \eqref{beris-edwards}--\eqref{beris-edwards_aug} but this time adding $E_{pen}[\bQ]+E_{norm}[\bQ]$ to the Landau--de\,Gennes energy \eqref{eldg}.  We choose
\[
\gamma=0, \quad \gamma=100, \quad \gamma=10000
\]
in \eqref{e_pen}, and display on Figure \ref{conformity} (top) the maximal eigenvalue and corresponding eigenvector of $\bQ$ at the fixed time $t=0.023$ far from equilibrium. We also report the non-conformity parameter $r_\Gamma[\bQ]$ defined in \eqref{nonconf_param} on Figure \ref{conformity} (bottom). The parameter $\gamma=0$ corresponds to the
simulations in Figure~\ref{nonconformity}.
As expected, large values of  $\gamma$ promote conformity of the Q-tensor for all times, while for small to moderate values of $\gamma$  intermediate states are non-conforming.

\subsubsection{Enforcing conforming and flat-degenerate Q-tensor dynamics}
Inspired by dynamic simulations of a conforming and flat-degenerate Q-tensor on a unit sphere from \cite{nestler2021active}, we explore the predictions of our model and numerical approach in the same
context. In fact, we show that enforcing the Q-tensor dynamics to be conforming and flat-degenerate in the normal direction via \eqref{E_norm} and \eqref{e_pen} leads to the so-called \textit{tetrahedral} configuration. This minimizing equilibrium configuration consists of four $+1/2$-defects located at the vertices of a regular tetrahedron inscribed in the unit sphere, as depicted in Figure~\ref{test-comp}.

\begin{figure}[ht!]
\vskip 1cm
\begin{center}
{
 \quad
\begin{overpic}[abs,height=0.25\textwidth,
unit=1mm,
]
{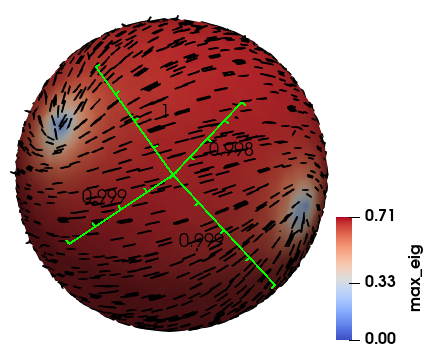}
\put(14, 38){\small{$t=0.01$}}
\end{overpic}
\hskip 0.1cm
\begin{overpic}[abs,height=0.25\textwidth,
unit=1mm,
]
{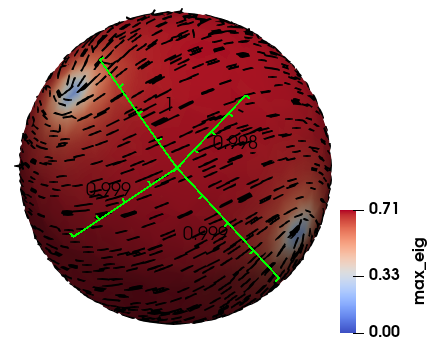}
\put(14, 38){\small{$t= 0.1$}}
\end{overpic}
\hskip 0.1cm
\begin{overpic}[abs,height=0.25\textwidth,
unit=1mm,
]
{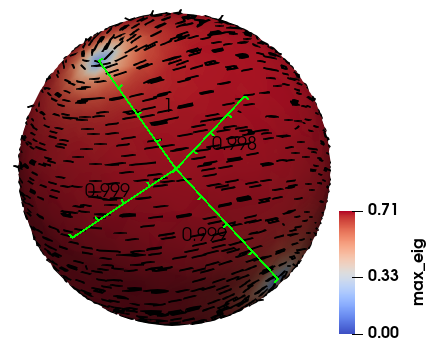}
\put(14, 38){\small{$t=2$}}
\end{overpic}
}
\end{center}

\caption{\small{Evolution towards a tetrahedral minimizer starting from a conforming,  flat-degenerate Q-tensor field \eqref{tangentQ2} with defects in the $XZ$-plane (horizontal). The Q-tensor stays conforming and flat-degenerate for all times via penalization \eqref{E_norm} and \eqref{e_pen}. The maximal eigenvalue and corresponding eigenvector are shown at times $t=0.01, 0.1, 2$, along with green segments connecting the center of the unit sphere with the vertices of a fixed regular tetrahedron. Two of the four $+1/2$-defects are visible and move towards their final positions at the vertices of the tetrahedron. The angles between the green segments are in the range of $109\pm3$ degrees which is close to the angle of the regular tetrahedron.  The vertical direction corresponds to the $Y$-axis.
}  }
\label{test-comp}
\end{figure}

We take the initial condition proposed in \cite{nestler2021active}. It consists of two tangent vector fields on the unit sphere $\Gamma$ given by
\begin{align}
    \bq^x=\frac{\bP(0,y,z)^T}{|\bP(0,y,z)^T|}\,,\quad \bq^z=\frac{\bP(x,y,0)^T}{|\bP(x,y,0)^T|},
\end{align}
which have $+1$-defects at $(\pm 1,0,0)$ and $(0,0,\pm1)$, respectively. Next a composite vector field $\bm^{xz}$ on $\Gamma$ is defined as follows,
\begin{align}
   \bm^{xz}=\bq^x,\,y\geq{}0\,,\qquad\qquad  \bm^{xz}=\bq^z,\,y<0\,,
\end{align}
 With the help of this composite vector field we construct a conforming, flat-degenerate Q-tensor,
\begin{align}\label{tangentQ2}
    \bQ_0 = s_+\left(\bm^{xz}\otimes\bm^{xz}-\frac12\bP\right) \, .
\end{align}

The advantage of this initial configuration is that $\bQ_0$ quickly transforms into a \textit{planar} configuration of four $+1/2$-defects located in the $XZ$-plane and resembling tennis ball patches  (Figure~\ref{test-comp} (left)). This planar configuration slowly evolves towards the minimizing tetrahedral  configuration depicted on Figure~\ref{test-comp} (right). 
In this simulation we set
\[
a=-10, \, b=1, \, c=10;
\quad
M=1, \, L=0.1, \, \rho=0.1, \, \mu=0.1;
\quad
\alpha=100, \, \gamma=10000, \delta=0 \, .
\]
The choice of penalization parameters $\alpha, \gamma, \delta$ ensures that the Q-tensor stays conforming and flat-degenerate in the normal direction for all times.

We observe from Figure~\ref{test-comp} that the final equilibrium configuration is still conforming and flat-degenerate  and it corresponds to the expected tetrahedral arrangement of four $+1/2$-defects that maximizes the distance between defects \cite{nestler2021active}. This shows the flexibility of our model to accommodate $Q$-tensor conformity via the Hess-Osipov energy described in Section~\ref{S:osipov-hess}.
However, this desirable consistency does not mean that our model always reduces to that in \cite{nestler2021active} in the limit $\alpha,\gamma\to\infty$ without further structural assumptions on the molecular field $\bH$. This crucial discovery in under current investigation.

\subsection{Homeotropic state: Instability and weak anchoring of Q-tensors}\label{sec_instability}
Although it should be clear that the general kinematics of Q-tensors introduced in Section \ref{kinematics} is inconsistent with the conformity assumption, we would like to demonstrate how this assumption affects the behavior of the Beris--Edwards model in a concrete example which is of standalone interest. Consider an initial condition on a unit sphere which is \textit{homeotropic} (i.e. conforming and uniaxial with respect to the normal $\bn$):
\begin{align*}
    \bQ=s_+\left(\bn\otimes\bn-\frac13\bI\right)
\end{align*}
with the constant order parameter $s_+$ from \eqref{s_min} that minimizes the double-well potential.  The configuration is rotationally symmetric and it is interesting to check if it is a stable one.
  
\begin{figure}[ht!]
\vskip 1cm
\begin{center}
{
 \quad
\begin{overpic}[abs,height=0.25\textwidth,
unit=1mm,
]
{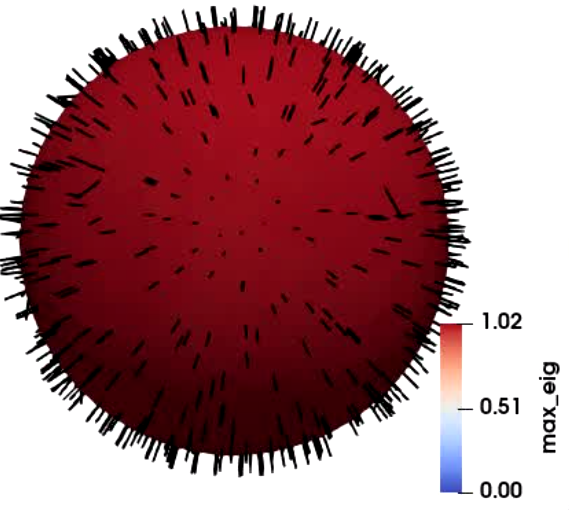}
\put(14, 38){\small{$t = 0$}}
\end{overpic}
\hskip 0.1cm
\begin{overpic}[abs,height=0.25\textwidth,
unit=1mm,
]
{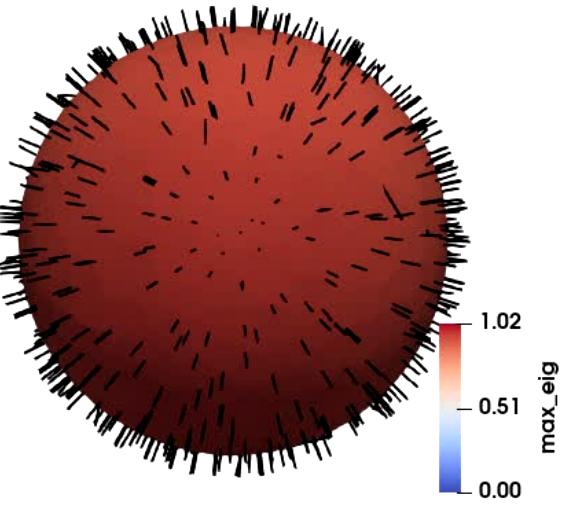}
\put(14, 38){\small{$t = 1$}}
\end{overpic}
\hskip 0.1cm
\begin{overpic}[abs,height=0.25\textwidth,
unit=1mm,
]
{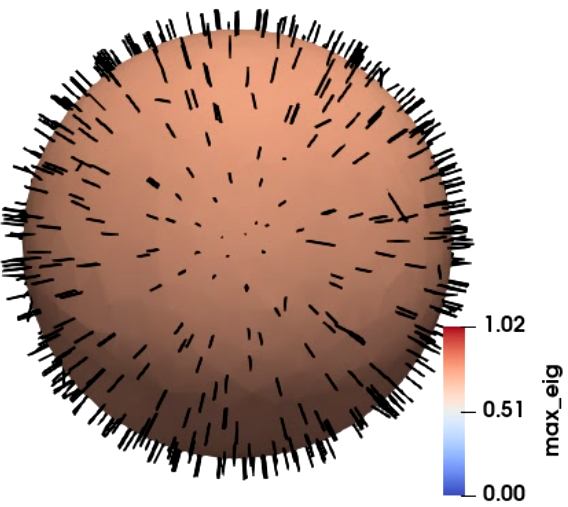}
\put(14, 38){\small{$t = 4$}}
\end{overpic}
}
\end{center}

\caption{\small{Stability of a hometropic Q-tensor anchored to the unit sphere by $\gamma=10$ in the Beris--Edwards model. The order parameter of the initial condition minimizes the double-well potential. The competition between the elastic energy and the double-well potential drives the configuration to a new homeotropic state with a constant order parameter $s$ close to $0.6$ which is shown at the most right snapshot. The Q-tensor stays  conforming and uniaxial for all times. The vertical direction corresponds to the $X$ axis.}}  
\label{stability}
\end{figure}

\begin{figure}[ht!]
\vskip 1cm
\begin{center}
{
 \quad
\begin{overpic}[abs,width=0.25\textwidth,
unit=1mm,
]
{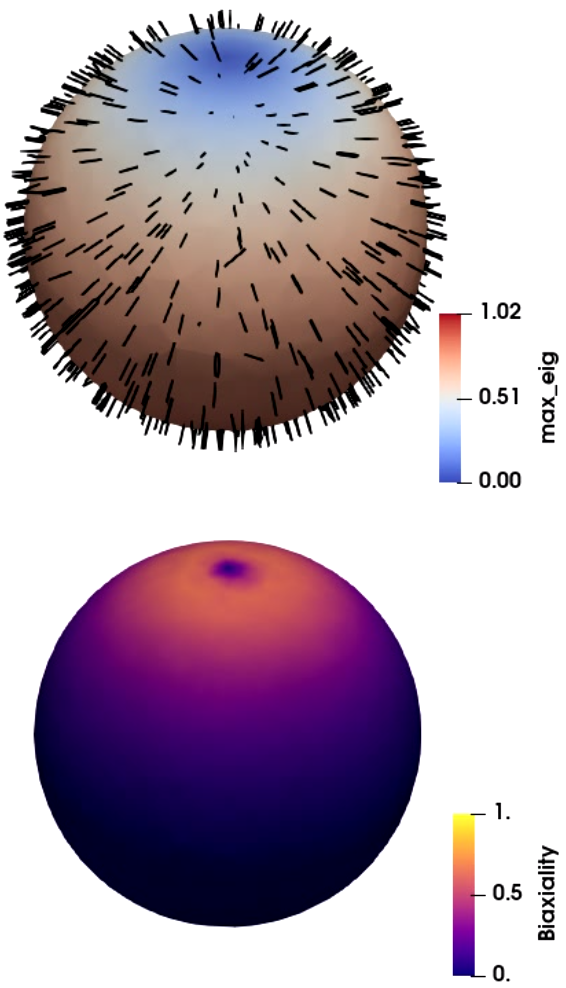}
\put(12,65){\small{$t = 5$}}
\end{overpic}
\hskip 0.1cm
\begin{overpic}[abs,width=0.25\textwidth,
unit=1mm,
]
{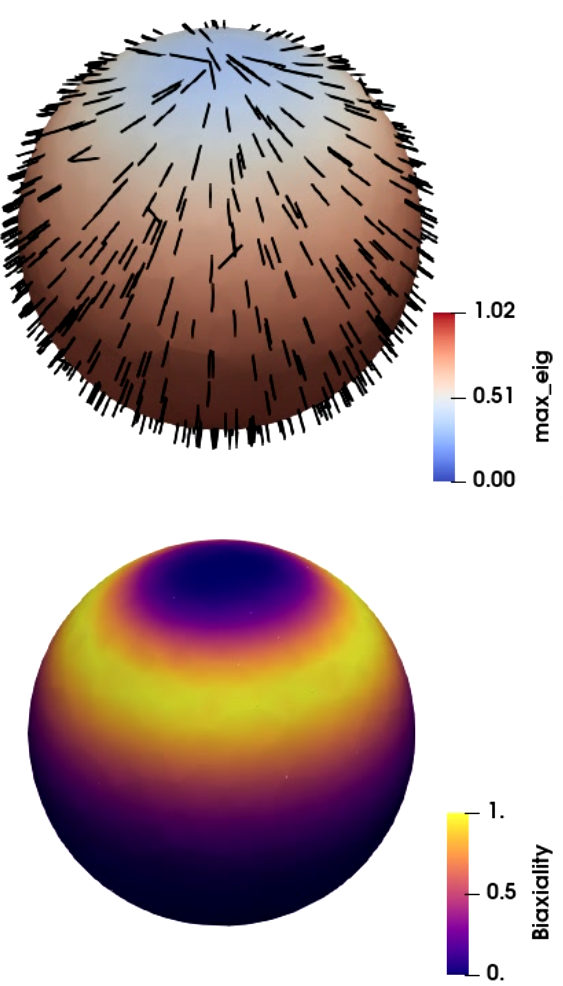}
\put(12,65){\small{$t = 10$}}
\end{overpic}
\hskip 0.1cm
\begin{overpic}[abs,width=0.25\textwidth,
unit=1mm,
]
{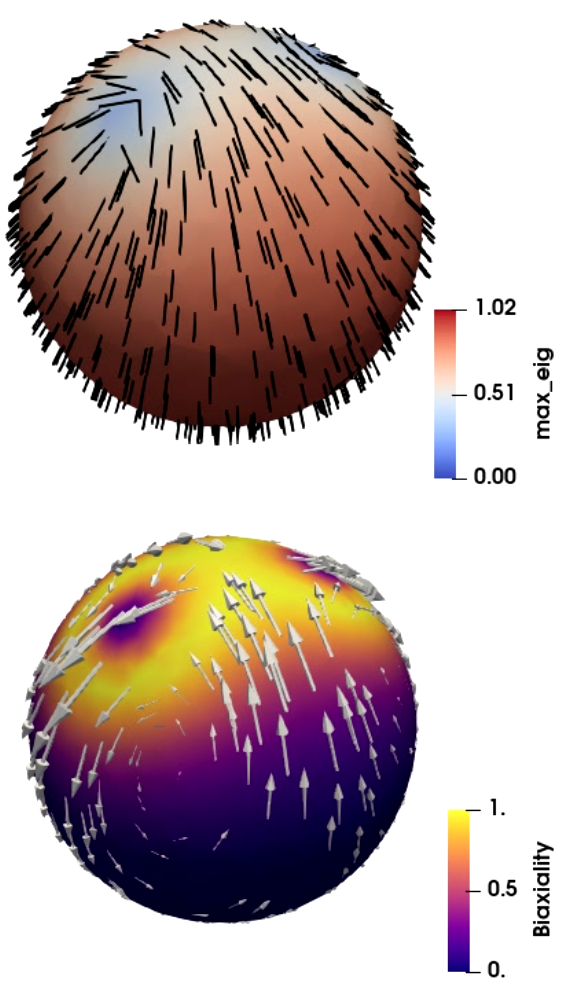}
\put(12,65){\small{$t = 12$}}
\end{overpic}
}
\end{center}
\vskip 1cm
\begin{center}
{
 \quad
\begin{overpic}[abs,width=0.25\textwidth,
unit=1mm,
]
{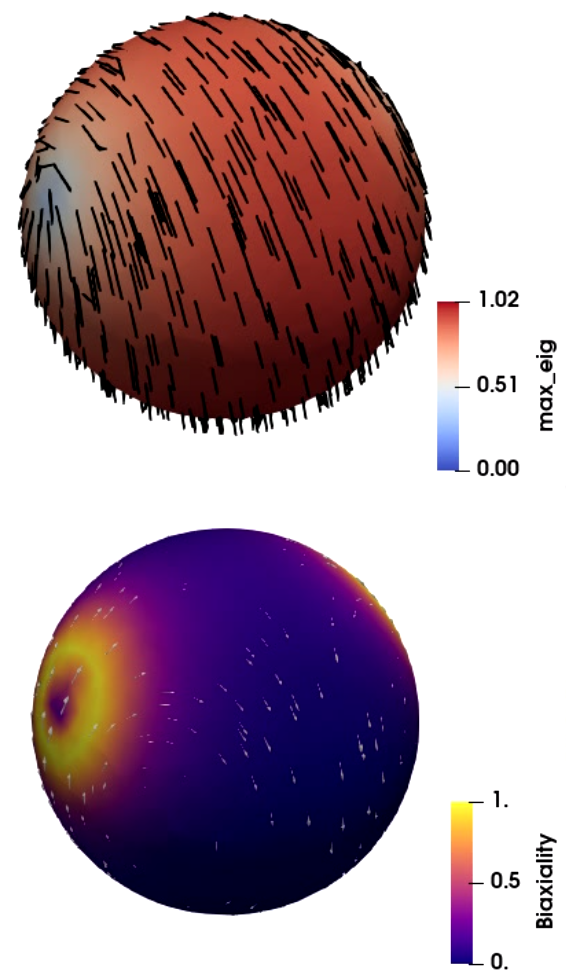}
\put(12,65){\small{$t = 16$}}
\end{overpic}
\hskip 0.1cm
\begin{overpic}[abs,width=0.25\textwidth,
unit=1mm,
]
{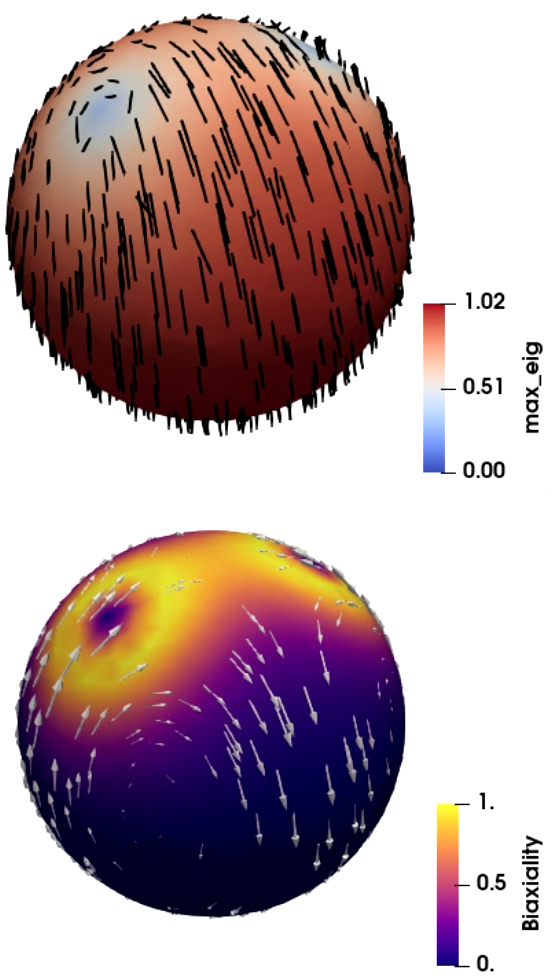}
\put(12,65){\small{$t = 21$}}
\end{overpic}
\hskip 0.1cm
\begin{overpic}[abs,width=0.25\textwidth,
unit=1mm,
]
{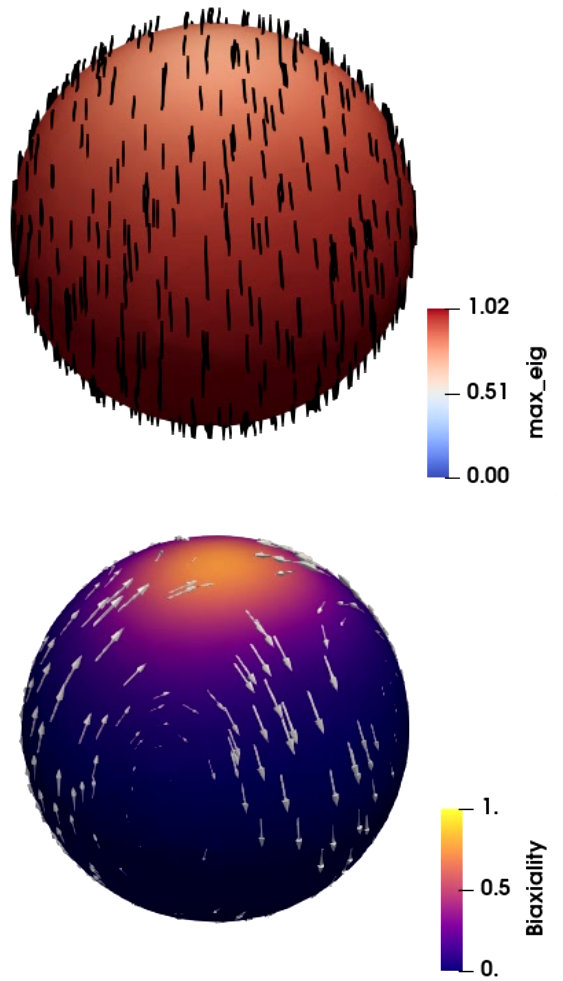}
\put(12,65){\small{$t = 23$}}
\end{overpic}
}
\end{center}
\caption{\small{Instability of the initially homeotropic, radially symmetric Q-tensor $\bQ_0$ in \eqref{perturb}, with director $\be_y=(0,1,0)$ pointing upwards, weakly anchored to the unit sphere $\Gamma$ via \eqref{e_pen} with parameter $\gamma=0.1$. The initial condition $\bQ_0$ is the same as in Figure \ref{stability} but $\gamma$ is much smaller. Since the anchoring is not sufficiently strong, the Q-tensor loses stability through non-conforming configurations, starting with the formation of a biaxial ring, followed by the splitting of the $+1$ defect in the north pole by two $+1/2$ defects that repel each other initially. They eventually coalesce to form the global minimizer - a uniaxial state, uniform in $\R^3$, with $s=s_+$ as the order parameter. Top: maximal eigenvalue and corresponding eigenvector. Bottom: biaxiality parameter $\beta[\bQ]$ and velocity field $\bu$ (scaled by 5).}  }
\label{instability}
\end{figure}

Assume the Q-tensor has to stay conforming for all times. Since the elastic energy of such radial configuration is nonzero, the \VY{elastic part of } Landau--de\,Gennes energy can be minimized by either evolving the order parameter $s$ from $s_+$ to a smaller value or even by generating a biaxial state which would brake the radial symmetry. We choose
\[
a=-1, \, b=1, \, c=1; \quad L=1, \, M=1, \, \rho=0.1, \, \mu=0.1; \quad \gamma = 10,
\]
in the double-well potential \eqref{dw},
the Beris-Edwards system \eqref{beris-edwards}, and the anchoring energy \eqref{e_pen} respectively. The effect of \eqref{e_pen} is to penalize the lack of conformity, whence the Q-tensor field evolves according to the first scenario, which slightly reduces the order parameter while keeping the Q-tensor radially symmetric and homeotropic provided $\gamma$ is large. In fact, for $\gamma\to\infty$ we expect a strong imposition of conformity. Figure \ref{stability} documents this claim for $\gamma=10$, and
reveals that the final radially symmetric, conforming, homeotropic Q-tensor configuration is stable for
moderate values of $\gamma$.

\begin{figure}[ht!]
\vskip 1cm
\begin{center}
{
 \quad
\begin{overpic}[abs,height=0.4\textwidth,
unit=1mm,
]
{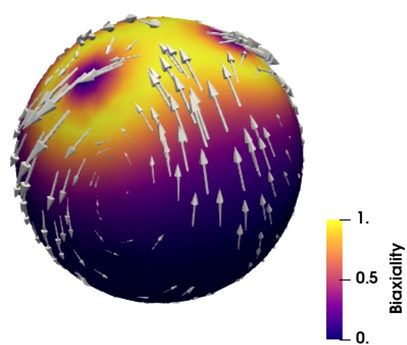}
\put(22, 58){\small{$t = 12$}}
\end{overpic}
\hskip 0.1cm
\begin{overpic}[abs,height=0.4\textwidth,
unit=1mm,
]
{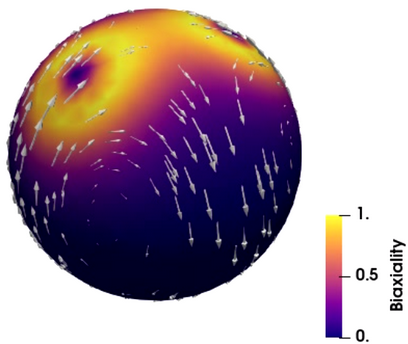}
\put(22, 58){\small{$t = 21$}}
\end{overpic}
}
\end{center}

\caption{\small{Velocity fields (scaled by 5) for specific times $t=12$ (left) and $t=21$ (right) of the same experiment as in Figure~\ref{instability}. Coloring corresponds to the biaxiality parameter \eqref{biaxiality_def}: two dark spots with low biaxiality, corresponding to Q-tensor defects, are surrounded by annuli with high biaxiality. The velocity fields for splitting (left) and merging (right) of defects are similar but opposite. This saddle-like pattern of velocity is consistent with the splitting and merging of defects for flat domains. }}  
\label{stability_zoom}
\end{figure}

To trigger the onset of instability we perturb the initial director as follows
\begin{align}\label{perturb}
    \widetilde{\bQ}_0=s_+\left(\widetilde{\bn}\otimes\widetilde{\bn}-\frac13\bI\right)\,,\quad \widetilde{\bn}=\frac{\bn+0.2(1,0,0)}{|\bn+0.2(1,0,0)|}\, ,
\end{align}
 and examine the full surface Beris--Edwards model \eqref{beris-edwards} augmented with \eqref{e_pen}
 via the small parameter value $\gamma=0.1$. The numerical results, displayed on Figure~\ref{instability}, reveal that the initial configuration $\bQ_0$ loses stability because the weak anchoring provided by $\gamma$ is not strong enough. The ensuing dynamics is quite rich: the instability manifest first via the formation of a biaxial ring with $+1$ defect in the north pole ($t=10$), which splits into two $+1/2$ defects ($t=12$).
 The nature of these defects is apparent in the display of the Q-tensor in the first row. These defects initially repel from each other ($t=16$), but later they coalesce ($t=21$ and $t=23$). Figure~\ref{stability_zoom} displays the velocity fields for these extreme stages of splitting and merging: they are similar but point in opposite directions and resemble saddle-like patterns already documented in flat cases.
 The final Q-tensor state is uniaxial uniform in $\R^3$, and close to the global minimizer $\bQ=s_+\left(\be_y\otimes\be_y-\frac13\bI\right)$ of the double-well potential with $\be_y=(0,1,0)$ pointing vertically in Figure~\ref{instability} . The final orientation is affected by the perturbation \eqref{perturb} of the initial condition.
 
 This experiment indicates that the stability and evolution of simple Q-tensor configurations depend on the penalization parameter $\gamma$ which controls the anchoring energy $E_{pen}[\bQ]$. It is thus conceivable that the actual size of $\gamma$ coming from materials science applications might not be sufficiently large to enforce the conformity assumption.

\section{Conclusions}

This paper derives and explores a novel model of liquid crystal films with general orientational order.
For a given smooth, stationary and closed surface $\Gamma$, the main contributions are:

\begin{itemize}[leftmargin=*]
    \item {\it Non-conforming Q-tensors}: We develop a new notion of Q-tensor kinematics on surfaces, which hinges on Assumption \ref{assum_vel}. We introduce the surface corotational derivative of Q-tensors \eqref{corotation_defintiion2} to transport a generically oriented Q-tensor field. This allows for transport of Q-tensors such that the unit normal vector $\bn$ to $\Gamma$ is not an eigenvector (non-conformity). In this vein, Assumption \ref{assum_Q} dictates how the eigenframe of a conforming Q-tensor is transported.
    
    \item {\it Energy law}: We invoke the generalized Onsager principle to derive a model with an energy structure that mimics the Beris--Edwards model in $\R^3$. We impose Assumption \ref{assum_Onsager} to define the structure of the evolution laws. The derivation employs extrinsic calculus in $\R^3$, thereby avoiding surface parametrizations and making finite element discretizations in $\R^3$ readily available. The surface model contains three distinct forces: the Leslie force $\bLambda$ and
    Ericksen force $\bbf_E$, which already exist in flat domains, as well as the new star force $\bbf_*$ which is responsible for thermodynamics consistency for non-conforming Q-tensors.
    
    \item {\it Simulations}: We conduct a systematic computational study of the surface Beris--Edwards model to unravel the role of several forces and mechanisms. This includes 
    
    \begin{itemize}
        \item experiments with the ``dry" surface Landau--de\,Gennes (gradient flow dynamics without linear momentum) to examine the novel Q-tensor kinematics; 
    
       \item experiments that illustrate the role of the three forces $\bLambda, \bbf_E$ and $\bbf_*$ and their profiles for a Q-tensor configuration with a degree $+1$ defect; 
       
       \item experiments of the dynamics connecting two confoming states which undergo more energetically favorable non-conforming intermediate states;
       
       \item conforming dynamics of four $+\frac12$ defects enforced via penalization that lead to a regular tetrahedral structure consistent with \cite{nestler2021active}.
       
       \item  simulations of the instability of a radially symmetric, homeotropic Q-tensor on a unit sphere due to insufficient anchoring, which showcases the effect of non-conformity.
    \end{itemize}
    The relaxation of the conformity assumption, via a thermodynamically consistent model, and computational exploration of its consequences are the main novelties of our work.
\end{itemize}    

\section{Acknowledgements}
We would like to thank Qi Wang who shared his unpublished note \cite{unpublishedWang} on a surface model of active liquid crystals in which the machinery of the generalized Onsager principle was applied using the language of local coordinate systems. We are also grateful to Qi Wang for several discussions about the Beris-Edwards model on flat domains.

Lucas Bouck was supported by the NSF grant DGE-1840340. Ricardo H. Nochetto and Vladimir Yushutin were partially supported by NSF grant DMS-1908267.

\section{Data Availability}
Data will be made available on reasonable request.

\section{Conflict of interest}
The authors declare that they have no conflict of interests.

\appendix
\section{General notation}\label{apx:calc}
\VY{In this appendix preliminary definitions and notations are clearly presented for the ease of further reading.} We adopt the matrix notation where a vector $\bx$ is represented by the column of its components $\bx_j=(\bx)_j$ in the standard basis $\be_j$ of $\R^n$, $j\in\{1,...,n\}$. A linear operator $\bA$ is represented by the $n\times{}n$ matrix $\bA_{ij}=(\bA_j)_i$ where the vector $\bA_j=\bA\be_j$ is the image of $\be_j$ under $\bA$. The gradient of a scalar function $f$ is a column of partial derivatives, $(\nabla{}f)_j=\pa_j{}f$. The matrix of the gradient of a vector field $\bu$ consists of rows of transposed gradients of the field components, $(\nabla{}\bu)_{ij}=\pa_j{}\bu_i$.   We denote $\partial_k\bA$ the matrix (vector) of $k$-th partial derivatives of a matrix (vector) $\bA$ applied component-wise. 
  
  We will often use dyads. The dyadic or tensor product of two vectors, $\bu\otimes\bv$, is a linear operator with the matrix representation $\bu\bv^T$, while the inner product $\bu\cdot\bv$ denotes the scalar $\bu^T\bv$. A linear operator $\bA$ may be represented by dyads involving either its column-vectors, $\sum_{j=1}^n\bA_j\otimes\be_j$,  or its row-vectors, $\sum_{j=1}^n\be_j\otimes(\bA^T)_j$.  Higher order tensor products can be derived from the associativity of $\bu\otimes\bv\otimes\bq$, e.g. $(\bA\otimes\bu)\bv=(\bu\cdot\bv)\bA$ and $(\bu\otimes\bA)\bv=\bu\otimes(\bA\bv)$.
  
  The standard gradient operator in $\R^n$ for a scalar field $f$, a vector field $\bu$  and a matrix field $\bA$ is defined in the language of vector algebra as follows \cite{jankuhn2018incompressible}
\begin{equation}\label{nabla}
\begin{aligned}
    \nabla{}f&=\sum_{j=1}^n\be_j\partial_j f\,,\qquad\nabla{}\bu=\sum_{j=1}^n\partial_j\bu\otimes\be_j=\sum_{j=1}^n\be_j\otimes\nabla\bu_j\,,\quad
    \\
    \nabla\bA &= \sum_{j=1}^n\partial_j\bA\otimes\be_j=\sum_{j=1}^n\be_j\otimes\nabla(\bA^T)_j 
   \end{aligned}
   \end{equation}
   The associated  
   directional derivative in $\Rn$ along a vector $\bv$ is given by
   \begin{equation}\label{dir_der}
   \begin{aligned}
    (\nabla{}^T f)\bv &=\sum_{j=1}^n\bv_j\partial_j{}f\,,\qquad(\nabla\bu)\bv=\sum_{j=1}^n\bv_j\partial_j\bu=\sum_{j=1}^n (\nabla\bu_j\cdot\bv)\be_j{}\,,
    \\
         (\nabla\bA)\bv&=\sum_{j=1}^n\bv_j\partial_j\bA
    =\sum_{j=1}^n(\nabla{}\bA_j)\bv\otimes{}\be_j 
    \end{aligned}
    \end{equation}
  \VY{Here and later $\nabla^T \! f$ is a short notation for $(\nabla f)^T$.}  We define the pointwise inner product of two matrices $\bA$ and $\bC$ as well as their gradients 
    \begin{align}\label{contractions}
    \bA:\bC=\sum_{j=1}^n\bA_j\cdot\bC_j=\sum_{i,j=1}^n\bA_{ij}\bC_{ij}\,,\quad
    \nabla\bA \tripledot \nabla{}\bC=\sum_{j=1}^n\partial_j\bA:\partial_j\bC=\sum_{j=1}^n\nabla{}\bA_j:\nabla{}\bC_j
\end{align}
The inner product of tensor fields on  $\Omega\subset\Rn$ is defined as follows:
 \begin{align*}
    (f,g)_\Omega=\int_\Omega fg\,,\qquad{}  (\bu,\bv)_\Omega=\int_\Omega \bu\cdot\bv\,,\qquad{} (\bA,\bC)_\Omega=\int_\Omega \bA:\bC
 \end{align*}
The divergence operator is given by
\begin{align}\label{bulk_div}
\div\bu&=\tr(\nabla\bu)\,,\quad    \div\bA=\sum_{j=1}^n\be_j\,\div (\bA^T)_j
               \,,\quad
               \div(\nabla\bA)= \sum_{j=1}^n\be_j\otimes\div\nabla(\bA^T)_j
\end{align}
and we want to stress that the vector divergence is applied to rows of a matrix, and the divergence of the gradient of a matrix is defined accordingly. 


The following proposition summarizes some straightforward useful identities which are consistent with the adopted notation.

\begin{proposition}[product rules]\label{appendix}
For any scalar field $f$, vector fields $\bu,\bv, \bq$ and matrix field $\bA$, we have
   \begin{align*}
 \nabla(f\bu)&=f\nabla\bu+\bu\otimes\nabla{}f\,,\quad &
 \nabla(\bu\cdot\bv)&=(\nabla^T\bu)\bv+(\nabla{}^T\bv)\bu,
 \\
 \nabla(f\bA)&=f\nabla\bA+\bA\otimes\nabla{}f
 \,,\quad &
 \nabla(\bA\bv)&=\sum_{i=1}^n\bv_i\nabla\bA_i+\bA\nabla{}\bv,
 \\
  \nabla(\bu\otimes\bq)\bv&=(\nabla\bu)\bv\otimes\bq+\bu\otimes(\nabla\bq)\bv\,,\quad&
 \div{(f\bu)}&=f\div\bu+\bu\cdot\nabla{}f,
 \\
 \div(\bu\otimes\bv)&=(\div\bv)\bu+(\nabla\bu)\bv
 \,,\quad&
     \div(f\bA)&=f\div\bA+\bA\nabla{}f,  
     \\\div(\bA^T\bu)&=\div\bA\cdot\bu+\bA:\nabla\bu, & \div(\bu\otimes\bA)&=\bu\otimes\div\bA+(\nabla\bu)\bA^T.
 \end{align*}
\end{proposition}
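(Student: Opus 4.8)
The plan is to verify each identity componentwise in the standard basis $\be_1,\dots,\be_n$ of $\R^n$, reducing every claim to the ordinary scalar Leibniz rule $\pa_j(gh)=(\pa_j g)h+g(\pa_j h)$ and then reassembling the outcome using the dyad and contraction conventions fixed in \eqref{nabla}, \eqref{dir_der}, \eqref{contractions} and \eqref{bulk_div}. Since the operators $\nabla$ and $\div$ here are the plain Cartesian ones, no surface geometry intervenes; the only thing to watch is which slot of a tensor product a given partial derivative fills and, for matrix arguments, the convention that $\div$ acts rowwise on the transposed rows $(\bA^T)_k$.

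First I would dispatch the ``scalar coefficient'' identities $\nabla(f\bu)$, $\nabla(f\bA)$, $\div(f\bu)$ and $\div(f\bA)$ at once. For instance $(\nabla(f\bu))_{ij}=\pa_j(f\bu_i)=f\,\pa_j\bu_i+\bu_i\,\pa_j f=(f\nabla\bu)_{ij}+(\bu\otimes\nabla f)_{ij}$, using $(\bu\otimes\nabla f)_{ij}=\bu_i(\nabla f)_j$; the matrix version $\nabla(f\bA)$ is the same statement carrying one extra free index, and the two divergence versions follow by taking the trace of $\nabla(f\bu)$, respectively by applying $\div$ rowwise to $f\bA$ and using $\div\big(f(\bA^T)_k\big)=f\,\div(\bA^T)_k+(\bA^T)_k\cdot\nabla f$, whose last terms assemble into $\bA\nabla f$.

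Next I would handle the ``product of two fields'' identities. For $\nabla(\bu\cdot\bv)$ the $j$-th component is $\pa_j\!\sum_i\bu_i\bv_i=\sum_i(\pa_j\bu_i)\bv_i+\sum_i\bu_i(\pa_j\bv_i)$, and $\sum_i(\pa_j\bu_i)\bv_i=\big((\nabla^T\bu)\bv\big)_j$ since $(\nabla^T\bu)_{ji}=(\nabla\bu)_{ij}=\pa_j\bu_i$, the second sum being symmetric. For $\nabla(\bA\bv)$ one expands $(\nabla(\bA\bv))_{ij}=\pa_j\!\sum_k\bA_{ik}\bv_k$ and recognizes the two resulting sums as $\big(\sum_k\bv_k\nabla\bA_k\big)_{ij}$ and $(\bA\nabla\bv)_{ij}$. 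The dyadic rule comes from the directional form \eqref{dir_der}: $\big(\nabla(\bu\otimes\bq)\big)\bv=\sum_j\bv_j\,\pa_j(\bu\otimes\bq)=\sum_j\bv_j\big((\pa_j\bu)\otimes\bq+\bu\otimes(\pa_j\bq)\big)$, and $\sum_j\bv_j\pa_j\bu=(\nabla\bu)\bv$ (likewise for $\bq$). Finally $\div(\bu\otimes\bv)=(\div\bv)\bu+(\nabla\bu)\bv$ is the rowwise divergence of $\bu\bv^T$: its $i$-th row is $\bu_i\bv$, so $\div(\bu_i\bv)=\bu_i\,\div\bv+\nabla\bu_i\cdot\bv=\big((\div\bv)\bu+(\nabla\bu)\bv\big)_i$.

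The two remaining identities, $\div(\bA^T\bu)=\div\bA\cdot\bu+\bA:\nabla\bu$ and $\div(\bu\otimes\bA)=\bu\otimes\div\bA+(\nabla\bu)\bA^T$, are of the same type but demand the most careful index bookkeeping, and I expect this to be the only real difficulty. For the first, $(\bA^T\bu)_i=\sum_k\bA_{ki}\bu_k$ gives $\div(\bA^T\bu)=\sum_{i,k}\big((\pa_i\bA_{ki})\bu_k+\bA_{ki}\,\pa_i\bu_k\big)$; the first double sum is $\sum_k\big(\div(\bA^T)_k\big)\bu_k=\div\bA\cdot\bu$, while the second, after relabeling $i\leftrightarrow k$ and invoking $(\nabla\bu)_{ij}=\pa_j\bu_i$ together with the matrix contraction in \eqref{contractions}, equals $\bA:\nabla\bu$. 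For the second, one unwinds the rowwise-$\div$ convention for the third-order object $\bu\otimes\bA$ (with $\div$ taken in its last slot, consistently with $\div(\nabla\bA)$ in \eqref{bulk_div}) and again applies the scalar Leibniz rule entry by entry. In total the proof is a finite mechanical check; the difficulty is purely notational — keeping straight the slot ordering in the third-order objects $\nabla\bA$ and $\bu\otimes\bA$, the placement of transposes, and the rowwise action of $\div$ on matrices — rather than mathematical.
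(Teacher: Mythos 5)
Your componentwise verification is correct: every identity does reduce to the scalar Leibniz rule once the conventions $(\nabla\bu)_{ij}=\pa_j\bu_i$, the rowwise action of $\div$ on matrices, and the last-slot contraction of $\div$ on third-order objects (consistent with $\div(\nabla\bA)$ in \eqref{bulk_div}) are unwound, and your treatment of the two trickier cases $\div(\bA^T\bu)$ and $\div(\bu\otimes\bA)$ is accurate. The paper states this proposition without proof, as a collection of straightforward consequences of its notation, so your argument simply supplies the routine index check the paper implicitly relies on.
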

 \VY{Again, $\nabla^T \! \bu$ is a short notation for $(\nabla \bu)^T$.} The following proposition summarizes some integration by parts rules which are consistent with the adopted notation.
\begin{proposition}[integration by parts]
For any scalar field $f$, vector field $\bu$ and matrix field $\bA$ defined on a \VY{open} domain $\Omega\subset\Rn$ with boundary $\pa\Omega$ and outer unit normal $\bn$, we have
\begin{align*}
    (\div\bu,f)_\Omega&=(f\bu, \bn)_{\partial\Omega}-(\bu,\nabla{}f)_\Omega
  \,,\quad
    (\div\bA,\bu)_\Omega=(\bA^T\bu, \bn)_{\partial\Omega} - (\bA,\nabla\bu)_\Omega.
\end{align*}\label{bulk_part}
\end{proposition}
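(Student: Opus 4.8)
The final statement is Proposition \ref{bulk_part} (integration by parts), which asserts the two identities
\[
(\div\bu,f)_\Omega=(f\bu,\bn)_{\partial\Omega}-(\bu,\nabla f)_\Omega,
\qquad
(\div\bA,\bu)_\Omega=(\bA^T\bu,\bn)_{\partial\Omega}-(\bA,\nabla\bu)_\Omega.
\]

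\textbf{Plan.} The whole proof reduces to the classical divergence theorem in $\R^n$ applied to a single scalar-valued vector field constructed from the data, combined with the product rules of Proposition \ref{appendix}. First I would treat the scalar-times-vector identity. Apply the divergence theorem $\int_\Omega \div \bw = \int_{\partial\Omega} \bw\cdot\bn$ to the vector field $\bw = f\bu$, then substitute the product rule $\div(f\bu) = f\,\div\bu + \bu\cdot\nabla f$ from Proposition \ref{appendix}. Reorganizing the resulting identity $\int_{\partial\Omega} f\bu\cdot\bn = \int_\Omega f\,\div\bu + \int_\Omega \bu\cdot\nabla f$ and recalling the definitions of the $L^2$ inner products $(\cdot,\cdot)_\Omega$ and $(\cdot,\cdot)_{\partial\Omega}$ gives exactly the first claim.

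For the matrix identity I would apply the first (scalar) identity row by row. Fix a canonical basis vector $\be_i$ and let $\bq = \bA^T \bu$; then the $i$-th component involves the row vector $(\bA^T)_i = \bA_i$ (the $i$-th column of $\bA$, using the paper's convention), and one applies the scalar identity with the roles of the scalar and vector suitably chosen — concretely, apply the divergence theorem to $\bw = \bA^T\bu$ directly and use the product rule $\div(\bA^T\bu) = \div\bA\cdot\bu + \bA:\nabla\bu$, which is again listed in Proposition \ref{appendix}. This yields $\int_{\partial\Omega}(\bA^T\bu)\cdot\bn = \int_\Omega \div\bA\cdot\bu + \int_\Omega \bA:\nabla\bu$, which after rearrangement and identification of the inner products is the second claim. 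One should note the conventions: $\div\bA$ is the vector whose entries are the divergences of the rows of $\bA$, $\bA:\nabla\bu$ is the Frobenius contraction, and $\bA^T\bu$ is an ordinary matrix-vector product, so the boundary term $(\bA^T\bu,\bn)_{\partial\Omega}$ is an honest scalar integral over $\partial\Omega$.

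\textbf{Main obstacle.} There is essentially no analytic obstacle here — the only subtlety is bookkeeping with the paper's transpose/row-column conventions (the matrix $\div$ acts on rows, $(\nabla\bu)_{ij}=\partial_j\bu_i$, and dyads are $\bu\bv^T$), so that the product rule $\div(\bA^T\bu)=\div\bA\cdot\bu+\bA:\nabla\bu$ is applied with the indices matching the definitions in \eqref{bulk_div} and \eqref{contractions}. Smoothness is already assumed ("sufficiently smooth" throughout, and $\Omega$ is a domain with boundary $\partial\Omega$ and outer unit normal $\bn$), so the divergence theorem applies without further regularity discussion. Thus the proof is short: invoke the divergence theorem for $f\bu$ and for $\bA^T\bu$, substitute the relevant product rules from Proposition \ref{appendix}, and rearrange.
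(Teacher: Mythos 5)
Your proof is correct: the paper states Proposition \ref{bulk_part} without proof, treating it as a standard consequence of the divergence theorem, and your argument — apply $\int_\Omega \div\bw = \int_{\partial\Omega}\bw\cdot\bn$ to $\bw=f\bu$ and $\bw=\bA^T\bu$, then substitute the product rules $\div(f\bu)=f\div\bu+\bu\cdot\nabla f$ and $\div(\bA^T\bu)=\div\bA\cdot\bu+\bA:\nabla\bu$ from Proposition \ref{appendix} and rearrange — is exactly the intended justification. Your bookkeeping with the row-wise divergence convention $(\div\bA)_i=\sum_k\partial_k\bA_{ik}$ and $(\nabla\bu)_{ij}=\partial_j\bu_i$ is consistent with the paper's definitions \eqref{bulk_div} and \eqref{contractions}, so nothing is missing.
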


\section{Integration by parts and tangential decomposition of tensors on surfaces}\label{apx:by_parts}

\VY{In this appendix we present some results concerning the {external} and covariant tensor derivatives used in this paper. More specifically, we derive formulas that show the connection between the integration by parts on surfaces and tangential decomposition of tensors. Note that throughout the paper the integrals are taken component-wise, with respect to the ambient space $\mathbb{R}^3$.}
\begin{lemma}[Gauss-Weingarten] \label{GW} Given a vector field $\bu$ such that $\bu(\bx)\cdot\bn(\bx)=const$, for all $\bx\in\Omega_\delta$, the covariant and \VY{external} derivatives are related by the shape operator as follows:
\begin{align}\label{Gauss}
    \nabla_M\bu=\nablaG\bu-\bn\otimes\bB\bu\,,\quad\bx\in\Omega_\delta\,.
\end{align}
\end{lemma}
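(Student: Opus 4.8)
The plan is to establish the Gauss--Weingarten relation \eqref{Gauss} by a direct computation starting from the decomposition of $\bu$ into tangential and normal parts. Write $\bu = \bu_T + u_N \bn$ where $u_N = \bu\cdot\bn$ is constant on $\Omega_\delta$ by hypothesis, so $\nablaM u_N = 0$. Applying the product rule for the external derivative (Proposition~\ref{appendix}, together with the definition \eqref{nablam}) yields
\[
\nablaM \bu = \nablaM \bu_T + u_N \nablaM \bn + \bn \otimes \nablaM u_N = \nablaM \bu_T + u_N \bB,
\]
since $\bB = \nablaM \bn$ and $\nablaM u_N = 0$. The task is then to identify $\nablaM \bu_T + u_N \bB$ with $\nablaG \bu - \bn \otimes \bB\bu$.

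Next I would compute the right-hand side. Because $\nablaG \bu = \bP \nablaM \bu$ by \eqref{nablaG}, we have $\nablaG \bu = \bP(\nablaM \bu_T + u_N \bB)$. The difference between $\nablaM\bu$ and $\nablaG\bu$ is therefore $\bN \nablaM \bu = \bN(\nablaM\bu_T + u_N\bB) = \bN \nablaM \bu_T$, using that $\bN\bB = (\bn\otimes\bn)\bB = \bn\otimes(\bB^T\bn) = 0$ since $\bB$ is symmetric and $\bB\bn = (\nablaM\bn)\bn$; one checks $\bB\bn = 0$ because differentiating $\bn\cdot\bn = 1$ gives $\bB^T\bn = 0$, and symmetry of $\bB$ on level sets of $d$ finishes it. So $\nablaM\bu = \nablaG\bu + \bN\nablaM\bu_T$, and it remains to show $\bN\nablaM\bu_T = -\bn\otimes\bB\bu = -\bn\otimes\bB\bu_T$ (the last equality again by $\bB\bn=0$).

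The key identity is the directional-derivative computation $\bN(\nablaM\bu_T)\bv = -(\bn\otimes\bB\bu_T)\bv$ for all tangent $\bv$, equivalently $\bn\cdot(\nablaM\bu_T)\bv = -\bB\bu_T\cdot\bv = -\bu_T\cdot\bB\bv$. This comes from differentiating the orthogonality relation $\bu_T\cdot\bn = 0$ in the direction $\bv$: using the product rule $\nablaM(\bu_T\cdot\bn) = (\nablaM^T\bu_T)\bn + (\nablaM^T\bn)\bu_T$, pairing with $\bv$, and using $\bB = \nablaM\bn$ symmetric, one gets $0 = \bn\cdot(\nablaM\bu_T)\bv + \bu_T\cdot\bB\bv$, which is exactly what is needed. (This is precisely the content already used in the derivation of \eqref{cons_mom}.) Assembling these pieces gives $\nablaM\bu = \nablaG\bu - \bn\otimes\bB\bu$ on $\Omega_\delta$.

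The main obstacle, such as it is, is purely bookkeeping: being careful that all the product rules for the external derivative are applied in the component-wise ambient sense, that $\bB\bn = 0$ and the symmetry of $\bB$ on each level set are invoked correctly (these hold because $\bn = \nabla d$ with $d\in C^2$, so $\bB = \nabla^2 d$ is symmetric), and that the hypothesis ``$u_N$ constant'' is used exactly where $\nablaM u_N = 0$ is needed. No deep idea is required beyond differentiating the constraint $\bu\cdot\bn = \mathrm{const}$ and splitting into tangential and normal parts.
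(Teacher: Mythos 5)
Your argument is correct and rests on the same key idea as the paper's proof: differentiate the constancy of $\bu\cdot\bn$ (in your case the equivalent relation $\bu_T\cdot\bn=0$ after splitting off the constant $u_N$), use the symmetry of $\bB$ and $\bB\bn=0$ to get $(\nablaM^T\bu_T)\bn=-\bB\bu_T$, and identify the discrepancy $\nablaM\bu-\nablaG\bu=\bN\nablaM\bu$ with $-\bn\otimes\bB\bu$. The paper does exactly this, only without the decomposition $\bu=\bu_T+u_N\bn$ (it derives $(\nabla^T\bu)\bn+\bB\bu=0$, the identity \eqref{Bu}, directly from the hypothesis), so your extra bookkeeping is harmless but not a different method.
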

\begin{proof} For all $j\in[1,n]$ we compute
\begin{align*}
   0&=\pa_j(\bu\cdot\bn)=\pa_j\bu\cdot\bn+\bu\cdot\pa_j\bn=(\nabla\bu)_j\cdot\bn+\bu\cdot(\nabla\bn)_j\,,\quad\forall\bx\in\Omega_\delta
\end{align*}
which, due to the symmetry of $\bB=\nabla\bn=\bB^T$,  implies 
\begin{align}\label{Bu}
    (\nabla^T\bu)\bn+\bB\bu=0\,,\quad\bx\in\Omega_\delta
\end{align} 
From the definitions and the property $\bB=\bP\bB=\bB\bP$, we deduce
\begin{align*}
   \gradM\bu-\gradG\bu&=(\bI-\bP)\gradM\bu=\bN\nabla\bu\bP=\bn\bn^T\nabla\bu\bP=\bn\otimes (\bP(\nabla^T\bu)\bn)=-\bn\otimes \bP\bB\bu \, .
\end{align*}
This concludes the proof.
\end{proof}

The following corollary of Lemma \ref{GW} \RHN{is used} in the development of the surface model of liquid crystal flows.
\begin{corollary}[relation between spin tensors] For tangent vector fields, $u_N=0$, the covariant and \VY{external} spin tensors in \eqref{strain} and \eqref{strainM} are related through
    \begin{equation} \label{skewBu}
 \bW_M(\bu)=\bW_\Gamma(\bu)+\frac12  (\bB\bu\otimes\bn-\bn\otimes\bB\bu )\,.
 \end{equation}
\end{corollary}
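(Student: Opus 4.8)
The plan is to derive \eqref{skewBu} directly from the Gauss--Weingarten relation \eqref{Gauss} of Lemma \ref{GW} by taking symmetric and antisymmetric parts. Since the statement concerns tangent vector fields with $u_N = 0$, the hypothesis $\bu\cdot\bn = \mathrm{const}$ of Lemma \ref{GW} is satisfied (with the constant equal to zero), so \eqref{Gauss} reads $\nablaM\bu = \nablaG\bu - \bn\otimes\bB\bu$ on $\Omega_\delta$, in particular on $\Gamma$.

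First I would transpose \eqref{Gauss}: using $(\bn\otimes\bB\bu)^T = \bB\bu\otimes\bn$ and recalling that $\nablaM^T\bu$ and $\nablaG^T\bu$ are shorthand for the transposes $(\nablaM\bu)^T$ and $(\nablaG\bu)^T$, we get $\nablaM^T\bu = \nablaG^T\bu - \bB\bu\otimes\bn$. Then I would form the antisymmetric part by averaging \eqref{Gauss} with the negative of its transpose and dividing by two, i.e. compute $\tfrac12(\nablaM\bu - \nablaM^T\bu)$, which by definition \eqref{strainM} equals $\bW_M(\bu)$. Carrying out the subtraction term by term: the $\nablaG\bu$ and $-\nablaG^T\bu$ terms combine (after the factor $\tfrac12$) into $\bW_\Gamma(\bu)$ by \eqref{strain}, while the dyadic terms $-\bn\otimes\bB\bu$ and $+\bB\bu\otimes\bn$ combine into $\tfrac12(\bB\bu\otimes\bn - \bn\otimes\bB\bu)$. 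This yields exactly \eqref{skewBu}.

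There is essentially no obstacle here — this is a short bookkeeping computation — but the one point requiring a small remark is that Lemma \ref{GW} is stated on the tubular neighborhood $\Omega_\delta$ for a vector field $\bu$ with $\bu\cdot\bn$ constant throughout $\Omega_\delta$, whereas the corollary is about a velocity field $\bu$ defined on $\Gamma$ with $u_N = 0$; one applies the lemma to the normal extension $\bu^e$ (which is tangent to all level sets of $d$ and hence satisfies $\bu^e\cdot\bn \equiv 0$ on $\Omega_\delta$), and then restricts the resulting identity to $\Gamma$, where $\nablaM$ and $\nablaG$ depend only on the surface values by the discussion following \eqref{nablaM-normalextension} and \eqref{nablaG}. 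So the plan is: invoke \eqref{Gauss} for $\bu$ tangent, transpose it, and take the skew-symmetric part, reading off \eqref{skewBu}.
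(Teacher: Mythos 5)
Your proof is correct and follows exactly the route the paper intends: the corollary is stated as an immediate consequence of Lemma \ref{GW}, obtained by transposing \eqref{Gauss} and taking the skew-symmetric part, which is precisely your computation. Your extra remark about applying the lemma to the normal extension $\bu^e$ (which satisfies $\bu^e\cdot\bn\equiv 0$ on $\Omega_\delta$ since $\bn$ is constant along normal lines) and restricting to $\Gamma$ is a valid and careful handling of a point the paper leaves implicit.
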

\subsection{Vector fields }
We decompose a vector field $\bu$ on $\Omega_\delta$ into the {\it tangent} component $\bu_T$ and the {\it normal} component $u_N\bn$ as follows:
\begin{align}\label{u_decomp}
    \bu=\bu_T+u_N\bn\,,\quad\bu_T=\bP\bu\,,\quad{}u_N\bn=\bN\bu\,.
\end{align}
\begin{lemma}[covariant divergence]\label{vector_div}
For a vector field $\bu=\bu_T+u_N\bn$ on $\Omega_\delta$, we have
\begin{align*}
    \divG\bu
    &=\divG\bu_T+u_N\,\tr\bB \,.
\end{align*}
\end{lemma}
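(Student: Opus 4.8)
The statement relates the covariant divergence of a general vector field $\bu = \bu_T + u_N\bn$ to the covariant divergence of its tangential part plus a curvature term. The plan is to use the linearity of the trace together with the product rule for $\nabla_M$ applied to the dyadic-like expression $u_N\bn$, and then discard the normal contribution to the gradient when taking the trace against the tangential projector implicit in $\nabla_\Gamma$.

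First I would invoke the definition $\divG\bu = \tr(\nabla_\Gamma\bu)$ from \eqref{surf_div} together with the identity \eqref{divergence}, namely $\divG\bu = \divM\bu = \tr(\nabla\bu\,\bP)$, so that the claim becomes a statement about $\tr(\nabla_M\bu)$. Splitting $\bu = \bu_T + u_N\bn$ by linearity gives $\divM\bu = \divM\bu_T + \divM(u_N\bn)$, so the first term is already what we want (since $\divG\bu_T = \divM\bu_T$ by \eqref{divergence} again). It remains to show $\divM(u_N\bn) = u_N\,\tr\bB$.

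For the second term I would apply the product rule for the external derivative of a scalar times a vector, namely $\nabla_M(u_N\bn) = u_N\nabla_M\bn + \bn\otimes\nabla_M u_N = u_N\bB + \bn\otimes\nabla_\Gamma u_N$, using that $\bB = \nabla_M\bn$ (and $\nabla_M u_N = \nabla_\Gamma u_N$ is tangent). Taking the trace, $\tr(\bn\otimes\nabla_\Gamma u_N) = \bn\cdot\nabla_\Gamma u_N = 0$ because $\nabla_\Gamma u_N$ is tangent to $\Gamma$ while $\bn$ is normal; hence $\divM(u_N\bn) = u_N\tr\bB$. Combining the two pieces yields $\divG\bu = \divG\bu_T + u_N\tr\bB$ as asserted.

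The computation is entirely routine once the correct product rule and the orthogonality $\bn\perp\nabla_\Gamma u_N$ are in hand; the only mild subtlety — which I would state carefully — is that one must work with the \emph{external} derivative $\nabla_M$ (not the covariant one) when differentiating $u_N\bn$, because $\nabla_\Gamma\bn = \bB$ as a tangent operator would not immediately deliver the same bookkeeping, and then pass to the trace where the distinction between $\nabla_M$ and $\nabla_\Gamma$ disappears via \eqref{divergence}. No genuine obstacle is expected.
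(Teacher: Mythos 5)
Your proposal is correct and follows essentially the same route as the paper: split $\bu$ into tangential and normal parts, apply a product rule to the normal contribution, and use orthogonality (via the projector) to reduce its trace to $u_N\tr\bB$. The only cosmetic difference is that you differentiate the scalar-times-vector product $u_N\bn$ with $\nabla_M$, whereas the paper writes the normal part as $\bN\bu$ and uses the matrix-vector product rule together with $\bP\bN=0$; both computations are equivalent and your handling of $\nabla_M\bn=\bB$ and $\bn\cdot\nabla_\Gamma u_N=0$ is sound.
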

\begin{proof}
 Using the definition \eqref{surf_div}, Lemma \ref{appendix}, the cyclic property of traces and $\bP\bN=0$ we compute
\begin{align*}
\div_\Gamma\bu
    &=  \div_\Gamma(\bP\bu+\bN\bu)= \div_\Gamma(\bP\bu)+ \tr(\nabla_\Gamma(\bN\bu))= \div_\Gamma\bu_T+ \tr(\bP\nabla(\bN\bu)\bP)
    \\
&=\div_\Gamma\bu_T+\tr\bB(\bu\cdot\bn)
\end{align*}
where the last step is due to the following identity
\begin{align*}
\bP\nabla(\bN\bu)&=\sum_{i=1}^n\bu_i\bP\nabla\bN_i+\bP\bN\nabla{}\bu=\sum_{i=1}^n\bu_i\bP\nabla(\bn_i\bn)=\sum_{i=1}^n\bu_i\bP(\bn_i\nabla\bn+\bn\otimes\nabla\bn_i)
\\
&=\sum_{i=1}^n\bu_i\bP(\bn_i\nabla\bn)=(\bu\cdot\bn)\bB\,.
\end{align*}
This gives the assertion.
\end{proof}

\begin{lemma}[normal flux]\label{normal_flux}
For a normally extended vector field $\bu=\bu^e$, we have
  \begin{align*}
 \lim_{\delta\rightarrow0}&\frac{1}{2\delta}\left(\int_{\Gamma_\delta^+}\bu\cdot\bn-\int_{\Gamma_\delta^-}\bu\cdot\bn\right)=\int_\Gamma(\tr\bB)\bu\cdot\bn\,.
 \end{align*}
 \end{lemma}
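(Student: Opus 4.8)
The plan is to reduce both boundary integrals to integrals over the fixed surface $\Gamma$ by means of the normal parametrization of the tubular neighborhood $\Omega_\delta$. For $|s|<\delta$ introduce the normal map $\Phi_s:\Gamma\to\{d=s\}$, $\Phi_s(\bx)=\bx+s\bn(\bx)$, which is a diffeomorphism onto the level set $\{d=s\}$ (in particular $\Phi_{\pm\delta}:\Gamma\to\Gamma_\delta^\pm$); this is guaranteed for $\delta$ small since $d\in C^2(\Omega_\delta)$. Its tangential Jacobian is the classical area element for parallel surfaces,
\[
\mu_s(\bx)=\det\big(\bP(\bx)+s\bB(\bx)\big)\big|_{T_\bx\Gamma}=\prod_{i=1}^{n-1}\big(1+s\kappa_i(\bx)\big),
\]
where $\kappa_1,\dots,\kappa_{n-1}$ are the eigenvalues of the shape operator $\bB=\nabla\bn$ restricted to the tangent plane. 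The key observation about the normal extension is that the scalar $\bu\cdot\bn$ is itself constant along normals: indeed $\bn^e=\bn$ and $(\nabla\bu^e)\bn=0$ by \eqref{e-n}, so $(\bu\cdot\bn)\circ\Phi_s=(\bu\cdot\bn)\big|_\Gamma$ for every $s$.

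First I would change variables via $\bx\mapsto\Phi_s(\bx)$ to obtain $\int_{\{d=s\}}\bu\cdot\bn=\int_\Gamma(\bu\cdot\bn)\,\mu_s$, and hence
\[
\int_{\Gamma_\delta^+}\bu\cdot\bn-\int_{\Gamma_\delta^-}\bu\cdot\bn=\int_\Gamma(\bu\cdot\bn)\big(\mu_\delta-\mu_{-\delta}\big).
\]
Then I would Taylor-expand the polynomial $\mu_s=1+s\sum_i\kappa_i+O(s^2)$ and use $\sum_i\kappa_i=\tr\bB$ on $\Gamma$ (recall $\bB\bn=0$, so the full trace of $\bB$ equals the sum of principal curvatures). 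Since $\mu_\delta-\mu_{-\delta}$ is an odd function of $\delta$, the even-degree contributions cancel, leaving $\mu_\delta-\mu_{-\delta}=2\delta\,\tr\bB+O(\delta^3)$, with the remainder bounded uniformly on the compact surface $\Gamma$ in terms of $\sup_\Gamma|\kappa_i|$. Dividing by $2\delta$ and letting $\delta\to 0$ passes the limit under the integral and yields $\int_\Gamma(\tr\bB)\,\bu\cdot\bn$, which is the assertion.

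An alternative route, more in the spirit of the rest of this appendix, avoids the explicit Jacobian: apply the divergence theorem in $\R^n$ to $\bu=\bu^e$ on the shell $\Omega_\delta$, whose boundary is $\Gamma_\delta^+\cup\Gamma_\delta^-$ with outward normals $\pm\bn$, to get $\int_{\Gamma_\delta^+}\bu\cdot\bn-\int_{\Gamma_\delta^-}\bu\cdot\bn=\int_{\Omega_\delta}\div\bu^e$; then use the coarea formula (with $|\nabla d|=1$) and continuity to pass to $\frac{1}{2\delta}\int_{\Omega_\delta}\div\bu^e\to\int_\Gamma\div\bu^e\big|_\Gamma$; finally note $\div\bu^e\big|_\Gamma=\divM\bu=\divG\bu$ since $(\nabla\bu^e)\bn=0$, invoke Lemma \ref{vector_div} to write $\divG\bu=\divG\bu_T+u_N\tr\bB$, and use $\int_\Gamma\divG\bu_T=0$ on the closed surface $\Gamma$.

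The only genuinely delicate point in either approach is the bookkeeping for the tubular neighborhood: justifying that $\Phi_s$ is a diffeomorphism and the area-element formula for parallel surfaces (equivalently, the coarea setup), and checking that the error terms are controlled uniformly in $\bx$ so the limit may be taken inside the integral. Both are standard facts for $C^2$ hypersurfaces and require no new ideas; the rest is routine Taylor expansion.
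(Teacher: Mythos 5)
Your first argument is essentially the paper's proof: the paper likewise pulls the integrals over $\Gamma_\delta^{\pm}$ back to $\Gamma$ with the parallel-surface area element $\det(\bI+\delta\bB_\delta)$ and differentiates at $\delta=0$, the only cosmetic difference being that it treats a normally extended scalar first and then sums over the components $\bu^e_i\bn_i$, whereas you observe directly that $\bu\cdot\bn$ is constant along normals. Your alternative divergence-theorem route is also fine in principle, but note that within this paper the identity $\int_\Gamma\divG\bu_T=0$ comes from Lemma \ref{vector_parts}, which is itself proved using the present lemma, so you would have to justify that identity independently (e.g.\ by the classical divergence theorem on closed manifolds) to avoid circularity.
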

 \begin{proof}
 First consider a normally extended scalar $f$:
  \begin{align*}
     \lim_{\delta\rightarrow0}&\frac{1}{2\delta}\left(\int_{\Gamma_\delta^+}f^e-\int_{\Gamma_\delta^-}f^e\right)=\frac{d}{d\delta}\left(\int_{\Gamma_\delta}f^e\right)|_{\delta=0}
     =\frac{d}{d\delta}\left(\int_{\Gamma}\det(1+\delta\bB_\delta)f^e\right)|_{\delta=0}
     \\
     &=\left(\int_{\Gamma}\frac{d}{d\delta}\det(1+\delta\bB_\delta)f^e\right)|_{\delta=0}
     =\int_\Gamma(\tr\bB)f\,.
 \end{align*}
 The proof concludes by applying this formula to products $\bu^e_i\bn_i$ of components of a vector field $\bu^e$ and the normal $\bn$, and summation over $i$.
 \end{proof}
\begin{lemma}[covariant integration by parts]\label{vector_parts}
For a vector field $\bu$ and a scalar field $f$ on $\Omega_\delta$, the integration by parts over a closed surface $\Gamma$ reads
\begin{align*}
(\divG\bu,f)_\Gamma
    &= (\tr(\bB)f\bu,\bn)_\Gamma-(\bu,\nablaG{}f)_\Gamma
    \,.\end{align*}
\end{lemma}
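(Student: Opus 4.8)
The plan is to reduce the covariant integration-by-parts formula on the closed surface $\Gamma$ to a genuinely bulk integration-by-parts statement on the tubular neighborhood $\Omega_\delta$, where Proposition~\ref{bulk_part} applies directly, and then pass to the limit $\delta\to0$ using Lemma~\ref{normal_flux}. First I would reduce to the case of a normally extended vector field: since both sides of the asserted identity depend only on the values of $\bu$ and $f$ on $\Gamma$ (the left side involves $\divG\bu=\divM\bu$ which, by \eqref{nablaM-normalextension} and the remarks on external derivatives, is intrinsic to $\Gamma$, and $\nablaG f=\nablaM f$ is likewise intrinsic), it suffices to prove the formula after replacing $\bu$ by $\bu^e$ and $f$ by $f^e$. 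So assume from now on $\bu=\bu^e$, $f=f^e$ on $\Omega_\delta$.

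Next I would apply the product rule $\div(f\bu)=f\div\bu+\bu\cdot\nabla f$ from Proposition~\ref{appendix} on $\Omega_\delta$ and integrate over $\Omega_\delta$, whose boundary is $\Gamma_\delta^+\cup\Gamma_\delta^-$ with outward normals $\pm\bn$. Proposition~\ref{bulk_part} gives
\begin{align*}
\int_{\Omega_\delta}\big(f\div\bu+\bu\cdot\nabla f\big)
= \int_{\Gamma_\delta^+} f\,\bu\cdot\bn - \int_{\Gamma_\delta^-} f\,\bu\cdot\bn \, .
\end{align*}
Now divide by $2\delta$ and let $\delta\to0$. On the left, because $\bu$ and $f$ are normal extensions, the integrands restricted to each parallel surface $\Gamma_s=\{d=s\}$ converge to their values on $\Gamma$, and the coarea-type computation in the proof of Lemma~\ref{normal_flux} shows $\tfrac{1}{2\delta}\int_{\Omega_\delta}(\cdot)\to\int_\Gamma(\cdot)$; moreover $\div\bu^e=\divM\bu=\divG\bu$ and $\bu^e\cdot\nabla f^e=\bu^e\cdot\nabla_M f=\bu\cdot\nablaG f$ on $\Gamma$ by \eqref{nablaM-normalextension} (using that for a normal extension $\nabla$ and $\nablaM$ agree in every direction). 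Hence the left side tends to $(\divG\bu,f)_\Gamma+(\bu,\nablaG f)_\Gamma$. On the right, Lemma~\ref{normal_flux} applied to the normally extended vector field $f\bu$ (a product of normal extensions is a normal extension) gives the limit $\int_\Gamma(\tr\bB)\,f\,\bu\cdot\bn=(\tr(\bB)f\bu,\bn)_\Gamma$. Rearranging yields the claim.

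The main obstacle is the limit-passage bookkeeping: one must justify that $\tfrac{1}{2\delta}\int_{\Omega_\delta}g\to\int_\Gamma g$ for a normally extended (continuous) integrand $g$, which is exactly the differentiation-under-the-integral argument already carried out in Lemma~\ref{normal_flux} (writing $\int_{\Omega_\delta}g=\int_{-\delta}^{\delta}\!\!\int_{\Gamma_s}g\,ds$ and using $\det(1+s\bB)\to1$), and that the surface integrands on $\Gamma_\delta^\pm$ converge with their Jacobians to the corresponding integrals on $\Gamma$. An alternative, avoiding the limit entirely, is to prove the identity on each fixed parallel surface $\Gamma_s$ by the same bulk argument localized to a thin shell, or to invoke the general surface divergence theorem $\int_\Gamma \divG\bv=\int_\Gamma(\tr\bB)\,\bv\cdot\bn$ for arbitrary (not necessarily tangent) $\bv$ — which is itself the $f\equiv1$ case — and then set $\bv=f\bu$ and expand $\divG(f\bu)=f\divG\bu+\bu\cdot\nablaG f$ using Lemma~\ref{vector_div} together with the product rule; I would likely present this second, cleaner route if the divergence theorem for non-tangent fields has been established, and otherwise fall back on the $\delta\to0$ argument above.
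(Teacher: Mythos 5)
Your proposal is correct and follows essentially the same route as the paper: restrict to $\Gamma$ and extend normally, apply the bulk integration-by-parts formula (Proposition \ref{bulk_part}) on the tubular neighborhood $\Omega_\delta$, divide by $2\delta$, and pass to the limit using Lemma \ref{normal_flux} for the boundary terms together with $\div\bu^e=\divM\bu=\divG\bu$ and $\nabla f^e=\nablaM f=\nablaG f$. Your writeup merely spells out the limit-passage bookkeeping (the coarea/Jacobian argument and that $f^e\bu^e$ is itself a normal extension) that the paper leaves implicit.
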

\begin{proof}
Since  $\divG\bu$ depends only on the values of $\bu$ on $\Gamma$ we first restrict $\bu$ to $\Gamma$ and then extend it normally obtaining $\bu^e$. We use Lemmas \ref{bulk_div} and \ref{normal_flux}  and take the limit $\delta\rightarrow0$ in
\begin{align*}
\lim_{\delta\rightarrow0}\frac{1}{2\delta} (\div{}\bu^e, f^e)_{\Omega_\delta}=\lim_{\delta\rightarrow0}\frac{1}{2\delta}\left((f^e\bu^e,\bn)_{\Gamma_\delta^+}-(f^e\bu^e,\bn)_{\Gamma_\delta^-}\right)-\lim_{\delta\rightarrow0}\frac{1}{2\delta}(\bu^e,\nabla{}f^e)_{\Omega_\delta}
\end{align*}
to obtain
\begin{align*}
(\divG{}\bu, f)_\Gamma=-(\bu,\nablaG{}f)_\Gamma+(f(\tr\bB)\bu,\bn)_{\Gamma}
\end{align*}
because $\divG\bu=\divM\bu=\div\bu^e$ and $\nablaG f=\nablaM f=\nabla f^e$ .
\end{proof}
 In view of \eqref{nablaG} and \eqref{divergence}, Lemma \ref{vector_parts} extends to tangential derivatives.

\subsection{Matrix fields}
We introduce a tangential decomposition of matrices. For an arbitrary matrix $\bA$ we compute
\begin{align*}
    \bP\bA\bP=(\bI-\bN)\bA(\bI-\bN)=\bA-\bP\bA\bN -\bN\bA\bP-\bN\bA\bN
\end{align*}
which suggests the following tangential decomposition
\begin{align}\label{matrix_decomp}
    \bA=\bA_\Gamma+\bA_N+\bA_{N\Gamma{}}\,,\quad{}\bA_\Gamma=\bP\bA\bP\,,\quad{}\bA_N=\bN\bA\bN\,,\quad{}\bA_{N\Gamma}=\bN\bA\bP+\bP\bA\bN
\end{align}

\begin{lemma}[matrix decomposition]\label{matrix_div}
For a matrix field $\bA$ on $\Omega_\delta$ the vector field $\divG\bA$ has the following tangential and normal components:
\begin{align*}
    \bP\divG\bA&=\bP\divG\bA_\Gamma+\tr(\bB)\bP\bA\bn+\bB\bA^T\bn\,,\qquad\bN\divG\bA=\bN\divG\bA_\Gamma+\divG(\bA^T\bn)\bn .
\end{align*}

\end{lemma}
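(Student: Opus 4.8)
\textbf{Proof proposal for Lemma \ref{matrix_div} (matrix decomposition).}

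The plan is to compute $\divG\bA$ directly from the tangential decomposition $\bA=\bA_\Gamma+\bA_N+\bA_{N\Gamma}$ in \eqref{matrix_decomp}, and then to project onto the tangent and normal subspaces. By linearity of $\divG$ it suffices to handle the three pieces $\bA_\Gamma=\bP\bA\bP$, $\bA_N=\bN\bA\bN$, and $\bA_{N\Gamma}=\bP\bA\bN+\bN\bA\bP$ separately. Throughout I would work with a normal extension (as in Lemma \ref{vector_parts}, $\divG$ on $\Gamma$ depends only on surface values), so that $\div$ and $\divG$ agree on the extended fields and the product rules of Proposition \ref{appendix} apply to genuine Cartesian derivatives. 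The key algebraic facts I would invoke repeatedly are $\bB=\bP\bB=\bB\bP=\bB^T$ and $\bN=\bn\otimes\bn$, together with $\nabla_M\bn=\bB$ and $(\nabla^T\bu)\bn+\bB\bu=0$ from \eqref{Bu}.

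The core computation is for $\bA_\Gamma$. Write $\divG\bA_\Gamma$ row by row via \eqref{surf_div}, or more slickly use the identity $\div(\bu\otimes\bA^T)=\bu\otimes\div\bA+(\nabla\bu)\bA$-type product rules on the normal/tangential splitting. Specifically, I would start from $\bA=\bA_\Gamma+\bN\bA+\bP\bA\bN$ (grouping differently), apply $\divG$ to $\bN\bA=\bn\otimes(\bA^T\bn)$ using the product rule $\divG(\bu\otimes\bv)=(\divG\bv)\bu+(\nablaM\bu)\bv$ adapted componentwise — this produces the $(\tr\bB)$ term since $\divG\bn=\tr\bB$ — and handle $\bP\bA\bN$ similarly. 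The appearance of the $\bB\bA^T\bn$ term should come from differentiating the projector $\bP=\bI-\bn\otimes\bn$ inside $\bP\bA\bP$: when one row-differentiates $(\bP\bA\bP)^T_j$, terms like $(\partial_k\bn)(\ldots)$ generate a factor of $\bB$ contracted against the normal components of $\bA$. After collecting, project with $\bP$ and $\bN$: since the range of $\bB$ lies in the tangent plane, the $\bB\bA^T\bn$ and $(\tr\bB)\bP\bA\bn$ contributions are automatically tangential and land in the $\bP\divG\bA$ formula, while the leftover $\bn$-component $\divG(\bA^T\bn)$ and $\bN\divG\bA_\Gamma$ assemble the normal formula. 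A useful cross-check is to contract the claimed identities with $\bn$ and with an arbitrary tangent vector and verify consistency with Lemma \ref{vector_div} applied to the rows of $\bA$.

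The main obstacle I anticipate is purely bookkeeping: $\divG$ of a matrix is defined row-wise in the ambient basis, so the projectors $\bP,\bN$ do not commute with the row-extraction operation in an obvious way, and one must be careful that differentiating $\bP\bA\bP$ produces exactly $\bP\bA'\bP$ plus the two ``derivative of $\bP$'' corrections, not spurious terms. Keeping track of which slot each $\bn$ sits in (the one hit by $\divG$ versus the one left as a free index) is where sign and transpose errors creep in; this is precisely why the answer has $\bA^T\bn$ in one place and $\bA\bn$ in another. I would manage this by fixing the convention $(\divG\bA)_i=\divG(\bA^T)_i$ from \eqref{surf_div} at the outset and never deviating, and by verifying the flat case $\bB=0$ (where both formulas must reduce to $\divG\bA=\divG\bA_\Gamma$ plus the purely normal piece) as a sanity check before finalizing.
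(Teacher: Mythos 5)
Your plan is essentially the paper's proof: split $\bA$ into $\bA_\Gamma$ plus complementary pieces (the paper writes the complement as $\bA\bN+\bN\bA-\bN\bA\bN$, your grouping $\bN\bA+\bP\bA\bN$ is equivalent and slightly cleaner), leave $\divG\bA_\Gamma$ untouched, apply the row-wise product rules together with $\divG\bn=\tr\bB$, $\nablaM\bn=\bB$ and $\bn\cdot\nablaG f=0$, then project with $\bP$ and $\bN$ using $\bB=\bP\bB$. Carried out as stated, this yields exactly the asserted identities.

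Two provenance slips in your sketch should be fixed before writing it up. First, you have the roles of the two complementary pieces reversed: the piece $\bP\bA\bN=(\bP\bA\bn)\otimes\bn$ is the one that produces the term $\tr(\bB)\,\bP\bA\bn$ (via $\divG\bn=\tr\bB$, with no second term since normal derivatives of scalars vanish for $\nablaG$), while $\bN\bA=\bn\otimes(\bA^T\bn)$ produces \emph{both} $\divG(\bA^T\bn)\,\bn$ and $\bB\bA^T\bn$, the latter coming from the factor $\nablaM\bn=\bB$ in the product rule. Second, the claim that $\bB\bA^T\bn$ arises from differentiating the projector $\bP$ inside $\bP\bA\bP$ is misleading: the target identity keeps $\divG\bA_\Gamma$ as a single object on the right-hand side, so you must not expand $\divG(\bP\bA\bP)$ at all --- doing so only sends you on a circular detour in which the derivative-of-$\bP$ terms have to be reassembled back into $\divG\bA_\Gamma$. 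All curvature terms in the final formula come from the complementary pieces, exactly as in the paper's computation.
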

\begin{proof} In view of \eqref{convG} and \eqref{surf_div}, we deduce
  $\divG\bA=\divG(\bA_\Gamma+\bA\bN+\bN\bA-\bN\bA\bN)$ and treat each term separately. By definition \eqref{surf_div}, Lemma \ref{appendix} and the identity $\divG(f\bn)=f\divG\bn$ we obtain
\begin{align*}
      (\divG(\bA\bN))_j&=\divG (\bN\bA^T)_j
   = \divG \bN(\bA^T)_j
   \\&= \divG (\bn\cdot(\bA^T)_j)\bn=(\bn\cdot(\bA^T)_j)\divG\bn 
=\tr(\bB)(\bA\bn)_j
\end{align*}
and $(\divG(\bN\bA\bN))_j =\tr(\bB)(\bN\bA\bn)_j$. We use the identity $\divG(f\bu)=f\divG\bu+\bu\cdot\nablaG{}f$ in 
\begin{align*}
      (\divG(\bN\bA))_j&= \divG (\bA^T\bN)_j
   =\divG \bn_j(\bA^T\bn)=\bn_j \divG(\bA^T\bn)+\bA^T\bn\cdot\nablaG\bn_j,                   \\
    &=\divG(\bA^T\bn)\bn_j+ \bA^T\bn\cdot\bB_j
      =\divG(\bA^T\bn)\bn_j+(\bB\bA^T\bn)_j
   \end{align*}
   and put together all the components as follows
   \begin{align*}
     \divG\bA&=\divG\bA_\Gamma+\tr(\bB)\bA\bn+\divG(\bA^T\bn)\bn+\bB\bA^T\bn-\tr(\bB)\bN\bA\bn
     \\
     &=\divG\bA_\Gamma+\divG(\bA^T\bn)\bn+\tr(\bB)\bP\bA\bn+\bB\bA^T\bn\,.
\end{align*}
The claim follows due to $\bB=\bP\bB$.
\end{proof}

\begin{lemma}[covariant integration by parts]\label{matrix_parts}
For a vector $\bu$  and a matrix $\bA$ on $\Omega_\delta$, we have
\begin{align*}
    (\divG \bA,\bu)_\Gamma &=
    -(\bA,\nabla_M\bu)_\Gamma+ ((\tr\bB)\bA\bn,\bu)_\Gamma
   \end{align*}
 \end{lemma}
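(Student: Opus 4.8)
The plan is to reduce the matrix identity to the already-established vector statement in Lemma~\ref{vector_parts} by decomposing everything row by row. Recall from \eqref{surf_div} that $\divG\bA$ is computed by rows, so its $j$-th component is $\divG(\bA^T)_j$, where $(\bA^T)_j=\bA^T\be_j$ denotes the $j$-th row of $\bA$. Writing $u_j=\bu\cdot\be_j$ for the components of $\bu$, I would first record that
\begin{equation*}
(\divG\bA,\bu)_\Gamma=\int_\Gamma(\divG\bA)\cdot\bu=\sum_{j=1}^n\big(\divG(\bA^T)_j,\,u_j\big)_\Gamma .
\end{equation*}

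Next I would apply Lemma~\ref{vector_parts} (covariant integration by parts for a vector field and a scalar field) to each summand, with the vector field taken to be $(\bA^T)_j$ and the scalar field $u_j$:
\begin{equation*}
\big(\divG(\bA^T)_j,\,u_j\big)_\Gamma=\big((\tr\bB)\,u_j(\bA^T)_j,\,\bn\big)_\Gamma-\big((\bA^T)_j,\,\nablaG u_j\big)_\Gamma .
\end{equation*}
Summing over $j$ and using $\sum_j u_j(\bA^T)_j=\bA^T\bu$ together with $\bA^T\bu\cdot\bn=\bu\cdot\bA\bn$ collapses the first family of terms into $((\tr\bB)\bA\bn,\bu)_\Gamma$ (the scalar $\tr\bB$ commutes freely).

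For the remaining terms I would invoke the matrix conventions of \eqref{nabla}--\eqref{nablam}: since $\nabla_M\bu=\sum_j\be_j\otimes\nabla_M u_j$, the pointwise Frobenius contraction is $\bA:\nabla_M\bu=\sum_j(\bA^T)_j\cdot\nabla_M u_j$, and for scalar fields $\nabla_M u_j=\nablaG u_j$ by \eqref{nablaG}. Hence $\sum_j\big((\bA^T)_j,\nablaG u_j\big)_\Gamma=(\bA,\nabla_M\bu)_\Gamma$, and assembling the two computations yields the claim
\begin{equation*}
(\divG\bA,\bu)_\Gamma=-(\bA,\nabla_M\bu)_\Gamma+((\tr\bB)\bA\bn,\bu)_\Gamma .
\end{equation*}
The only point requiring care is the row/column bookkeeping — in particular checking that the row-wise divergence and the row-wise contraction genuinely match the definitions of $\divG\bA$ and of $\bA:\nabla_M\bu$ in the adopted matrix notation — but no analytic difficulty arises beyond what is already contained in Lemma~\ref{vector_parts}.
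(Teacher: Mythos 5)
Your proof is correct, but it takes a genuinely different route from the paper. The paper proves Lemma \ref{matrix_parts} directly by the tubular-neighborhood argument: it restricts $\bA$ and $\bu$ to $\Gamma$, extends them normally, applies the bulk integration by parts of Proposition \ref{bulk_part} on $\Omega_\delta$, and passes to the limit $\frac{1}{2\delta}(\cdot)$ as $\delta\to 0$, with the boundary contributions on $\Gamma_\delta^\pm$ producing the curvature term $((\tr\bB)\bA\bn,\bu)_\Gamma$ via Lemma \ref{normal_flux}; the observation $\nabla\bu^e=\nablaM\bu\neq\nablaG\bu$ then explains why the external gradient appears on the right-hand side. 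You instead reduce the matrix statement to the scalar/vector case of Lemma \ref{vector_parts} row by row, using the row-wise definition \eqref{surf_div} of $\divG\bA$, the identity $\bA:\nablaM\bu=\sum_j(\bA^T)_j\cdot\nablaM u_j$, and the fact that $\nablaG f=\nablaM f$ for scalars, so the external gradient emerges automatically when the rows are reassembled. Both arguments are sound; yours is shorter and avoids redoing the limiting argument (which is already encapsulated in Lemma \ref{vector_parts}), and it mirrors the componentwise strategy the paper itself uses later to deduce Corollary \ref{3tensors_parts} from Lemma \ref{matrix_parts}, whereas the paper's direct proof keeps the normal-extension mechanics explicit. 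The only hypotheses you rely on are that $\divG$, $\nablaG$ of scalars, and $\nablaM$ depend only on surface values, which the paper establishes, so your row/column bookkeeping closes the argument without any gap.
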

\begin{proof}
  Since  $\divG\bA$ depends only on the values of $\bA$ on $\Gamma$ we first restrict $\bA$ and $\bu$ to $\Gamma$ and then extend them normally obtaining $\bA^e$ and $\bu^e$. We use Proposition \ref{bulk_part} to obtain
\begin{align*}
    (\div\bA^e,\bu^e)_{\Omega_\delta}
    &=-(\bA^e,\nabla\bu^e)_{\Omega_\delta} + (\bA^T\bu^e,\bn)_{\Gamma^{+}_\delta}-(\bA^T\bu^e,\bn)_{\Gamma^{-}_\delta}
 \end{align*}
 We take the limits $\delta\rightarrow{}0$:
 \begin{align*}
 &\lim_{\delta\rightarrow0}\frac{1}{2\delta}\left(\div\bA^e, \bu^e\right)_{\Omega_\delta}=\left(\div\bA^e, \bu\right)_{\Gamma}
 \,,\qquad \lim_{\delta\rightarrow0}\frac{1}{2\delta}(\bA^e,\nabla\bu^e)_{\Omega_\delta}= (\bA,\nabla\bu^e)_{\Gamma}
     \\
     &\lim_{\delta\rightarrow0}\frac{1}{2\delta}\left( (\bA^T\bu^e,\bn)_{\Gamma^{+}_\delta}-(\bA^T\bu^e,\bn)_{\Gamma^{-}_\delta}\right)=((\tr\bB)\bA^T\bu,\bn)_{\Gamma}=((\tr\bB)\bA\bn,\bu)_{\Gamma}
      \end{align*}
      and conclude the proof by noticing  $\div\bA^e=\divM\bA=\divG\bA$ and $\nabla{}\bu^e=\nablaM\bu\neq\nablaG\bu$.
\end{proof}

\begin{remark}
We point that for both vectors and matrices the decompositions in Lemma \ref{vector_div} and \ref{matrix_div} contain terms that have counterparts in Lemma \ref{vector_parts} and \ref{matrix_parts}. This fact is used in the derivation of the model in Section \ref{Onsager}.
\end{remark}



\begin{corollary}[\VY{external} integration by parts]\
    \label{3tensors_parts}
For matrix fields $\bA,\bC$ on $\Omega_\delta$, we have
\begin{align*}
    (\divM \nablaM\bA,\bC)_\Gamma &=
    -(\nabla_M\bA,\nabla_M\bC)_\Gamma\,.
   \end{align*}
\end{corollary}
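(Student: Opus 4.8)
\textbf{Proof plan for Corollary~\ref{3tensors_parts}.} The statement asserts the external integration-by-parts identity
\[
(\divM \nablaM\bA,\bC)_\Gamma = -(\nablaM\bA,\nablaM\bC)_\Gamma
\]
for matrix fields $\bA,\bC$ on $\Omega_\delta$ and a closed surface $\Gamma$. The plan is to reduce this third-order tensor statement, entry by entry, to the matrix-field integration by parts already established in Lemma~\ref{matrix_parts} (covariant integration by parts), exploiting the fact that the boundary term there involves the mean curvature $\tr\bB$, which I will show cancels because $\nablaM\bC$ appears contracted rather than $\nablaG\bC$.

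\textbf{Step 1: Reduce to rows.} Recall from \eqref{nablam} that $\nablaM\bA = \sum_{j=1}^n \be_j \otimes \nablaM(\bA^T)_j$, so that $\nablaM\bA$ is determined by the external gradients of the rows $(\bA^T)_j$ of $\bA$, each of which is a vector field. By \eqref{contractions}, the contraction $\nablaM\bA \tripledot \nablaM\bC = \sum_{j=1}^n \nablaM(\bA^T)_j : \nablaM(\bC^T)_j$, and similarly $\divM\nablaM\bA = \sum_{j=1}^n \be_j \otimes \divM\nablaM(\bA^T)_j$ so that $(\divM\nablaM\bA) : \bC = \sum_{j=1}^n \divM\nablaM(\bA^T)_j \cdot (\bC^T)_j$. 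Hence it suffices to prove, for a single vector field $\bu$ (playing the role of a row of $\bA$) and a single vector field $\bw$ (playing the role of the matching row of $\bC$), the identity
\[
\big(\divM\nablaM\bu, \bw\big)_\Gamma = -\big(\nablaM\bu, \nablaM\bw\big)_\Gamma,
\]
where $\nablaM\bu$ is the $n\times n$ matrix field $(\nabla\bu)\bP$. Summing over $j$ then yields the claim.

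\textbf{Step 2: Apply Lemma~\ref{matrix_parts} with the matrix $\nablaM\bu$.} Set $\bA' := \nablaM\bu$, a matrix field on $\Omega_\delta$ (using its normal extension), and apply Lemma~\ref{matrix_parts} with this matrix and the vector field $\bw$:
\[
\big(\divG(\nablaM\bu), \bw\big)_\Gamma = -\big(\nablaM\bu, \nablaM\bw\big)_\Gamma + \big((\tr\bB)(\nablaM\bu)\bn, \bw\big)_\Gamma.
\]
By \eqref{divergence}, $\divG = \divM$ on matrices, so the left side is $(\divM\nablaM\bu, \bw)_\Gamma$. It remains to kill the boundary-type term $\big((\tr\bB)(\nablaM\bu)\bn, \bw\big)_\Gamma$. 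But $(\nablaM\bu)\bn = (\nabla\bu)\bP\bn = (\nabla\bu)\mathbf{0} = \mathbf{0}$, since $\bP\bn = \bzero$ by definition of the tangential projector. Thus this term vanishes identically, and we obtain exactly $(\divM\nablaM\bu,\bw)_\Gamma = -(\nablaM\bu,\nablaM\bw)_\Gamma$. Summing over the row index $j$ completes the proof.

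\textbf{Main obstacle.} There is no serious analytic obstacle: the one genuinely load-bearing observation is that the curvature boundary term from Lemma~\ref{matrix_parts} is annihilated because $(\nablaM\bu)\bn = \bzero$ — this is precisely why the external gradient (rather than the covariant one) produces a clean, curvature-free identity. The only place care is needed is bookkeeping: one must check that "differentiating componentwise and summing" genuinely commutes with the contraction $\tripledot$, i.e. that Lemma~\ref{matrix_parts} applied row-by-row to $\nablaM\bA$ reassembles into the stated third-order identity; this is routine given the definitions \eqref{nablam}, \eqref{contractions} and \eqref{surf_div}, but it is the step most likely to need explicit indices in a fully written proof.
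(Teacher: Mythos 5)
Your proof is correct and follows essentially the same route as the paper: reduce to rows via $\nablaM\bA=\sum_j\be_j\otimes\nablaM(\bA^T)_j$, apply Lemma~\ref{matrix_parts} to the matrix field $\nablaM(\bA^T)_j$ paired with $(\bC^T)_j$, and observe that the curvature term dies because $(\nablaM\bu)\bn=(\nabla\bu)\bP\bn=\bzero$. The paper's proof is exactly this argument written with explicit indices, so no changes are needed.
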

\begin{proof}
The assertion follows from Lemma \ref{matrix_parts} and $(\nablaM\bA^T)\bn=0$. In fact, we have
\begin{align*}
    &(\divM \nablaM\bA,\bC)_\Gamma = \Big(\sum_{j=1}^n\be_j\otimes\divM\nablaM(\bA^T)_j, \sum_{j=1}^n\be_j\otimes(\bC^T)_j \Big)_\Gamma
     =\sum_{j=1}^n(\divM\nablaM(\bA^T)_j, (\bC^T)_j)_\Gamma
     \\
    & =  -\sum_{j=1}^n(\nablaM(\bA^T)_j,\nabla_M(\bC^T)_j)_\Gamma+ ((\tr\bB)(\nablaM(\bA^T)_j)\bn,\bu)_\Gamma
   =-\sum_{j=1}^n(\nablaM(\bA^T)_j,\nabla_M(\bC^T)_j)_\Gamma
   \\
   & =- \Big(\sum_{j=1}^n\be_j\otimes\nablaM(\bA^T)_j,\sum_{j=1}^n\be_j\otimes\nablaM(\bC^T)_j\Big)_\Gamma =-(\nabla_M\bA,\nabla_M\bC)_\Gamma .
   \end{align*}
   This concludes the proof.
\end{proof}

\bibliographystyle{plain}
\bibliography{literatur0}
\end{document}